\newtheorem{theoremSeparate}{Theorem}
\newcommand{\problemtitle}[1]{\gdef\@problemtitle{#1}}
\newcommand{\probleminput}[1]{\gdef\@probleminput{#1}}
\newcommand{\problemquestion}[1]{\gdef\@problemquestion{#1}}
  \par\addvspace{.5\baselineskip}
  \par\addvspace{.5\baselineskip}
\newcommand{\pTriple}{\textup{TRIPLE}}
\newcommand{\pUnif}{\textup{UNIF}}
\newcommand{\pCycs}{\textup{CYCS}}
\newcommand{\pTori}{\textup{TORI}}
\newcommand{\pOtori}{\textup{OTORI}}
\newcommand{\pCtori}{\textup{CTORI}}
\newcommand{\pTtori}{\textup{TTORI}}
\newcommand{\pFlip}{\textup{FLIP}}
\newcommand{\pSw}{\textup{SW}}
\newcommand{\pSat}{\textup{SAT}}
\newcommand{\pCol}{\textup{COL}}
\newcommand{\pCold}{\textup{COLD}}
\newcommand{\pProd}{\textup{PROD}}
\newcommand{\pPow}{\textup{POW}}
\newcommand{\pGesqrt}{\textup{GESQRT}}
\newcommand{\pLe}{\textup{LE}}
\newcommand{\pResc}{\textup{RESC}}
\newcommand{\pRes}{\textup{RES}}
\newcommand{\pResThree}{\textup{RES3}}
\newcommand{\pEqres}{\textup{EQRES}}
\newcommand{\mysymbol}[1]{{\mbox{\raisebox{-0.25em}{\epsfysize=1.1em\epsfbox{#1}}}}}
\newcommand{\tileC}{\mysymbol{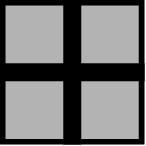}}
\newcommand{\tileH}{\mysymbol{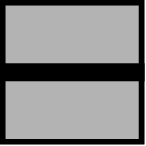}}
\newcommand{\tileV}{\mysymbol{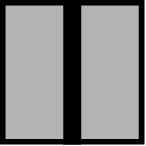}}
\newcommand{\tileZeroA}{\mysymbol{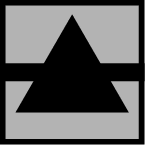}}
\newcommand{\tileOneA}{\mysymbol{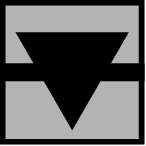}}
\newcommand{\wangTile}[6]{
        \node[minimum size=#5] (#6) {};
        \draw [black, fill=#1] (#6.north west)--(#6.north east)--(#6.center)--cycle;
        \draw [black, fill=#2] (#6.north east)--(#6.south east)--(#6.center)--cycle;
        \draw [black, fill=#3] (#6.south east)--(#6.south west)--(#6.center)--cycle;
        \draw [black, fill=#4] (#6.south west)--(#6.north west)--(#6.center)--cycle;
        \xglobal\colorlet{#6.topcolor}{#1}
        \xglobal\colorlet{#6.rightcolor}{#2}
        \xglobal\colorlet{#6.bottomcolor}{#3}
        \xglobal\colorlet{#6.leftcolor}{#4}
}
\newcommand{\hiddenWangTile}[6]{
        \node[draw=none, minimum size=#5] (#6) {};
        \xglobal\colorlet{#6.topcolor}{#1}
        \xglobal\colorlet{#6.rightcolor}{#2}
        \xglobal\colorlet{#6.bottomcolor}{#3}
        \xglobal\colorlet{#6.leftcolor}{#4}
}
\newcommand{\wangTileClone}[3]{
        \node[minimum size=#2] (#3) {};
        \draw [black, fill=#1.topcolor] (#3.north west)--(#3.north east)--(#3.center)--cycle;
        \draw [black, fill=#1.rightcolor] (#3.north east)--(#3.south east)--(#3.center)--cycle;
        \draw [black, fill=#1.bottomcolor] (#3.south east)--(#3.south west)--(#3.center)--cycle;
        \draw [black, fill=#1.leftcolor] (#3.south west)--(#3.north west)--(#3.center)--cycle;
}
\newcommand{\noWangTile}[2]{
        \node[minimum size=#1, draw=none] (#2) {};
        \xglobal\colorlet{#2.topcolor}{white}
        \xglobal\colorlet{#2.rightcolor}{white}
        \xglobal\colorlet{#2.bottomcolor}{white}
        \xglobal\colorlet{#2.leftcolor}{white}
}
\newcommand{\tileXA}[2]{\wangTile{cyan}{red}{cyan}{cyan}{#1}{#2}}
\newcommand{\tileXB}[2]{\wangTile{red}{green}{green}{red}{#1}{#2}}
\newcommand{\tileXC}[2]{\wangTile{cyan}{cyan}{cyan}{green}{#1}{#2}}
\newcommand{\tileXD}[2]{\wangTile{green}{cyan}{red}{red}{#1}{#2}}
\newcommand{\tileXE}[2]{\wangTile{red}{cyan}{green}{cyan}{#1}{#2}}
\newcommand{\hiddenTileXA}[2]{\hiddenWangTile{cyan}{red}{cyan}{cyan}{#1}{#2}}
\newcommand{\hiddenTileXB}[2]{\hiddenWangTile{red}{green}{green}{red}{#1}{#2}}
\newcommand{\hiddenTileXC}[2]{\hiddenWangTile{cyan}{cyan}{cyan}{green}{#1}{#2}}
\newcommand{\hiddenTileXD}[2]{\hiddenWangTile{green}{cyan}{red}{red}{#1}{#2}}
\newcommand{\hiddenTileXE}[2]{\hiddenWangTile{red}{cyan}{green}{cyan}{#1}{#2}}
\title{On the Complexity of the Conditional Independence Implication Problem With Bounded Cardinalities}
\titlerunning{On the Complexity of the CI Implication Problem With Bounded Cardinalities}
\author{Michał Makowski}
{Faculty of Mathematics, Informatics and Mechanics, University of Warsaw, Warsaw, Poland}
{mm406231@students.mimuw.edu.pl}
{}
{}
\authorrunning{M. Makowski}
\keywords{Conditional independence implication, exponential time, tiling problem}
\begin{document}
\maketitle

\begin{abstract}
We show that the conditional independence (CI) implication problem with bounded cardinalities, which asks whether a given CI implication holds for all discrete random variables with given cardinalities, is co-NEXPTIME-hard. The problem remains co-NEXPTIME-hard if all variables are binary.
The reduction goes from a variant of the tiling problem and is based on a prior construction used by Cheuk Ting Li to show the undecidability of a related problem where the cardinality of some variables remains unbounded. The CI implication problem with bounded cardinalities is known to be in EXPSPACE, as its negation can be stated as an existential first-order logic formula over the reals of size exponential with regard to the size of the input. 
\end{abstract}

\section{Introduction}
The implication problem for conditional independence statements is one of the major  decision problems  arising in multivariate statistical modeling and other applications~\cite{geiger1993logical}.   The problem asks whether a list  of conditional independence statements implies another such statement (see the exact formulation below).  The problem is a special case of 
the conditional entropic inequality problem, as the statement 
\emph{$X$ and $Y$  are independent, given $Z$} (sometimes denoted
$X \perp Y \mid Z $)
is equivalent to the equation $I (X; Y | Z) = 0$ in information theory.
Here the random variables in consideration are of the form $(X_{i_1}, \ldots , X_{i_{\ell }})$, abbreviated by $ X_Z$  with
$ Z = \{ i_1, \ldots , i_{\ell }\} $, selected from a fixed $ n$-tuple 
of variables $X_1, \ldots, X_n$ considered with a joint distribution. As with the general problem, this can be considered 
over continuous,
infinite discrete or finite discrete random variables.
Furthermore, the CI implication problem can be refined by imposing certain requirements on the sets $A, B, C$ in $X_A \perp X_B \mid X_C$, e.g., 
they must be pairwise disjoint for \emph{disjoint CI}, and for \emph{saturated CI} they must additionally satisfy $A \cup B \cup C = \{1, \ldots, n\}$.
We will focus on discrete random variables with a finite domain and without constraints on the sets, addressing disjoint
CI in Section~\ref{section:disjoint}.

If the domain size is bounded, this problem is decidable since the conditional independence can be expressed as an arithmetic formula in terms
of elementary events' probabilities. Considering all possible domain sizes yields a semi-algorithm for
finding a counter-example to the implication, showing that the unbounded problem is co-recursively enumerable (as noted by Khamis \emph{et al}~\cite{khamis2020decision}). The decidability
of the general CI implication problem was unknown for a long time, with only special cases resolved. Finally, Cheuk Ting Li published
two papers \cite{li2021undecidabilityitw,li2022undecidability} proving the problem to be undecidable. This was also shown independently by
K{\"u}hne and Yashfe~\cite{kuhne2022entropic}.  Still unknown is the complexity of the bounded problem --
the method of constructing an existential formula of Tarski's arithmetic yields an upper bound of EXPSPACE (cf.~\cite{canny1988pspace}),  while the  other published algorithms appear to be mostly
heuristics. Hannula \emph{et al}~\cite{hannula2019facets} conjectured  that the problem can be actually easier, especially in the case where all variables are binary, as the arithmetic formula in question is of a very special form.  

In this paper we show that the problem is in general co-NEXPTIME-hard, and the hardness result continues to hold if all variables are binary.  Our  reduction is an adaptation of the
construction presented by Li~\cite{li2021undecidabilityitw} to show that the problem is undecidable if the cardinalities of some variables are bounded.
While there is still a gap between the lower and upper bound, this shows that the complexity of the CI implication problem is harder than it might have been expected.
\section{Problem statement}

Denote by $\text{card}(X)$ the cardinality of a random variable $X$. Formally, the following problem will be considered:
\begin{problem}
  \problemtitle{Bounded CI Implication}
  \probleminput{Integers $m, n$, given in unary. A list of $m+1$ triples $(A_i, B_i, C_i)$ of subsets of $\{1, \ldots, n\}$. A list of $n$ integers $K_j$, given in binary.
  }
  \problemquestion{Determine whether the implication
      \[
          \bigwedge_{i \in \{1, \ldots, m\}} (I(X_{A_{i}}; X_{B_{i}} | X_{C_{i}}) = 0) \Rightarrow I(X_{A_{m+1}}; X_{B_{m+1}} | X_{C_{m+1}}) = 0
      \]
      holds for all jointly distributed random variables $(X_1, \ldots, X_n)$ with $\text{card}(X_j) \leq K_j$ for all $j \in \{1, \ldots, n\}.$
  }
\end{problem}

We define \textsc{Constant-bounded CI Implication} as a variant of the above problem in which all $K_i$ are fixed to be equal
to 2 rather than given as input. We show the following:
\begin{theoremSeparate} \label{thm:hard}
    \textsc{Bounded CI Implication} and \textsc{Constant-bounded CI Implication} are co-NEXPTIME-hard. This also holds in the
    disjoint CI case, i. e. when for each $i$ the sets $A_i, B_i, C_i$ are pairwise disjoint.
\end{theoremSeparate}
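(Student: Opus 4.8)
The plan is to reduce, in polynomial time, from the complement of a suitable NEXPTIME-complete tiling problem. Fix a tiling problem of the form: given a finite set $T$ of Wang tiles, a designated tile $t_0\in T$, and an integer $n$ in unary, decide whether $T$ admits a valid tiling of the $2^n\times 2^n$ discrete torus with $t_0$ placed at the origin. This (or a close variant matching the tiling problems set up in the preamble) is NEXPTIME-complete, so the language ``no valid tiling exists'' is co-NEXPTIME-complete. It therefore suffices to build, from such an instance and in time polynomial in $n+|T|$, an instance of \textsc{Constant-bounded CI Implication} in which all cardinalities are $2$ and every triple $(A_i,B_i,C_i)$ is pairwise disjoint, whose implication \emph{holds for all binary distributions} precisely when $T$ does \emph{not} tile the torus; since such an instance is also an instance of \textsc{Bounded CI Implication}, all three assertions of the theorem follow at once.

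Following Li's undecidability construction~\cite{li2021undecidabilityitw}, I would encode a candidate tiling as a joint distribution over $O(n+\log|T|)$ binary variables. The key blocks are: a ``position'' block $P=(P^x_1,\dots,P^x_n,P^y_1,\dots,P^y_n)$ carrying the binary expansions of the two coordinates of a cell; a ``tile at $P$'' block $T_P$ of $O(\log|T|)$ bits; and, for each cardinal direction $d$, a ``tile at the $d$-neighbour of $P$'' block $T_P^{(d)}$. A polynomial family of information equalities of the form $I(Y;Y'\mid X_C)=0$ then enforces: (i) $T_P$ is a function of $P$; (ii) $T_P^{(d)}$ is that same tile function evaluated at the cell obtained by stepping once in direction $d$ from $P$ — which needs a gadget realizing the relation ``$P'$ is $P$ incremented by one modulo $2^n$ in coordinate $d$'' using only polynomially many CI statements over the bit variables, the combinatorial core of the addressing scheme; (iii) the edge colours of $T_P$ and each $T_P^{(d)}$ are compatible; and (iv) the seed condition $t(\mathbf{0})=t_0$. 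Each functional dependence is written as $I(Y;Y'\mid X_C)=0$ with $Y'$ a fresh copy of $Y$ obtained from the standard copy gadget, so that all blocks occurring in a single triple are pairwise disjoint, as required for the disjoint-CI case; this is essentially the reformulation already available from Li's construction.

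These equalities form the left-hand side of the target implication. For the right-hand side I take a single statement $I(X_A;X_B\mid X_C)=0$ chosen so that its negation expresses exactly the non-degeneracy needed to read off a genuine tiling — intuitively, that the position variable is genuinely spread over all $2^{2n}$ cells rather than collapsing onto a degenerate sub-configuration. The two directions are then: if $T$ tiles the torus, take $P$ uniform over all cells, define $T_P$ from the tiling, and check that the resulting binary distribution satisfies every left-hand equality while violating the right-hand one, so the implication fails; conversely, any binary distribution satisfying all left-hand equalities and violating the right-hand one must, by propagating the non-degeneracy through the functional-dependence, increment, and colour-matching constraints, assign a colour-consistent tile to every cell of the torus — a valid tiling. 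Hence the implication holds for all binary distributions iff no tiling exists, which is the desired reduction.

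The main obstacle is this last equivalence. Conditional-independence constraints are equalities that impose no lower bound on entropies, so the \emph{only} source of strictness in the entire reduction is the single inequality hidden in the negated right-hand side; all of the ``the configuration is really present at each of exponentially many cells'' content must be forced through that one inequality. The delicate work — reused and adapted from Li's construction, with his unbounded-cardinality tape variable replaced by the $2n$-bit coordinate block — is to design the increment gadget and the copy and colour-matching gadgets so that no spurious solution survives, i.e.\ no real-valued distribution satisfying every equality together with the inequality that fails to arise from an actual tiling. Verifying that the constructed system has exactly the intended solution space over the reals, and that its size (in particular the counter gadget) stays polynomial in $n+|T|$, are the steps I expect to require the most care.
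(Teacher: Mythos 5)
Your proposed encoding departs significantly from both the paper's construction and from Li's, and it leaves its crucial step unfilled. You describe an ``addressing scheme'' with an explicit $2n$-bit position block $P$ and a gadget realizing the relation asserting that $P'$ equals $P$ incremented by one modulo $2^n$ in the relevant coordinate, and you claim this is ``essentially the reformulation already available from Li's construction.'' It is not: Li's construction, and the paper's adaptation of it, never manipulates an explicit coordinate variable and contains no increment gadget. Instead it encodes the torus \emph{implicitly} as the product of two collections of cycles. A predicate $\pCycs$ forces the characteristic bipartite graph of $(X_1,X_2)$, and likewise of $(Y_1,Y_2)$, to be a disjoint union of simple cycles, so the quadruples $(x_1,x_2,y_1,y_2)$ with positive probability form a torus and the neighbour relation arises for free from the graph structure. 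Tile labels are then attached to those implicit vertices through the switch and colouring predicates $\pSw$ and $\pCol'$, and adjacency constraints are enforced by counting predicates of the form $\pSat_{\ne 1/2}$, $\pSat_{\le 1/2}$, $\pSat_{\le 3/4}$ applied to faces of the torus; the torus size is capped simply by $\text{card}(X_i)\le m$, $\text{card}(Y_i)\le n$, which are later replaced by tuples of binary variables.

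The concrete gap is the increment gadget. Conditional-independence statements can assert that one variable is a function of another, via $H(P'\mid P)=0$, but they cannot specify \emph{which} function; in particular they cannot pin down that $P'$ is the successor of $P$ modulo $2^n$ on a $2n$-bit register. Everything that makes your reduction correct must funnel through that gadget, and you do not construct it; Li's technique was designed precisely to avoid this difficulty by replacing arithmetic on addresses with structural constraints on a bipartite graph. You also leave unaddressed how the colour-matching constraints remain polynomial in size (the paper explicitly replaces the single exponential-size predicate $\pCol$ in Li's version by the polynomial-size equivalent $\pCol'$), and how a single consequent CI statement detects that a tiling exists (the paper forces a designated variable $B$ to be $\text{Bern}(\tfrac{1}{2})$ whenever a valid labelled torus exists and takes $H(B)=0$ as the consequent). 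Finally, note that the paper first runs a purely combinatorial reduction from \textsc{Binary Bounded Tiling} to \textsc{Periodic Bounded Tiling}, so the CI machinery only ever has to express ``there is a periodic tiling of bounded size using a designated tile'' --- no seed position, no absolute coordinates, and no address arithmetic at all.
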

We focus on the first part of the theorem -- the proof of the second part differs little from that given by Li~\cite{li2021undecidabilityitw} and is given in Section~\ref{section:disjoint}.

In order to state the tiling-based problems utilized in the reduction, we introduce some definitions based on those in~\cite{lewis1998elements}.
We define a \emph{tiling system} as a triple ${\mathcal{D} = (D, H, V)}$, where $D$ is a finite set
of tiles and $H, V \subseteq D^2$ are the horizontal and vertical constraints, accordingly, which give the pairs of tiles
that may be neighbors.
This is a generalization of Wang tiles, where a set of colors $C$ is given and tiles (formally quadruples from the set $C^4$) are represented by squares with colored edges with the requirement that only edges of the same color may touch.
As stated in~\cite{boas1996tilings}, Wang tiles correspond exactly to those tiling systems for which the implication
${(a \, R \, b \land a \, R \,  c \land d \, R \,  b) \Rightarrow d \, R \, c}$ holds for all $a, b, c, d \in D$ and both $R \in \{H, V\}$.

We define a $k \times l$ \emph{tiling} by $\cal{D}$ as a function $f : \{0, \ldots, k-1\} \times \{0, \ldots, l-1\} \to D$ such that:
\begin{itemize}
    \item $(f(m, n), f(m+1, n)) \in H$ for all $m < k - 1, n < l$,
    \item $(f(m, n), f(m, n+1)) \in V$ for all $m < k, n < l - 1$.
\end{itemize}
A \emph{periodic tiling} is one that also has ${(f(k-1, n), f(0, n)) \in H}$ and ${(f(m, l-1), f(m, 0)) \in V}$ for all $m < k, n < l$.
For a (non-periodic) $k \times l$ tiling $f$, the \emph{starting tile} and \emph{final tile} are the values of $f(0, 0)$ and $f(k-1, l-1)$, respectively.

We will show a polynomial-time many-one reduction from the following problem, which is known to be NEXPTIME-complete \cite[exercise 7.2.2.]{lewis1998elements}:
\begin{problem}
  \problemtitle{Binary Bounded Tiling}
  \probleminput{A tiling system $\cal{D}$, a starting tile $d_0 \in D$, an integer $k$ given in binary.}
  \problemquestion{Determine whether there exists a $k \times k$ tiling $f$ by $\cal{D}$ such that $f(0, 0) = d_0$.}
\end{problem}
The reduction consists of two parts, the first being a purely tiling-based reduction from the above to the following intermediate problem:
\begin{problem}
  \problemtitle{Periodic Bounded Tiling}
  \probleminput{A tiling system $\mathcal{D}$, a designated tile $t$, integers $m, n$ given in binary.}
  \problemquestion{Determine whether there exists a periodic tiling by $\mathcal{D}$ of size at most $m \times n$ which
                uses tile $t$.}
\end{problem}

The second part is a reduction from \textsc{Periodic Bounded Tiling} to the complement of \textsc{Bounded CI Implication}, based on a construction by Li~\cite{li2021undecidabilityitw}.

\section{First part of the reduction}
We first show a polynomial-time many-one reduction from \textsc{Binary Bounded Tiling} to \textsc{Periodic Bounded Tiling}.
This means that given a tiling system $\mathcal{D}$, starting tile $d_0$ and integer $k$, we will construct in polynomial
time a tiling system $\mathcal{D}''$, designated tile $t$ and integers $m, n$
such that \textsc{Binary Bounded Tiling} gives a positive answer for input $(\mathcal{D}, d_0, k)$ iff \textsc{Periodic Bounded Tiling} 
gives a positive answer for input $(\mathcal{D}'', t, m, n)$. This consists of two steps:
\begin{enumerate}
    \item Modify $\mathcal{D}$ into system $\mathcal{D}'$ such that valid tilings by $\mathcal{D}$ of size $k \times k$
        correspond to valid tilings by $\mathcal{D}'$ with certain corner constraints.
    \item Modify $\mathcal{D}'$ into system $\mathcal{D}''$ and tile $t$ such that valid tilings by $\mathcal{D}'$ with the above corner constraints correspond
        to periodic tilings by $\mathcal{D}''$ of size $(k+1) \times (k+1)$ that use tile $t$. \label{corner-constraints}
\end{enumerate}
This is a fairly typical reduction between tilings, similar to problems considered for instance in~\cite{gottesman2009quantum}.

\subsection*{Limiting tiling size}
For the first step, we create a tiling system $\mathcal{C}$ (of polynomial size with regard to the length of $k$), along with starting tile $c_0$ and final tile $c_1$, implementing
a binary counter that counts down from an appropriately chosen $k' \leq k$ (close to $k$) and whose position shifts by 1 with each decrement. The tile $c_1$ occurs when
the counter reaches 0. Similar constructions have been shown~\cite{rothemund2000squares}, our example is given in Figure \ref{fig:counter}. Thus, any tiling by $\mathcal{C}$ with $c_0$ in the top-right corner and $c_1$ in the bottom-left must be of size exactly $k \times k$.

\begin{figure}[p]
    \begin{centering}
        \begin{tabular}{|c|c|c|c|c|c|c|c|c|c|}
        \hline
            &   &   &   &   &   & \cellcolor{lightgray} ${\ }_1$ & \cellcolor{lightgray} $1_0$ & \cellcolor{lightgray} $0_1$ & \cellcolor{lightgray} $1_{\ }$ \\ \hline
            &   &   &   &   & ${\ }_1$ & $1_0$ & $0_0$ & $0_{\ }$ &   \\ \hline
            &   &   &   & ${\ }_0$ & $0_1$ & $1^*_1$ & $1^*_{\ }$ &   &   \\ \hline
            &    &   & ${\ }_0$ & $0_1$ & $1_0$ & $0_{\ }$ &   &   &   \\ \hline
            &   & ${\ }_0$ & $0_0$ & $0_1$ & $1^*_{\ }$ &   &   &   &   \\ \hline
            & ${\ }_0$ & $0_0$ & $0_0$ & $0_{\ }$ &   &   &   &   &   \\ \hline
            $\star$ & $1^*_1$ & $1^*_1$ & $1^*_{\ }$ &   &   &   &   &   &   \\ \hline
        \end{tabular}
        \caption{
            A tiling implementation of a binary counter which shifts its position on every decrement, with the starting
            and final tile in the top-right and bottom-left corners respectively. This example counts down from 5 (101 in binary).
            Every tile except for the final $\star$ is of the form $a_b^c$, where $a$ is the bit value (possibly blank), $b$ is the value
            of the bit directly to the right (or blank if there is none), and $c$ is optionally $*$ if a borrow operation is required.
            The shaded tiles of the top row function in the same manner, but they are ``memorized'' within the tiling system
            such that the placement of the top-right tile forces the top row to write out the binary initial value.
            The tiles in the lower rows are chosen
            deterministically based on their right and top neighbor. Finally, the $\star$ tile only occurs when the tile  above is blank and
            the one to the right requires a borrow, which indicates that the counter has just gone below zero.
            In order to be unable to further count down, we disallow any tiles being below or to the left of
            tile $\star$. The size of the tiling is $(k' + b + 2) \times (k' + 2)$, where $b$ is the number of bits and $k'$
            is the initial value; however, this could be modified such that the final tiling has size $(k' + b + 2) \times (k' + b + 2)$
            by padding with $b$ dummy rows at the top. For sufficiently large $k$, we can always efficiently find $b, k'$ such that $k' + b + 2 = k$.
        }
        \label{fig:counter}
    \end{centering}
\end{figure}

Consider a ``layering'' of $\mathcal{D}$ and $\mathcal{C}$ into system
$\mathcal{D} \times \mathcal{C} = (D \times C, H_{D \times C}, V_{D \times C})$, where
the relations are defined as $R_{D \times C} = \{ ((d, c), (d', c')) : (d, d') \in R_D \land (c, c') \in R_C\}$ for both $R \in \{H, V\}$.
This way, any tiling by $\mathcal{D} \times \mathcal{C}$ corresponds to a pair of tilings by $\mathcal{D}$
and $\mathcal{C}$.
We let $\mathcal{D}' = \mathcal{D} \times \mathcal{C}$ and define the corner constraints mentioned in point \ref{corner-constraints} by restricting the possible starting and final tiles to
${S = \{(d_0, c_0)\}}, {F = \{(d, c_1) : d \in D\}}$ respectively. This yields tilings by $\mathcal{D}'$ which
consist of a tiling by $\mathcal{D}$ with starting tile $d_0$ and a tiling by $\mathcal{C}$ with starting tile $c_0$ and final tile $c_1$.

\subsection*{Periodic tilings}
The second step realizes the above constraint while also converting between periodic and non-periodic tilings. It consists of adding 5 new tiles to $\mathcal{D}'$ ---
$\tileC$, $\tileH$, $\tileV$, $\tileZeroA$, $\tileOneA$ --- yielding $\mathcal{D}''$. The added constraints are shown in Figure \ref{fig:border}
and an example tiling in Figure \ref{fig:example}. The distinguished tile $t$ is $\tileC$ -- the idea is that its usage forces a ``border''
of $\tileH$ and $\tileV$ tiles. Within each of these borders (there can be multiple if $\tileC$ is used more than once)
there is a valid tiling by the system $\mathcal{D}'$ additionally satisfying the starting and final constraints $S, F$.

\begin{figure}[p]
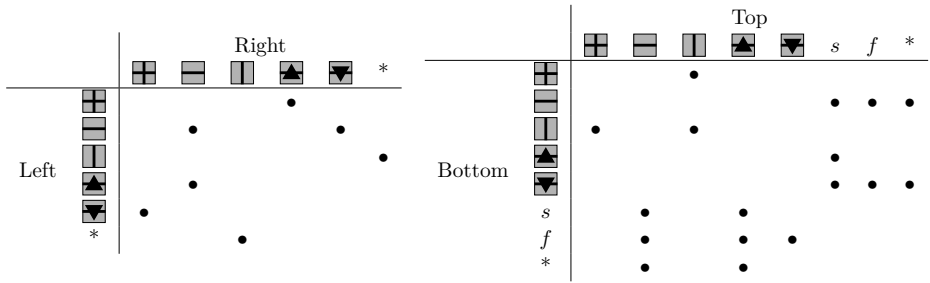

    \begin{centering}
        \scalebox{0.8}{
        \begin{tabular}{cc|cccccc}
            & & \multicolumn{6}{c}{Right} \\
                                    & & $\tileC$ & $\tileH$ & $\tileV$ & $\tileZeroA$ & $\tileOneA$ & * \\ \hline
            \multirow{6}{*}{Left} & $\tileC$ &   &   &   & $\bullet$ &   & \\
                                  & $\tileH$ &   & $\bullet$ &   & & $\bullet$ & \\
                                  & $\tileV$ &   &   &   &   &   & $\bullet$ \\
                                  & $\tileZeroA$ &   & $\bullet$ &   &   &   & \\
                                  & $\tileOneA$ & $\bullet$ &   &   &   &   & \\
                                  & * &   &   & $\bullet$ &   &   & \\
        \end{tabular}
        }
        \scalebox{0.8}{
        \begin{tabular}{cc|cccccccc}
            & & \multicolumn{8}{c}{Top} \\
                                  & & $\tileC$ & $\tileH$ & $\tileV$ & $\tileZeroA$ & $\tileOneA$ & $s$ & $f$ & * \\ \hline
            \multirow{8}{*}{Bottom} & $\tileC$ &   &   & $\bullet$ &   &   &   &   &   \\
                                    & $\tileH$ &   &   &   &   &   & $\bullet$ & $\bullet$ & $\bullet$ \\
                                    & $\tileV$ & $\bullet$ &   & $\bullet$ &   &   &   &   &   \\
                                    & $\tileZeroA$ &   &   &   &   &   & $\bullet$ &   &   \\
                                    & $\tileOneA$ &   &   &   &   &   &   $\bullet$ & $\bullet$ &   $\bullet$ \\
                                    & $s$ &   & $\bullet$ &   & $\bullet$ & &   &   &   \\
                                    & $f$ &   & $\bullet$ &   & $\bullet$ & $\bullet$ &   &   &   \\
                                    & * &   & $\bullet$ &   & $\bullet$ & &   &   &   \\
        \end{tabular}
        }
        \caption{
            Modified adjacency relation for the system $\mathcal{D}''$ -- only adjacencies marked by $\bullet$ are permitted, as well as all adjacencies from the original tiling system.
            The asterisk denotes any tiles from the system $\mathcal{D}'$, while $s, f$ represent any tile
            from the initial and final subset of tiles, respectively ($S$ and $F$ defined above).
        }
        \label{fig:border}
    \end{centering}
\end{figure}

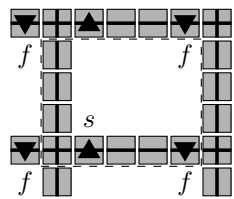
\begin{figure}[p]
    \centering
    \begin{tikzpicture}
      \matrix (M) [
        matrix of nodes, column sep=-0.2cm, row sep=-0.2cm
      ]
      {
        \tileOneA & \tileC & \tileZeroA & \tileH & \tileH & \tileOneA & \tileC \\
        $f$ &  \tileV& & & & $f$ & \tileV \\
                 & \tileV & & & & & \tileV \\
                 & \tileV & $s$ & & & & \tileV \\
        \tileOneA & \tileC & \tileZeroA & \tileH & \tileH & \tileOneA & \tileC \\
        $f$ & \tileV & & & & $f$ & \tileV \\
      };
    \node[draw=black,fit=(M-2-2)(M-5-6),dashed,inner sep = -3pt] {};
    \end{tikzpicture}
     \caption{
         An example ``border'' created by the above tiling, with tiles from the original set not shown and $s, f$ representing tiles from $S, F$ respectively. The dashed rectangle
         represents the actual rectangle being tiled, while the tiles outside are periodic copies added to better illustrate
         the construction.
     }
     \label{fig:example}
\end{figure}

\subsection*{Final conversion}\label{tiles-final}
Combining the above steps, a tiling by $\mathcal{D}$ of
size $k \times k$ is represented by a periodic $(k + 1) \times (k + 1)$ tiling by $\mathcal{D}''$ and thus we let $m = n = k + 1$.
The designated tile is set to $\tileC$, completing the reduction from \textsc{Binary Bounded Tiling}
to \textsc{Periodic Bounded Tiling}.

\subsection*{Variant with only powers of two} \label{section:powers-of-two}
Consider the following variant of the tiling problem:
\begin{problem}
  \problemtitle{Power-of-two Periodic Bounded Tiling}
  \probleminput{Integers $m, n$ given in unary, a tiling system $\mathcal{D}$, a designated tile $t$.}
  \problemquestion{Determine whether there exists a periodic tiling by $\mathcal{D}$ of size at most $2^m \times 2^n$ which
                uses tile $t$.}
\end{problem}
Note that this is the same as taking the input in binary while restricting it only to powers of two. This variant is also NEXPTIME-hard
because in the above reduction, the only possible sizes of tiling by the constructed tiling system are multiples of $k+1$,
both in width and height. Letting $m = n = \lceil \log_2(k + 1) \rceil$, we have $k + 1 \leq 2^m < 2(k + 1)$ and the same
for $2^n$. Therefore, the possible tilings are the same for size bound $(k + 1) \times (k + 1)$ and $2^{\lceil \log_2(k + 1) \rceil} \times 2^{\lceil \log_2(k + 1) \rceil}$.

\section{Second part of the reduction}\label{second-part}
Given a tiling system $\mathcal{D}$, a designated tile $t$ and integers $m, n$ given in binary, we will construct
(in polynomial time) a CI implication with bounded cardinalities which does not hold iff \textsc{Periodic Bounded Tiling} has a solution for input $(\mathcal{D}, t, m, n)$.
This is based on the construction of Li~\cite{li2021undecidabilityitw}. The main changes to Li's construction are as follows:
\begin{itemize}
    \item provide bounds for the random variables used in the construction -- this is done as the implication is being constructed;
    \item reduce the size of the implication from exponential to polynomial with regard to the input -- only one part (the predicate $\pCol$)
        needs to be replaced by a polynomial-size equivalent;
    \item modify the representation of tiles to better suit bounded size and non-Wang tiles;
    \item add the requirement of the usage of a given tile -- this is done by modifying the consequent of the implication.
\end{itemize}

\subsection{Preliminaries}
We denote by $\text{Unif}(S)$ a uniform distribution over set $S$ and by $\text{Bern}(p)$ a Bernoulli distribution with parameter $p$.
We use the shorthand $X^k$ to represent a tuple of random variables $(X_1, \ldots, X_k)$, which is in itself also a random variable.

The \emph{entropy} of a finite discrete random variable $X$ with domain $\mathcal{X}$ is defined as
\[
    H(X) = -\sum_{x \in \mathcal{X}} \mathbf{P}(X = x) \log \mathbf{P}(X = x).
\]
The \emph{entropy of $X$ conditioned on $Y$}, with $X, Y$ finite discrete random variables with domains $\mathcal{X}, \mathcal{Y}$ respectively, is defined as
\[
    H(X|Y) = -\sum_{x \in \mathcal{X}, y \in \mathcal{Y}} \mathbf{P}(X = x \land Y = y) \log \frac{\mathbf{P}(X = x \land Y = y)}{\mathbf{P}(X = x)}.
\]
Finally, the \emph{conditional mutual information of $X$ and $Y$ given $Z$} can be defined as
\[
    I(X; Y | Z) = H(X|Z) - H(X|Y, Z).
\]
Similarly, the \emph{mutual information} of $X$ and $Y$ is defined as
\[
    I(X; Y) = H(X) - H(X|Y).
\]

Note that the conditional independence (CI) statement $I(X; Y | Z) = 0$ is equivalent to the fact that $X$ and $Y$ are independent given $Z$,
and thus can be expressed without the usage of logarithms as
\[
    \mathbf{P}(X = x \land Y = y \land Z = z)\mathbf{P}(Z = z) = \mathbf{P}(X = x \land Z = z)\mathbf{P}(Y = y \land Z = z)
\]
for all $x \in \mathcal{X}, y \in \mathcal{Y}, z \in \mathcal{Z}$. Further, the functional dependence statement $H(X | Y) = 0$,
which states that for any $y \in \mathcal{Y}$, there exists exactly one $x \in \mathcal{X}$ such that $\mathbf{P}(X = x \land Y = y) \neq 0$
can be expressed equivalently as a (non-disjoint) CI statement $I(X; X | Y) = 0$.

We will follow Li's construction~\cite{li2021undecidabilityitw}, providing cardinality bounds and modifications where necessary.
While the final goal is a CI implication, we will mostly construct \emph{affine existential information predicates}
(AEIP)~\cite{li2021undecidabilityitw}, converting to a CI implication at the end. We will only
consider a special form of AEIP which consists of an existentially quantified conjunction of CI statements.
This family of predicates is closed under conjunction, in particular we can use a predicate
within the definition of another predicate (implicitly renaming variables in the case of a naming conflict).

Since our goal is to construct a bounded CI implication, every quantified
variable will be given a \emph{cardinality bound} -- the maximum allowed size of its domain. We denote the existential
quantification of a variable $X$ with $\text{card}(X) \leq k$ by the shorter notation $\exists X \leq k$, similarly for tuple variables
$\exists X^2 \leq k$ represents the existence of variables $X_1, X_2$ with $\text{card}(X_1), \text{card}(X_2) \leq k$. Whenever a predicate
takes arguments, their cardinalities are already bounded since they have been quantified. We may need to refer to
these bounds when quantifying new variables, denoting by $K_X$ the bound already given to variable $X$.
Finally, whenever $\leq$ is replaced by $\leq_i$, this indicates an ``implicit'' bound, that is one which does not change
the meaning of the predicate because it is already satisfied by any such quantified variable even without the explicit bound.
An example of this is that whenever $H(X | Y) = 0$, $X$ is functionally dependent on $Y$ and so $X \leq_i K_Y$.

The first defined predicate is $\pTriple$:
\begin{align*}
 \pTriple(Y_1, Y_2, Y_3): \ & H(Y_1 | Y_2, Y_3) = H(Y_2 | Y_1, Y_3) = H(Y_3 | Y_1, Y_2) = 0\\
                                 & \land I(Y_1; Y_2) = I(Y_1; Y_3) = I(Y_2; Y_3) = 0.
\end{align*}
By definition, predicate $\pTriple$ of three variables $Y_1, Y_2, Y_3$ is satisfied iff $Y_1, Y_2, Y_3$ are pairwise independent
and each variable is functionally dependent on the other two. Functional dependency is a special case of conditional
independence, since $H(X|Y) = I(X; X | Y)$. The following is shown in~\cite{zhang1997ineq}:
\begin{lemma}
    \label{lemma:triple}
    If $\pTriple(X, Y, Z)$ is satisfied, then $X, Y, Z$ are all uniformly distributed and have the same cardinality.
\end{lemma}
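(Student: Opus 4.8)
The plan is to exploit the rigid combinatorial structure that $\pTriple$ forces on the joint distribution: $Y_1$ and $Y_2$ must be independent, and the three variables together behave exactly like a "quasigroup triple'' --- pick $Y_1,Y_2$ independently, then $Y_3$ is obtained from them by an operation each of whose rows and columns is a bijection (a Latin square). Once this shape is established, uniformity of one marginal forces uniformity of the others. Throughout, let $\mathcal{X},\mathcal{Y},\mathcal{Z}$ denote the supports of $Y_1,Y_2,Y_3$ and let $S\subseteq\mathcal{X}\times\mathcal{Y}\times\mathcal{Z}$ be the support of the joint distribution; all three sets are finite. Note that "has cardinality $N$'' here will mean "is supported on a set of size $N$'' (and is uniform there), which is the content of the statement.

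First I would compute $|S|$ in three ways. Since $I(Y_1;Y_2)=0$, we have $\mathbf{P}(Y_1=x,Y_2=y)=\mathbf{P}(Y_1=x)\mathbf{P}(Y_2=y)$, so the support of $(Y_1,Y_2)$ is exactly $\mathcal{X}\times\mathcal{Y}$; and $H(Y_3\mid Y_1,Y_2)=0$ says that over each such pair lies a single point of $S$, so $|S|=|\mathcal{X}|\,|\mathcal{Y}|$. Pairing $I(Y_1;Y_3)=0$ with $H(Y_2\mid Y_1,Y_3)=0$, and $I(Y_2;Y_3)=0$ with $H(Y_1\mid Y_2,Y_3)=0$, gives likewise $|S|=|\mathcal{X}|\,|\mathcal{Z}|=|\mathcal{Y}|\,|\mathcal{Z}|$. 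Hence $|\mathcal{X}|=|\mathcal{Y}|=|\mathcal{Z}|=:N$ and $|S|=N^2$, which already settles the "same cardinality'' part.

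Next I would extract the bijection structure and then the uniformity. As $Y_3$ is a function of $(Y_1,Y_2)$ there is $g:\mathcal{X}\times\mathcal{Y}\to\mathcal{Z}$ with $(x,y,g(x,y))\in S$; if $g(x_1,y)=g(x_2,y)=z$ then $(x_1,y,z),(x_2,y,z)\in S$, and $H(Y_1\mid Y_2,Y_3)=0$ forces $x_1=x_2$, so each $g(\cdot,y)$ is injective, hence a bijection $\mathcal{X}\to\mathcal{Z}$ (as $|\mathcal{X}|=|\mathcal{Z}|=N$); symmetrically each $g(x,\cdot)$ is a bijection $\mathcal{Y}\to\mathcal{Z}$. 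Writing $p_x=\mathbf{P}(Y_1=x)$, $q_y=\mathbf{P}(Y_2=y)$, $r_z=\mathbf{P}(Y_3=z)$, and using $H(Y_2\mid Y_1,Y_3)=0$ to get, for each $(x,z)$ in the support of $(Y_1,Y_3)=\mathcal{X}\times\mathcal{Z}$, the unique $y=y(x,z)$ with $g(x,y)=z$, we obtain
\[
  p_x\,r_z=\mathbf{P}(Y_1=x,\,Y_3=z)=\mathbf{P}(Y_1=x,\,Y_2=y(x,z),\,Y_3=z)=p_x\,q_{y(x,z)},
\]
where the first equality is $I(Y_1;Y_3)=0$. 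Since $p_x>0$ for $x\in\mathcal{X}$, this gives $r_z=q_{y(x,z)}$; letting $x$ range over $\mathcal{X}$, bijectivity of $g(\cdot,\cdot)$ makes $y(x,z)$ sweep all of $\mathcal{Y}$, so all $q_y$ are equal, i.e.\ $Y_2\sim\text{Unif}(\mathcal{Y})$ with $q_y=1/N$, and consequently $r_z=1/N$ for every $z$, i.e.\ $Y_3\sim\text{Unif}(\mathcal{Z})$. Running the symmetric computation with $I(Y_2;Y_3)=0$ and $H(Y_1\mid Y_2,Y_3)=0$ yields $Y_1\sim\text{Unif}(\mathcal{X})$, finishing the argument.

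The only mildly delicate point is bookkeeping: at each step one must combine the right independence clause of $\pTriple$ (to make a pairwise support a full Cartesian product) with the right functional-dependence clause (to collapse the third coordinate, or to supply an injectivity that upgrades to a bijection by the equal-cardinality count). No step goes beyond elementary counting and the event-level form of conditional independence, so I do not expect a genuine obstacle; one could alternatively cite the known correspondence between $\pTriple$ and quasigroup/perfect-secret-sharing structures, but the direct argument above is self-contained.
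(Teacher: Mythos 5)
Your proof is correct, and while its key probability computation is the same pivot equality the paper uses, your route to the conclusion is organized quite differently, so a comparison is worthwhile. The paper's proof is a one-pass argument: fix any $x, y$ with positive marginal probability, let $z$ be the unique completion, use the chain of functional dependencies to write $\mathbf{P}(X{=}x, Z{=}z) = \mathbf{P}(X{=}x, Y{=}y, Z{=}z) = \mathbf{P}(Y{=}y, Z{=}z)$, apply pairwise independence to get $\mathbf{P}(X{=}x)\mathbf{P}(Z{=}z) = \mathbf{P}(Y{=}y)\mathbf{P}(Z{=}z)$, cancel $\mathbf{P}(Z{=}z) > 0$, and symmetrize; uniformity and equal support sizes then fall out together, since every pair of atoms across the three marginals carries the same mass. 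You instead split the work into a purely combinatorial phase and a probabilistic phase: first a counting argument ($|S| = |\mathcal{X}||\mathcal{Y}| = |\mathcal{X}||\mathcal{Z}| = |\mathcal{Y}||\mathcal{Z}|$) that yields equal support sizes and the Latin-square/quasigroup structure of $g$, and then the product-form identity $p_x r_z = p_x q_{y(x,z)}$ together with the row-bijectivity of $y(\cdot, z)$ to propagate uniformity. The paper's version is shorter and avoids introducing $S$, $g$, and the bijection bookkeeping; your version makes the quasigroup structure explicit, which is a genuine payoff if one later wants to reuse that structure (it is the perfect-secret-sharing / Latin-square characterization of $\pTriple$), and it isolates "equal cardinality'' as a combinatorial fact independent of the probabilities. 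Both are sound; the paper's is the more economical proof of exactly this lemma.
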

\begin{proof}
    Consider any $x, y$ such that $\mathbf{P}(X = x) > 0, \mathbf{P}(Y = y) > 0$.
    Since $Z$ is functionally dependent on $X, Y$, there exists a unique $z$ such that $\mathbf{P}(X = x \land Y = y \land Z = z) > 0$.
    Since we also have $X$ functionally dependent on $Y, Z$ and $Y$ on $X, Z$, we can write
    \[
         \mathbf{P}(X = x \land Z = z) = \mathbf{P}(X = x \land Y = y \land Z = z) = \mathbf{P}(Y = y \land Z = z).
    \]
    Since $Y$ and $Z$ are independent and so are $X$ and $Z$, we have
    \[
        \mathbf{P}(X = x)\mathbf{P}(Z = z) = \mathbf{P}(X = x \land Y = y \land Z = z) = \mathbf{P}(Y = y)\mathbf{P}(Z = z).
    \]
    We can divide both sides by $\mathbf{P}(Z = z) > 0$, yielding $\mathbf{P}(X = x) = \mathbf{P}(Y = y)$. Repeating this
    symmetrically, we get that for any $x, y, z$ such that $\mathbf{P}(X = x) > 0$, $\mathbf{P}(Y = y) > 0$, $\mathbf{P}(Z = z) > 0$,
    we have $\mathbf{P}(X = x) = \mathbf{P}(Y = y) = \mathbf{P}(Z = z)$. Therefore $X, Y, Z$ are all uniform and their supports
    have equal sizes.
\end{proof}
This is used in the next predicate, $\pUnif$:
\[
    \pUnif(X) : \exists U_1 \leq_i K_X, U_2 \leq_i K_X : \pTriple(X,U_1,U_2).
\]
By definition, predicate $\pUnif$ of one variable $X$ is satisfied iff there exist discrete random variables $U_1, U_2$ jointly distributed
with $X$ such that $\pTriple(X, U_1, U_2)$ holds,
which in turn is equivalent to $X$ being uniformly distributed over its support. Lemma~\ref{lemma:triple} immediately shows
that the implicit cardinality bounds for $U_1, U_2$ are correct.

For any constant $k$, predicate $\pUnif_k(X)$ is defined to imply $X$ being uniformly distributed over a domain of size $k$:
\begin{align*}
 & \pUnif_k(X) : \pUnif(X) \land \alpha_{k} \leq H(X) \leq \alpha_{k+1},
\end{align*}
where $\alpha_k \in \mathbb{Q}$ is some rational with $\log(k-1) < \alpha_k < \log k$ (because the entropy of a uniform
variable is the logarithm of the cardinality of its support). While this is still a valid
AEIP, it is not a conjunction of CI statements. However, in the bounded cardinality setting $\pUnif_k$ can be restated in this form, as shown in Section~\ref{section:unif} as well as ~\cite{li2021undecidabilityitw}.  Note that this predicate imposes an exact domain size constraint, while the cardinality bounds given as input provide only an upper bound.
Finally, note that the predicates $\pUnif$, $\pUnif_k$ are satisfiable -- for any $k > 0$, there exists a variable $X$
which satisfies $\pUnif_k(X)$ and so $\pUnif(X)$.

Define the \emph{characteristic bipartite graph} of random variables $X_1, X_2$ with (disjoint) supports $\mathcal{X}_1, \mathcal{X}_2$
as the undirected graph with set of vertices $V = \mathcal{X}_1 \cup \mathcal{X}_2$ and set of edges $E = {\{(x_1, x_2) : {x_1 \in \mathcal{X}_1}, {x_2 \in \mathcal{X}_2}, \mathbf{P}(X_1 = x_1 \land X_2 = x_2) > 0\}}$.

Li constructs the predicate
\begin{align*}
\pCycs(X_1, X_2) : \ & \exists U \leq_i 2: \pUnif(X_1) \land \pUnif(X_2) \land \pUnif_2(U) \\
 & \land I(X_1; U) = I(X_2; U) = 0 \\
 & \land H(X_1 | X_2, U) = H(X_2 | X_1, U) = 0 \\
 & \land H(U | X_1, X_2) = 0
\end{align*}

and shows the following (without cardinality bounds):
\begin{lemma} $\pCycs(X_1, X_2)$ is satisfied iff $X_1, X_2$ are uniform and the characteristic bipartite graph of
    $X_1, X_2$ consists only of vertex-disjoint simple cycles.
\end{lemma}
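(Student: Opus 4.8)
The plan is to analyze the joint support of $(X_1, X_2, U)$ under the constraints and read off the structure of the characteristic bipartite graph. First I would use Lemma~\ref{lemma:triple} (via $\pUnif$) to conclude that $X_1$ and $X_2$ are uniform over their supports $\mathcal{X}_1, \mathcal{X}_2$, and that $U$ is uniform over a two-element set by $\pUnif_2(U)$. The key structural facts come from the functional dependencies: $H(X_1 \mid X_2, U) = 0$ means that each pair $(x_2, u)$ determines at most one $x_1$, so in the characteristic bipartite graph every vertex $x_2 \in \mathcal{X}_2$ has at most one neighbor ``on the $U=0$ side'' and at most one ``on the $U=1$ side''; symmetrically for $H(X_2 \mid X_1, U) = 0$. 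Thus every vertex has degree at most $2$ in the graph, and the graph is a disjoint union of paths and cycles. The condition $I(X_1; U) = I(X_2; U) = 0$ together with uniformity forces, for each fixed value of $X_1$, the conditional distribution of $U$ to be uniform on $\{0,1\}$, i.e. each $x_1 \in \mathcal{X}_1$ actually has exactly two incident edges (one labelled $U=0$, one labelled $U=1$), and likewise each $x_2$; hence every vertex has degree exactly $2$, ruling out paths and isolated vertices, so the graph is a disjoint union of simple cycles.

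For the converse, suppose $X_1, X_2$ are uniform with a characteristic bipartite graph that is a disjoint union of simple cycles. I would construct $U$ explicitly: orient each cycle and assign the label $0$ or $1$ to each edge alternately around the cycle (possible since every cycle in a bipartite graph has even length), then set $U$ equal to the label of the (unique, almost surely) edge realized by $(X_1, X_2)$. One checks directly that $H(U \mid X_1, X_2) = 0$ by construction; that $H(X_1 \mid X_2, U) = H(X_2 \mid X_1, U) = 0$ because within a cycle the two edges at any vertex carry different labels, so vertex-plus-label pins down the neighbor; that $\pUnif_2(U)$ holds because the alternation makes exactly half the edges carry each label (and the joint distribution on edges is uniform); and that $I(X_1; U) = I(X_2; U) = 0$ for the same counting reason, each value of $X_1$ seeing both labels with equal probability. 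Finally $\pUnif(X_1) \wedge \pUnif(X_2)$ holds by hypothesis. The implicit bound $U \leq_i 2$ is consistent with this construction.

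The main obstacle I anticipate is being careful about supports versus full domains and about zero-probability subtleties: ``uniform'' here means uniform over the support, edges exist exactly where the joint probability is positive, and the functional-dependence statements only constrain positive-probability events. One must verify that the degree-$2$ conclusion is genuinely forced — in particular that $I(X_1; U) = 0$ combined with $H(U \mid X_1, X_2) = 0$ and uniformity of $X_1$ really does equalize the two per-vertex edge weights rather than merely bounding the degree — and that no connected component can be a single path (whose endpoints would have degree $1$, contradicting the balanced-$U$ condition) or an isolated vertex. This bookkeeping, rather than any deep idea, is where the proof needs the most attention; the cardinality bounds themselves add nothing beyond what Li already established, since $U$ is forced to have exactly two support elements.
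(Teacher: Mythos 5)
Your proposal is correct and follows essentially the same route as the paper: degree at most two from the functional-dependence constraints $H(X_1 \mid X_2, U) = H(X_2 \mid X_1, U) = 0$, degree exactly two via the independence and uniformity constraints together with $H(U \mid X_1, X_2) = 0$ to exclude the one-neighbor case, and a proper two-coloring of the edges of the even cycles (realized as $U$) for the converse. The caveat you flag at the end --- that $H(U \mid X_1, X_2) = 0$ is what rules out a vertex having both its $U{=}0$ and $U{=}1$ edge go to the \emph{same} neighbor, so that $I(X_1;U)=0$ alone would only give two labeled edges rather than two distinct ones --- is precisely the step the paper's proof handles, so your instinct about where the care is needed is right on target.
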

\begin{proof}
    Clearly, a graph consists only of disjoint simple cycles iff the degree of each vertex is either zero or two.
    For the ``only if'' direction, fix $X_1 = x_1$. Then $X_2$ is functionally dependent on $U$ (since $H(X_2 | X_1, U) = 0$). Furthermore, $U$
    is still uniform over two values since $I(X_1; U) = 0$. Therefore, $X_2$ is uniform over either one or two values.
    The case of one value is impossible because $H(U | X_1, X_2) = 0$.
    A symmetric argument is used for $X_1$ when $X_2$ is fixed, proving the statement -- any vertex 
    which has a nonzero probability of occurring has exactly two neighbors. For the ``if'' direction, since this is a bipartite graph, all cycles must be
    of even length. Therefore, we can use two colors to color all the edges such that no two edges which have a common vertex are of the same color.
    Taking $U$ to represent this coloring of edges satisfies the above predicate.
\end{proof}
Furthermore, this predicate is satisfiable -- for any finite collection of even-length cycles, we can clearly find $X_1, X_2$
such that their characteristic bipartite graph consists exactly of this collection of cycles.

\subsection{Overview of the construction}
We now give an overview of the following steps of the construction.
Section~\ref{section:grid} defines the predicate $\pTori'(X^2, Y^2, Z)$, which enforces that the characteristic bipartite graph of $X_1$ and $X_2$ is a collection of cycles, similarly for $Y_1$ and $Y_2$. Finally, we require that $Z$ be distributed uniformly over two values and that the three variables $X^2, Y^2, Z$ be independent. The distribution of $(X^2, Y^2)$ then represents a collection of tori, with each quadruple of values of $(X_1, X_2, Y_1, Y_2)$ representing a vertex in some torus. The addition of variable $Z$ effectively creates a corresponding copy of this collection.

With the goal of creating meaningful labels to be applied to the vertices of the aforementioned graph, which are represented by $k$-tuple of binary variables $W^k$, Section~\ref{section:labels} defines the predicates $\pSw(W^k, V^k, \bar{V}^k, F)$ and $\pCol'(W^k, V^k, \bar{V}^k, F)$. The former ensures that $V_i = (1 - W_i)F, \bar{V}_i = W_i F$ (up to relabeling) with the side-effect of requiring each $W_i \sim \text{Bern}(\frac12)$. The latter predicate restricts $W^k$ such that the only values that are possible have either $W_k = 1$ and exactly one $W_i = 0$, or $W_k = 0$ and exactly one $W_i = 1$, for some $i \in \{1, \ldots, k - 1\}$.

Section~\ref{section:edges} combines the prior predicates in predicate $\pCtori'(X^2, Y^2, Z, W^k, V^k, \bar{V}^k, F)$ in such a way that each vertex is assigned exactly one label, i. e. $W^k$ depends functionally on $(X^2, Y^2, Z)$.
This is extended in predicate $\pOtori'(X^2, Y^2, Z, W^k, V^k, \bar{V}^k, F)$, which assigns four possible \emph{groups} to vertex labels and enforces a structure as shown in Figure~\ref{fig:torus}.

Finally, the predicate $\pTtori'(X^2, Y^2, Z, W^k, V^k, \bar{V}^k, F)$ defined in Section~\ref{section:final} restricts the possible labels of vertices connected by edges so as to enforce the given vertical and horizontal constraints of the given tile system.

\subsection{Grid}\label{section:grid}
A collection of tori is constructed by Li using the following predicate
(where $X^2$ represents a pair of variables $(X_1, X_2)$, similarly for $Y^2$):
\[
\pTori(X^2, Y^2) : \pCycs(X^2) \land \pCycs(Y^2) \land I(X^2; Y^2) = 0, 
\]

When the above holds, fixing any three of the variables $X_1, X_2, Y_1, Y_2$ leaves two possible values of the fourth.
Similarly, fixing one variable from $X_1, X_2$ and one from $Y_1, Y_2$ gives the remaining two variables a distribution over four values. This is visualized by a graph similar
in idea to the bipartite characteristic graph: its vertices are quadruples of values $(x_1, x_2, y_1, y_2)$ which satisfy $\mathbf{P}(X_1 = x_1 \land X_2 = x_2 \land Y_1 = y_1 \land Y_2 = y_2) > 0$, with edges connecting any two quadruples which differ in exactly one out of these four values. When arranged in a grid with possible
pairs $(X_1, X_2)$ on one axis and $(Y_1, Y_2)$ on the other, as in Figure~\ref{fig:torus}, the torus structure becomes apparent.
Again, this predicate is satisfiable in the sense that any collection of tori which is a product of two collections
of even-length cycles has a representation by random variables $X^2, Y^2$.

Our construction departs slightly from the construction of Li, adding another coordinate $Z$, corresponding to taking two copies
of the collection of tori, with the edges and faces described above preserved when $Z$ is fixed. Additionally, fixing
$X^2$ and $Y^2$ but not Z ``connects'' two corresponding vertices in the two copies. The predicate for this is as follows:
\begin{align*}
    \pTori'(X^2, Y^2, Z) : \ & \pCycs(X^2) \land \pCycs(Y^2) \land \pUnif_2(Z) \\
     & \land I(X^2, Y^2; Z) = 0 \land I(X^2, Z; Y^2) = 0 \land I(Y^2, Z; X^2) = 0.
\end{align*}

\begin{figure}[t]
    \centering
    \begin{tikzpicture}[
        every node/.append style={scale=0.8, transform shape},
        dot/.style={outer sep=-1pt,circle,minimum size=3pt,draw,
        append after command={\pgfextra \draw (\tikzlastnode) ; \endpgfextra}}]
            \matrix (m) [matrix of nodes,
                         nodes in empty cells,
                         nodes={anchor=center},
                         column sep={1.1cm, between origins},
                         row sep={1.1cm, between origins}]{
                $y_1, \tilde{y}_1$ & |[dot]| & |[dot]| & |[dot]| & |[dot]| & |[dot]| & |[dot]| & |[dot]| & |[dot]| & |[dot]| & |[dot]| \\
                $y_2, \tilde{y}_1$ & |[dot]| & |[dot]| & |[dot]| & |[dot]| & |[dot]| & |[dot]| & |[dot]| & |[dot]| & |[dot]| & |[dot]| \\
                $y_2, \tilde{y}_2$ & |[dot]| & |[dot]| & |[dot]| & |[dot]| & |[dot]| & |[dot]| & |[dot]| & |[dot]| & |[dot]| & |[dot]| \\
                $y_1, \tilde{y}_2$ & |[dot]| & |[dot]| & |[dot]| & |[dot]| & |[dot]| & |[dot]| & |[dot]| & |[dot]| & |[dot]| & |[dot]| \\
                                     & $x_1, \tilde{x}_1$ & $x_2, \tilde{x}_1$ & $x_2, \tilde{x}_2$ & $x_3, \tilde{x}_2$ & $x_3, \tilde{x}_3$ & $x_3, \tilde{x}_1$ & $x_4, \tilde{x}_4$ & $x_5, \tilde{x}_4$ & $x_5, \tilde{x}_5$ & $x_5, \tilde{x}_4$ \\
            };
            \foreach \x [evaluate=\x as \xx using {int(\x+1)}] in {8,...,10}{
                \foreach \y [evaluate=\y as \yy using {int(\y+1)},
                             evaluate=\y as \face using {int(Mod(\x,2)+2*Mod(\y,2))},
                             evaluate=\y as \first using {int(2-Mod(\face,2))},
                             evaluate=\y as \second using {int(1+int(\face/2))}] in {1,...,3}{
                    \node (face-\y-\x) at ($(m-\y-\x)!0.5!(m-\yy-\xx)$) {\first\second};
                }
            }

            \foreach \x [evaluate=\x as \xx using {int(\x+1)}] in {8,...,11}{
                \foreach \y [evaluate=\y as \yy using {int(\y+1)},
                             evaluate=\y as \face using {int(Mod(\x+\y+1,2) + 2*pow(-1,\y+1) + 2*Mod(\y+1,2))}] in {1,...,4} {
                    \node (vertex-\y-\x) at (m-\y-\x) {\face};
                }
            }

            \foreach \y [evaluate=\y as \yy using {int(\y+1)}] in {1,...,3}{
                \draw (m-\y-1) edge (m-\yy-1);
                \draw (m-4-1) edge[bend left=45, looseness=0.7,line width=0.25pt] (m-1-1);
            }

            \foreach \x in {2,...,11}{
                \foreach \y [evaluate=\y as \yy using {int(\y+1)}] in {1,...,3}{
                    \draw (m-\y-\x) edge (m-\yy-\x);
                    \draw (m-4-\x) edge[bend left=45, looseness=0.7, color=lightgray] (m-1-\x);
                }
            }

            \foreach \x [evaluate=\x as \xx using {int(\x+1)}] in {2,...,6}{
                \draw (m-5-\x) edge (m-5-\xx);
                \draw (m-5-7) edge[looseness=0.5,bend right=30,line width=0.25pt] (m-5-2);
            }
            \foreach \x [evaluate=\x as \xx using {int(\x+1)}] in {8,...,10}{
                \draw (m-5-\x) edge (m-5-\xx);
                \draw (m-5-11) edge[looseness=0.5,bend right=30,line width=0.25pt] (m-5-8);
            }

            \foreach \y in {1,...,4}{
                \foreach \x [evaluate=\x as \xx using {int(\x+1)}] in {2,...,6}{
                    \draw (m-\y-\x) edge (m-\y-\xx);
                    \draw (m-\y-7) edge[looseness=0.5,bend right=30,color=lightgray] (m-\y-2);
                }
                \foreach \x [evaluate=\x as \xx using {int(\x+1)}] in {8,...,10}{
                    \draw (m-\y-\x) edge (m-\y-\xx);
                    \draw (m-\y-11) edge[looseness=0.5,bend right=30,color=lightgray] (m-\y-8);
                }
            }

            \node[draw=gray,fill,fill opacity=0.1,fit=(m-1-4)(m-2-4),rounded corners,inner sep=8pt,label={fix all but $Y_1$}] {};
            \node[draw=gray,fill,fill opacity=0.1,fit=(m-2-4)(m-3-4),rounded corners,inner sep=8pt,label=below:{fix all but $Y_2$}] {};
            \node[draw=gray,fill,fill opacity=0.1,fit=(m-2-3)(m-2-4),rounded corners,inner sep=8pt,label=above left:{fix all but $X_2$}] {};
            \node[draw=gray,fill,fill opacity=0.1,fit=(m-2-4)(m-2-5),rounded corners,inner sep=8pt,label=above right:{fix all but $X_1$}] {};
            \node[draw=gray,fill,fill opacity=0.1,fit=(m-3-6)(m-4-7),rounded corners,inner sep=8pt,label=above:{fix $(Z, X_1, Y_2)$}] {};
    \end{tikzpicture}
    \caption{
        Visualization of the tori which are a product of the cycles created by $(X_1, X_2)$ and $(Y_1, Y_2)$,
        with each vertex corresponding to a quadruple of the values of $(X_1, X_2, Y_1, Y_2)$. The axes show these
        cycles --- a quadruple's $(X_1, X_2)$ (resp. $(Y_1, Y_2)$) values are determined by projecting onto the horizontal
        (resp. vertical) axis.
        Additionally, the left torus shows highlighted edges which arise when all but one variable (of $X_1$, $X_2$, $Y_1$, $Y_2$, $Z$) are fixed as well
        as an example face which arises when $Z$ and two other variables are fixed.
        The right torus has each face labeled with its type and each vertex labeled with its group -- these are used in Section~\ref{section:tiles} in order to restrict allowed labelings of the vertices.
        The second ``corresponding'' torus and the $Z$ axis are omitted for clarity.
    }
    \label{fig:torus}
\end{figure}
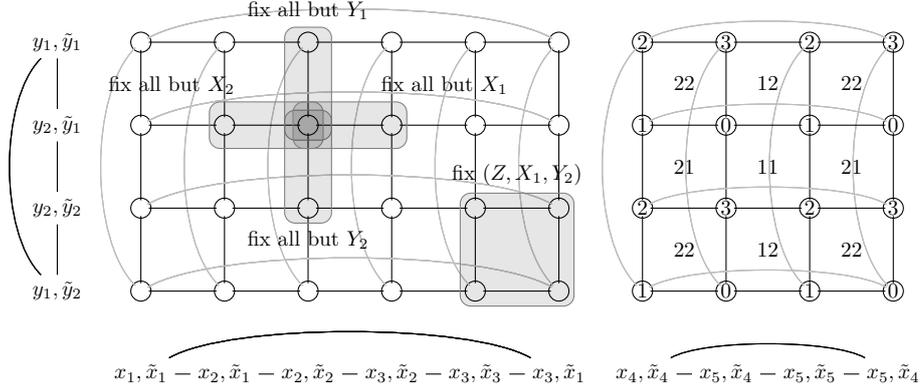
\subsection{Vertex labels}\label{section:labels}
The basis for constructing labels which will later on be assigned to vertices is the following predicate defined by Li:
\begin{align*}
 \pFlip(F, G_1, G_2): \ & \exists U \leq_i 4, Z^2 \leq_i 3: \pUnif_4(U) \land \pUnif_2(F) \\
 & \land H(F, G_1, G_2 | U) = I(G_1; G_2 | F) = 0 \\
 & \land \pUnif_3(Z_1) \land I(Z_1; G_1) = H(U | G_1, Z_1) = 0 \\
 & \land \pUnif_3(Z_2) \land I(Z_2; G_2) = H(U | G_2, Z_2) = 0.
\end{align*}
Recalling that $\text{Unif}(S)$ denotes a uniform distribution over set $S$, the following is shown:
\begin{lemma} \label{lemma:flip1}
    $\pFlip(F, G_1, G_2)$ is satisfied iff, up to relabeling, $(F, G_1, G_2)$ has the distribution
    ${\text{Unif}(\{(0, 0, 0), (0, 1, 0), (1, 0, 0), (1, 0, 1)\})}$.
\end{lemma}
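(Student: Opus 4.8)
The plan is to analyze what the constraints in $\pFlip(F, G_1, G_2)$ force, starting from the innermost structure and working outward. First I would observe that $\pUnif_4(U)$ makes $U$ uniform on a $4$-element set, which up to relabeling we take to be $\{1,2,3,4\}$, and $\pUnif_2(F)$ makes $F$ uniform on $\{0,1\}$. The key constraint $H(F, G_1, G_2 \mid U) = 0$ says that the triple $(F, G_1, G_2)$ is a deterministic function of $U$, so the joint distribution of $(F, G_1, G_2)$ is uniform over a multiset of (at most four, and since each value of $U$ gives a distinct atom of probability $1/4$, we can think of it as) four values $u \mapsto (f(u), g_1(u), g_2(u))$. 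Since $F$ is uniform over $\{0,1\}$, exactly two of the four values of $u$ have $f(u) = 0$ and two have $f(u) = 1$. So after relabeling $U$ I can assume $f(1) = f(2) = 0$ and $f(3) = f(4) = 1$.

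Next I would use the conditional independence $I(G_1; G_2 \mid F) = 0$ together with the three "anti-collapse" constraints. The condition $I(Z_1; G_1) = 0$ with $Z_1$ uniform on $3$ values and $H(U \mid G_1, Z_1) = 0$ is the mechanism (analogous to how $\pCycs$ forced degree exactly two) that forces $G_1$ to take at least... actually the point is to rule out degenerate cases: $H(U \mid G_1, Z_1) = 0$ means $U$ is a function of $(G_1, Z_1)$, and since $U$ has $4$ values while $Z_1$ is independent of $G_1$ and uniform on $3$ values, the support of $(G_1, Z_1)$ has size exactly $|\text{supp}(G_1)| \cdot 3 \geq 4$, so $|\text{supp}(G_1)| \geq 2$; moreover the fibers of $U$ over $(G_1, Z_1)$-values cannot be too small. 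The analogous statement holds for $G_2$. Combining: $G_1$ is non-constant, $G_2$ is non-constant. Given $F$ uniform on two values and $(F,G_1,G_2)$ supported on four atoms, with $I(G_1;G_2\mid F)=0$, I would case-split on how $G_1, G_2$ behave on the two $u$'s with $f(u)=0$ versus the two with $f(u)=1$: within each value of $F$ there are two atoms, and $I(G_1; G_2 \mid F=0) = 0$ forces the conditional joint of $(G_1, G_2)$ given $F = 0$ to be a product; two atoms forming a product distribution means either $G_1$ is constant given $F=0$, or $G_2$ is constant given $F = 0$ (a $2$-atom distribution is a product iff one marginal is degenerate). The same for $F = 1$.

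Then I would exploit the non-constancy of $G_1$ and $G_2$ globally to pin down which coordinate is constant on which side. If $G_1$ were constant on both $F=0$ and $F=1$ sides, then $G_1$ is a function of $F$, contradicting $I(Z_1; G_1) = 0$ combined with $H(U|G_1,Z_1)=0$ forcing $|\text{supp}(G_1)| \geq 2$ in a way incompatible with $G_1$ depending only on $F$ — I'd need to check this forces a genuine contradiction, which is where I expect the main fiddly bookkeeping. So WLOG $G_1$ is constant given $F=1$ and non-constant given $F=0$, i.e. on $\{1,2\}$ it takes both values $\{0,1\}$ and on $\{3,4\}$ it is constant, say $\equiv 0$ (relabel $G_1$). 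Symmetrically $G_2$ is constant given $F=0$ (relabel so $\equiv 0$ there) and must be non-constant given $F=1$. This yields exactly the atoms $(0,0,0), (0,1,0)$ on the $F=0$ side and $(1,0,0),(1,0,1)$ on the $F=1$ side — matching the claimed distribution $\text{Unif}(\{(0,0,0),(0,1,0),(1,0,0),(1,0,1)\})$. For the converse direction I would simply exhibit this distribution and check all constraints: set $U$ to record which of the four atoms occurred ($\pUnif_4(U)$ holds, everything is a function of $U$), and construct $Z_1, Z_2$ uniform on $3$ values independent of $G_1, G_2$ respectively while functionally determining $U$ — e.g. since $G_1$ splits the four atoms as $\{(0,0,0),(1,0,0),(1,0,1)\}$ (where $G_1=0$) and $\{(0,1,0)\}$ (where $G_1=1$), pick $Z_1$ to distinguish the three atoms in the first block and take any value in the second, arranging uniformity by a standard averaging.

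The main obstacle I anticipate is the case analysis ruling out the degenerate configurations — specifically making fully rigorous that the "$\pUnif_3$/independence/functional-dependence" gadget on $(Z_i, G_i, U)$ forces $G_i$ to be non-degenerate in exactly the right way, and that $G_1$ cannot be a function of $F$ alone. This is the same style of argument as in the $\pCycs$ lemma (using a uniform auxiliary variable plus a functional-dependence constraint to force a minimum support size), so I would model the bookkeeping on that proof. Everything else — the product-distribution-on-two-atoms observation and the relabeling — is routine.
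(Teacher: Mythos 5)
Your outline is correct in spirit and follows essentially the same route as the paper, but the key intermediate claim you extract from the $(Z_1, G_1, U)$ gadget is too weak, and this leaves gaps that you only partially notice. You derive from $H(U \mid G_1, Z_1) = 0$, $I(Z_1; G_1) = 0$ and $\pUnif_3(Z_1)$, $\pUnif_4(U)$ only that $|\mathrm{supp}(G_1)| \geq 2$. The paper derives a much sharper dichotomy from the same three constraints, using one extra ingredient you drop: $H(G_1 \mid U) = 0$ (a consequence of $H(F, G_1, G_2 \mid U) = 0$). Since each atom of $U$ has probability $1/4$ and lies inside a single $G_1$-slice, while the atoms of $(G_1, Z_1)$ inside the slice $\{G_1 = g\}$ each have probability $\frac{1}{3}\mathbf{P}(G_1 = g)$, the fiber sizes of $U$ over a fixed $g$ must all be equal, so they are either $1,1,1$ (giving $\mathbf{P}(G_1 = g) = 3/4$) or a single fiber of size $3$ (giving $\mathbf{P}(G_1 = g) = 1/4$). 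Hence $G_1 \sim \mathrm{Unif}(\{1,2,3,4\})$ or $G_1 \sim \mathrm{Bern}(1/4)$, and the same for $G_2$. This single claim is then enough: $\mathrm{Unif}(4)$ is killed by $I(G_1; G_2 \mid F) = 0$ because both conditionals $G_1 \mid F = f$ and $G_2 \mid F = f$ would then be non-degenerate for some $f$, and the rest falls out.

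Two gaps in your plan that the dichotomy closes and that you do not fully resolve. First, you flag but leave open the case where $G_1$ is a non-trivial function of $F$, i.e.\ $G_1 \sim \mathrm{Unif}(\{a, b\})$ with probabilities $(1/2, 1/2)$; this is excluded by the dichotomy (neither $1/4$ nor $3/4$), or directly by the same probability-mass count: the atoms of $(G_1, Z_1)$ then have mass $1/6$ each, which cannot partition into atoms of $U$ of mass $1/4$. Second, and this one you do not notice, after establishing the ``crossed'' pattern you write that $G_1$ is constant on the $F=1$ side with value $\equiv 0$ ``by relabeling'', but relabeling does not help if the constant value is a \emph{third} symbol distinct from the two values $G_1$ takes on the $F=0$ side; that would give $G_1$ distribution $(1/4, 1/4, 1/2)$ on a three-element support, which your $|\mathrm{supp}(G_1)| \geq 2$ observation does not exclude. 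Again the dichotomy (or the same divisibility argument) rules this out. So your plan can be made to work, but every one of its case eliminations secretly needs the full-strength probability-mass argument; proving the $\mathrm{Unif}(4)$-or-$\mathrm{Bern}(1/4)$ dichotomy up front, as the paper does, is the clean way to package it.
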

\begin{proof}
    For the ``if'' direction, we let $U$ be functionally dependent on $(F, G_1, G_2)$ and take a different value for each
    of the four values in the distribution of $(F, G_1, G_2)$. Letting $Z_1$ be independent from $G_1$, we can map $(G_1, Z_1)$
    to $U$: when $G_1 = 1$, we always give $U$ corresponding to $(F, G_1, G_2) = (0, 1, 0)$ and when $G_1 = 0$, the three
    values of $Z_1$ correspond to the three values $(0, 0, 0)$, $(1, 0, 0)$, $(1, 0, 1)$ of $(F, G_1, G_2)$. A similar
    argument applies to $G_2, Z_2$, satisfying the predicate.

    For the ``only if'' direction, we first show that we have either ${G_1 \sim \text{Unif}(\{1, 2, 3, 4\})}$ or ${G_1 \sim \text{Bern}(\frac14)}$. Consider the
    functional dependency $(G_1, Z_1) \mapsto U$. Since $G_1$ and $Z_1$ are independent and $Z_1$ is uniform over three values,
    we have $\textbf{P}(G_1 = g \land Z_1 = z) = \frac13 \textbf{P}(G_1 = g)$. Since ${H(G_1 | U) = 0}$, we cannot have
    (nonzero-probability) $g \neq g'$ and $z, z'$ such that $(g, z)$ and $(g', z')$ map to the same value of $U$.
    Hence, for each $g$ with nonzero probability, the three values of $Z_1$ map to either three different values or a
    single one. If the former occurs, then exactly one other value of $G_1$ can have nonzero probability and must map to
    a single value of $U$, giving $G_1 \sim \text{Bern}(\frac14)$. Otherwise, all values of $G_1$ must have nonzero probability
    and map to the same value of $U$ regardless of $Z_1$, thus $G_1 \sim \text{Unif}(\{1, 2, 3, 4\})$.

    The above proof applies analogously to $G_2$. We now strengthen this to state that $G_i \sim {\text{Bern}(\frac14)}$ for both $i \in \{1, 2\}$.
    Suppose otherwise -- then for some $f$, both $G_1|F=f$ and $G_2|F=f$ take two values, contradicting $I(G_1; G_2 | F) = 0$. This means that (up to relabeling)
    \[
        (G_1, G_2) = \begin{cases}
            (0, 0) & \text{ when } U \in \{1, 2\}, \\
            (0, 1) & \text{ when } U = 3, \\
            (1, 0) & \text{ when } U = 4.
        \end{cases}
    \]
    Only the mapping of $U$ to $F$ remains. If its values are different for $U = 1$ and $U = 2$, then this is the desired
    distribution. Otherwise, we have $0 = I(G_1; G_2 | F) = H(G_1|F) - H(G_1|G_2,F) = H(G_1|F) > 0$, contradiction.
    This leaves the desired distribution as the only possible solution, and it clearly satisfies the predicate.
\end{proof}
For any $k \geq 4$, Li defines the predicate $\pSw$ which allows us to represent strings of $k$ bits.
Here, $W^k$ represents a tuple of variables $(W_1, \ldots, W_k)$, same for $V^k, \bar{V}^k$:
Intuitively, the value of $W_i$ determines whether $F$ is copied into $V_i$ or $\bar{V}_i$, hence the name $\pSw$ for \emph{switch}.
\begin{align*}
 \pSw(W^k, V^k, \bar{V}^k, F): \ & \exists G \leq_i 2:  I(W^k; F, G) = 0 \\
 & \land \bigwedge_{i \in \{1, \ldots, k\}} \big( \pUnif_2(W_i) \land  H(V_i, \bar{V}_i | W_i, F) = I(V_i; \bar{V}_i | W_i) = 0 \\
 & \qquad \qquad \land \pFlip(F, G, V_i) \land \pFlip(F, G, \bar{V}_i) \big)
\end{align*}
\begin{lemma}
    If $\pSw(W^k, V^k, \bar{V}^k, F)$ is satisfied, then we have (without loss of generality) $V_i = {(1-W_i)F}, \bar{V}_i = {W_i F}$
    for all $i$.
\end{lemma}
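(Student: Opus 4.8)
The plan is to fix $i$ and nail down the joint distribution of the quadruple $(F, W_i, V_i, \bar V_i)$: Lemma~\ref{lemma:flip1} applied to the two $\pFlip$ conjuncts pins down how $V_i$ and $\bar V_i$ sit relative to $F$, and the remaining conjuncts of $\pSw$ tie them to $W_i$.

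First I would collect the elementary consequences. By $\pUnif_2(F)$ (which appears inside every $\pFlip$) and $\pUnif_2(W_i)$, each of $F$ and $W_i$ is uniform over two values; by $I(W^k; F, G) = 0$, which forces $I(W_i; F) = 0$, the pair $(W_i, F)$ is uniform over the $2 \times 2$ grid of its values; and $H(V_i, \bar V_i \mid W_i, F) = 0$ makes both $V_i$ and $\bar V_i$ deterministic functions of $(W_i, F)$.

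Next I would invoke Lemma~\ref{lemma:flip1} on $\pFlip(F, G, V_i)$ and on $\pFlip(F, G, \bar V_i)$. Each says that, after some relabeling of the three variables involved, the triple is distributed as $\text{Unif}(\{(0,0,0),(0,1,0),(1,0,0),(1,0,1)\})$; in particular $V_i$ and $\bar V_i$ each range over two values, take one of them --- call it $1$ --- with probability $\tfrac14$, and that value co-occurs with a fixed value of $F$. The step I expect to be the main obstacle is relabeling bookkeeping: the two applications, and the single global relabeling demanded by the statement (and by ``for all $i$''), must agree on the shared variable $F$. I would settle this by noting that in the canonical $\pFlip$ distribution the value of $F$ labelled $1$ is precisely the one given which $G$ is almost surely constant; since $G$, and hence the joint law of $(F,G)$, is the same in both applications and for every $i$, all these applications induce one and the same labelling of $F$. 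Fix this labelling of $F$ globally; then $V_i = 1$ forces $F = 1$, and likewise for $\bar V_i$.

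Finally I would combine. Each of $V_i$, $\bar V_i$ is a function of the uniform pair $(W_i, F)$ that equals $1$ on an event of probability $\tfrac14$, hence is the indicator of a single cell of the grid; since moreover $V_i = 1$ forces $F = 1$ (and similarly for $\bar V_i$), these cells lie in the row $F = 1$, so $V_i = \mathbf{1}[W_i = w_i,\ F = 1]$ and $\bar V_i = \mathbf{1}[W_i = \bar w_i,\ F = 1]$ for some bits $w_i, \bar w_i$. The conjunct $I(V_i; \bar V_i \mid W_i) = 0$ excludes $w_i = \bar w_i$: if they coincided, then conditioned on $W_i = w_i$ both $V_i$ and $\bar V_i$ would equal $\mathbf{1}[F = 1]$, which is non-constant and perfectly self-correlated. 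Hence $\{w_i, \bar w_i\} = \{0, 1\}$; relabelling $W_i$ so that $w_i = 0$ yields $V_i = (1 - W_i)F$ and $\bar V_i = W_i F$. Since the labelling of $F$ was chosen once and the relabellings of $W_i, V_i, \bar V_i$ are independent across $i$, this holds for all $i$ simultaneously, as required.
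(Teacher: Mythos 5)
Correct, and essentially the same argument as the paper's: from Lemma~\ref{lemma:flip1} together with the functional dependency $H(V_i,\bar V_i\mid W_i,F)=0$ and uniformity of the independent pair $(W_i,F)$, one deduces that $V_i$ and $\bar V_i$ are each indicators of a single cell of the $(W_i,F)$-grid lying in the row $F=1$, and then $I(V_i;\bar V_i\mid W_i)=0$ excludes their coincidence, leaving $V_i=(1-W_i)F$, $\bar V_i=W_iF$ up to relabelling. You are in fact a bit more careful than the printed proof on two points: you explain how to fix the labelling of $F$ consistently across the $\pFlip$ applications (via the value of $F$ given which $G$ is a.s.\ constant), which the paper glosses over, and you correctly invoke $I(V_i;\bar V_i\mid W_i)=0$ where the paper's text twice misprints this conjunct as $I(V_i;\bar V_i\mid F)=0$ (on which the exclusion step would not go through, since $V_i=F$ is constant given $F$).
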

\begin{proof}
    Consider any $i$. Since $\pFlip(F, G, V_i)$ holds, we can assume $F \sim {\text{Bern}(\frac12)}, G|F \sim {\text{Bern}(\frac{1-F}{2})}, V_i|F \sim \text{Bern}(\frac{F}{2})$.
    Since $W_i$ and $F$ are independent and both uniform over $\{0, 1\}$, we have $(W_i, F) \sim \text{Unif}(\{0, 1\}^2)$.
    From $V_i|F \sim \text{Bern}(\frac{F}{2})$ and $H(V_i|W_i, F) = 0$, we can essentially only have $V_i = F, V_i = W_i F$ or
    $V_i = (1-W_i)F$; the same for $\bar{V}_i$. The first case would contradict $I(V_i; \bar{V}_i | F) = 0$ and the other
    two options are symmetrical, therefore we assume $V_i = (1-W_i)F$ without loss of generality. Then $\bar{V}_i = (1-W_i)F$
    would contradict $I(V_i; \bar{V}_i | F) = 0$, therefore we assume $\bar{V}_i = W_iF$.
\end{proof}
The predicate $\pSw$ is satisfiable: we let $(F, G)$ take each of the values $(0, 0)$, $(0, 1)$ with probability $\frac14$, in which case we let $V_i = \bar{V}_i = 0$, and the value $(1, 0)$ with probability $\frac12$, in which case $V_i = 1 - W_i, \bar{V}_i = W_i$. As long as each $W_i \sim \text{Bern}(\frac12)$
(recall that $\text{Bern}(p)$ denotes a Bernoulli distribution with parameter $p$), we can satisfy the predicate for any distribution of $W^k$. This predicate additionally has the following property:
\begin{lemma}
    If $\pSw(W^k, V^k, \bar{V}^k, F)$ is satisfied, then for any $S, \bar{S} \subseteq \{1, \ldots, k\}$ we have
    \[
        H(F|V_S, \bar{V}_{\bar{S}}, W^k) = \mathbf{P}(\text{sat}(W^k, S, \bar{S}) = 1),
    \]
    where
    \[
        \text{sat}(w^k, S, \bar{S}) = \Big( \prod_{i \in S} w_i \Big) \Big( \prod_{i \in \bar{S}} (1 - w_i) \Big),
    \]
    i.e. if $w_i = 1$ for all $i \in S$ and $w_i = 0$ for $i \in \bar{S}$ then $\text{sat}(w^k, S, \bar{S})$ equals 1
    and otherwise it equals 0.
\end{lemma}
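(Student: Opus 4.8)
The plan is to reduce everything to the explicit form of the satisfying distribution supplied by the preceding lemma and then perform a short case distinction on the value of $W^k$. First I would invoke that lemma to assume without loss of generality that $V_i = (1-W_i)F$ and $\bar{V}_i = W_iF$ for every $i$; the relabelings used there act only on the values of the $V_i$ and $\bar{V}_i$, so they affect neither side of the identity to be proved. I would also record two facts about $\pSw$: the clause $I(W^k; F, G) = 0$ gives $W^k \perp F$, so conditioning on $W^k$ leaves the law of $F$ unchanged; and the clauses $\pFlip(F, G, V_i)$ together with Lemma~\ref{lemma:flip1} force $F$ to be uniform over two values, so $H(F) = 1$.

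Next I would expand
\[
    H(F|V_S, \bar{V}_{\bar{S}}, W^k) = \sum_{w^k} \mathbf{P}(W^k = w^k)\, H(F|V_S, \bar{V}_{\bar{S}}, W^k = w^k)
\]
and analyse a single summand. Fix a value $w^k$ in the support of $W^k$. For $i \in S$ we have $V_i = (1-w_i)F$, which equals $F$ when $w_i = 0$ and is identically $0$ when $w_i = 1$; symmetrically, for $i \in \bar{S}$ we have $\bar{V}_i = w_i F$, which equals $F$ when $w_i = 1$ and is identically $0$ when $w_i = 0$. Hence, on the event $W^k = w^k$, at least one of the conditioning variables among $V_S, \bar{V}_{\bar{S}}$ coincides with $F$ unless $w_i = 1$ for all $i \in S$ and $w_i = 0$ for all $i \in \bar{S}$, that is, unless $\text{sat}(w^k, S, \bar{S}) = 1$.

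This leaves exactly two cases. If $\text{sat}(w^k, S, \bar{S}) = 0$, then some conditioning variable equals $F$ on $\{W^k = w^k\}$, so $F$ is a deterministic function of $(V_S, \bar{V}_{\bar{S}}, W^k)$ there and the summand is $0$. If $\text{sat}(w^k, S, \bar{S}) = 1$, then every $V_i$ with $i \in S$ and every $\bar{V}_i$ with $i \in \bar{S}$ is the constant $0$ on that event, so the summand equals $H(F|W^k = w^k) = H(F) = 1$ by the independence of $W^k$ and $F$. Substituting back, only the $w^k$ with $\text{sat}(w^k, S, \bar{S}) = 1$ contribute, each weighted by $\mathbf{P}(W^k = w^k)$, and we obtain $H(F|V_S, \bar{V}_{\bar{S}}, W^k) = \mathbf{P}(\text{sat}(W^k, S, \bar{S}) = 1)$, as required.

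I do not expect a genuine obstacle. The only points that need a moment's care are that the ``without loss of generality'' of the preceding lemma is applied uniformly in $i$ (harmless, since it never touches $W^k$) and that the degenerate configurations are subsumed by the same argument: if $S \cap \bar{S} \neq \emptyset$ then $\text{sat}(\cdot, S, \bar{S})$ is identically $0$ and both sides vanish, while if $S = \bar{S} = \emptyset$ then $\text{sat}(\cdot, S, \bar{S})$ is identically $1$ and both sides equal $H(F) = 1$.
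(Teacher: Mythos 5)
Your proof is correct and follows the same route as the paper's: reduce via the preceding lemma to $V_i=(1-W_i)F$, $\bar{V}_i=W_iF$, observe that on $\{W^k=w^k\}$ the conditioning reveals $F$ precisely when $\text{sat}(w^k,S,\bar{S})=0$ and leaves it $\text{Bern}(\tfrac12)$ otherwise, then average over $w^k$. You merely phrase the two cases a bit more explicitly than the paper does.
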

\begin{proof}
    Without loss of generality, we have $V_i = {(1 - W_i)F}, \bar{V}_i = {W_i F}$ for all $i \in \{1, \ldots, k\}$.
    Since $I(W^k; F) = 0$, knowing the value of $W^k$ gives no additional information about $F$. However, if we also know
    the value of $V_i = (1-W_i)F$ for some $i$ such that $W_i = 0$, then the value of $F$ is revealed. Conversely, if we only
    know the values of $V_i$ for $i$ such that $W_i = 1$, then $F$ conditioned on this knowledge still follows $\text{Bern}(\frac12)$.
    An analogous argument can be used for $\bar{V}_i$, and these combined give the result that for any
    $w^k \in \{0, 1\}^k$ and $S, \bar{S} \subseteq \{1, \ldots, k\}$, we have
    \[
        H(F|V_S, \bar{V}_{\bar{S}}, W^k = w^k) = \text{sat}(w^k, S, \bar{S}).
    \]
    Summing over all possible values of $w^k$, we get
    \[
        H(F|V_S, V_{\bar{S}}, W^k) = \mathbf{E}[\text{sat}(w^k, S, \bar{S})] = \mathbf{P}(\text{sat}(w^k, S, \bar{S}) = 1). \qedhere
    \]
\end{proof}
This immediately yields the following:
\begin{lemma} \label{cor:sat}
    The equality $H(F|V_{\{i \in \{1, \ldots, k \} \; : \; w_i = 1\}}, \bar{V}_{\{i \in \{1, \ldots, k \} \; : \; w_i = 0\}}, W^k) = \mathbf{P}(W^k = w^k)$ holds for any $w^k \in \{0, 1\}^k$. \qed
\end{lemma}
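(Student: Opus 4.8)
The plan is to obtain this as a direct specialization of the immediately preceding lemma, choosing the sets $S$ and $\bar S$ to depend on the fixed string $w^k$. Concretely, for a given $w^k \in \{0,1\}^k$ I would set $S = \{i \in \{1,\ldots,k\} : w_i = 1\}$ and $\bar S = \{i \in \{1,\ldots,k\} : w_i = 0\}$, so that $S$ and $\bar S$ partition $\{1,\ldots,k\}$. With these choices the left-hand side of the claimed identity is exactly the left-hand side of the preceding lemma, so it remains only to evaluate the right-hand side $\mathbf{P}(\text{sat}(W^k, S, \bar S) = 1)$.

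For that, I would observe that for any $v^k \in \{0,1\}^k$ the quantity $\text{sat}(v^k, S, \bar S) = \big(\prod_{i \in S} v_i\big)\big(\prod_{i \in \bar S}(1 - v_i)\big)$ equals $1$ precisely when $v_i = 1$ for every $i$ with $w_i = 1$ and $v_i = 0$ for every $i$ with $w_i = 0$, i.e.\ precisely when $v^k = w^k$ (and it equals $0$ otherwise, since all factors lie in $\{0,1\}$). Hence the event $\{\text{sat}(W^k, S, \bar S) = 1\}$ coincides with the event $\{W^k = w^k\}$, giving $\mathbf{P}(\text{sat}(W^k, S, \bar S) = 1) = \mathbf{P}(W^k = w^k)$, which is the desired right-hand side.

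Since the statement is flagged as an immediate consequence, I do not anticipate a genuine obstacle; the only point requiring a line of care is the elementary verification that $\text{sat}(\cdot, S, \bar S)$ is the indicator of the singleton $\{w^k\}$ for this particular partition $S \cup \bar S = \{1,\ldots,k\}$. Everything else is just substituting into the preceding lemma.
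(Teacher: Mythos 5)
Your argument is correct and is exactly the specialization the paper has in mind when it writes ``This immediately yields'': take $S=\{i:w_i=1\}$, $\bar S=\{i:w_i=0\}$ in the preceding lemma and note that $\text{sat}(\cdot,S,\bar S)$ is the indicator of $\{w^k\}$. No differences from the paper's (implicit) proof.
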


Using these properties, we can disallow certain values of $W^k$ from occurring using a CI statement. This is used by Li
to limit the possible values of $W^k$ to the set $T_k \subseteq \{0, 1\}^k$, which consists of $2(k-1)$ \emph{labels}
(referred to as \emph{colors} by Li), each one with a \emph{value} and \emph{sign}. The set consists of strings in which exactly one bit differs from the last bit,
that is for any $w^k \in T_k$ we have $w_j \neq w_k$ for exactly one $j \neq k$.
The sign of the label is determined by $w_k$ --- negative when $w_k=1$, positive when $w_k=0$ --- and $j$ is the value of the label.
For example, the elements of $T_4 = \{0111, 1011, 1101, 1000, 0100, 0010\}$ correspond in order to labels $\{-1, -2, -3, +1, +2, +3\}$.

Li's predicate for enforcing this is simple:
\begin{align*}
    \pCol(W^k, V^k, \bar{V}^k, F) : \ & \ \pSw(W^k, V^k, \bar{V}^k, F) \\
                                         & \land \bigwedge_{w^k \in \{0,1\}^k \setminus T_k} \left( H(F|V_{\{i:w_i = 1\}}, \bar{V}_{\{i:w_i=0\}}, W^k) = 0\right),
\end{align*}
This predicate simply disallows the occurrence of any $w^k \not \in T_k$, hence we have
\begin{lemma}
    If $\pCol(W^k, V^k, \bar{V}^k, F)$ is satisfied, then $\mathbf{P}(W^k \not \in T_k) = 0$. \qed
\end{lemma}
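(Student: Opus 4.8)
The plan is to unfold the definition of $\pCol$ and read off the conclusion from Corollary~\ref{cor:sat} one string at a time. First I would observe that if $\pCol(W^k, V^k, \bar{V}^k, F)$ is satisfied then in particular $\pSw(W^k, V^k, \bar{V}^k, F)$ is satisfied, so Corollary~\ref{cor:sat} applies and gives, for every $w^k \in \{0,1\}^k$,
\[
    H\big(F \mid V_{\{i \in \{1,\ldots,k\} \,:\, w_i = 1\}},\, \bar{V}_{\{i \in \{1,\ldots,k\} \,:\, w_i = 0\}},\, W^k\big) = \mathbf{P}(W^k = w^k).
\]
Next, by the defining conjunction of $\pCol$, for each $w^k \in \{0,1\}^k \setminus T_k$ the left-hand side of this identity is exactly the expression that $\pCol$ sets to $0$; hence $\mathbf{P}(W^k = w^k) = 0$ for every such $w^k$. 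Finally, since $\{0,1\}^k \setminus T_k$ is finite, summing yields $\mathbf{P}(W^k \notin T_k) = \sum_{w^k \in \{0,1\}^k \setminus T_k} \mathbf{P}(W^k = w^k) = 0$, which is the claim.

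There is essentially no obstacle here: the entire content of the lemma was already packaged into Corollary~\ref{cor:sat}, and the predicate $\pCol$ was designed so that its conjuncts are literally the instances of that corollary indexed by the forbidden strings. The only thing to check carefully is that the index sets $\{i : w_i = 1\}$ and $\{i : w_i = 0\}$ appearing in the $\pCol$ conjunction match those in Corollary~\ref{cor:sat} verbatim, which they do; after that the argument is a finite union bound (in fact an exact finite sum).
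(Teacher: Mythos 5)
Your proposal is correct and matches exactly what the paper intends: the lemma is stated with an immediate $\qed$ precisely because each conjunct of $\pCol$ beyond $\pSw$ is, by Lemma~\ref{cor:sat}, literally the statement $\mathbf{P}(W^k = w^k) = 0$ for the corresponding forbidden string $w^k \notin T_k$. Summing over the finitely many excluded strings gives the claim, just as you wrote.
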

While sufficient for showing undecidability, this predicate
cannot be used in a polynomial-time reduction due to its size being exponential with regard to $k$. However, an equivalent
polynomial-size predicate can be easily constructed: with
\[
    \bigwedge_{\substack{i, j \in \{1, \ldots, k-1\}, \\ i < j}} (H(F|V_{\{i, j\}}, \bar{V}_{\{k\}}, W^k) = 0),
\]
we disallow all $w^k$ such that $w_k = 0$ and $w_i = w_j = 1$ for some $1 \leq i < j < k$.  Similarly,
\[
    \bigwedge_{\substack{i, j \in \{1, \ldots, k-1\}, \\ i < j}} (H(F|V_{\{k\}}, \bar{V}_{\{i, j\}}, W^k) = 0)
\]
disallows all $w^k$ such that $w_k = 1$ and $w_i = w_j = 0$ for some $1 \leq i < j < k$. The only remaining $w^k$
have either $w_k = 0$ and at most one 1 in $\{1, \ldots, k-1\}$ or have $w_k = 1$ and at most one $0$ in $\{1, \ldots, k-1\}$.
The strings $0^k$ and $1^k$ are disallowed by $H(F|V_{\{1, \ldots, k\}}, \bar{V}_{\varnothing}, W^k) = 0$ and $H(F|V_{\varnothing}, \bar{V}_{\{1, \ldots, k\}}, W^k) = 0$.
Combined, these yield the polynomial-size predicate
\begin{align*}
    \pCol'(W^k, V^k, \bar{V}^k, F) & : \pSw(W^k, V^k, \bar{V}^k, F) \\
        & \land \bigwedge_{\substack{i, j \in \{1, \ldots, k-1\}, \\ i < j}} (H(F|V_{\{i, j\}}, \bar{V}_{\{k\}}, W^k) = 0) \\
        & \land \bigwedge_{\substack{i, j \in \{1, \ldots, k-1\}, \\ i < j}} (H(F|V_{\{k\}}, \bar{V}_{\{i, j\}}, W^k) = 0) \\
        & \land H(F|V_{\{1, \ldots, k\}}, \bar{V}_{\varnothing}, W^k) = 0 \\
        & \land H(F|V_{\varnothing}, \bar{V}_{\{1, \ldots, k\}}, W^k) = 0
\end{align*}
equivalent to the exponential-size $\pCol$.

\subsection{Edge constraints}\label{section:edges}
The next predicate is the following, with $\pCol$ used in place of $\pCol'$ in Li's original construction:
\begin{align*}
 \pCold(X, W^k, V^k, \bar{V}^k, F): \ & \pCol'(W^k, V^k, \bar{V}^k, F) \\
 & \land H(W^k | X) = I(V^k, \bar{V}^k, F; X | W^k) = 0.
\end{align*}
This predicate will represent the labeling of vertices, with $X$ representing
the coordinate and $W^k$ (which depends functionally on $X$) representing its label.  For any $x \in \mathcal{X}$,
denote by $w^k(x)$ the unique value of $w^k$ which satisfies $\mathbf{P}(W^k = w^k | X = x) > 0$.

Suppose that $\pCold(X, W^k, V^k, \bar{V}^k, F)$ is satisfied and let $E$ be any random variable
which \emph{splits $X$ into sets of (a constant) size $l$} -- we say this is the case when $H(E | X) = 0$ and $X | E = e$
is uniform over $l$ values for all $e \in \mathcal{E}$.
This is not verified by a predicate; rather, we will only choose
$E$ which have this property by definition. Fixing subsets of indices $S, \bar{S} \subseteq \{1, \ldots, k\}$, we define for any
$e \in \mathcal{E}$ the value $a_e$:
\[
    a_e = | \{ x : \mathbf{P}(X = x | E = e) > 0 \land \text{sat}(w^k(x), S, \bar{S}) = 1 \} |
\]
In order to impose restrictions on the possible values of $a_e$, Li defines the following predicates:
\begingroup
\allowdisplaybreaks
\begin{align*}
    \pSat_{\neq 1/2, S, \bar{S}}(E, W^k, V^k, \bar{V}^k, F): \ & \exists U \leq_i 2: \pUnif_2(U) \land I(U; E, V_{S}, \bar{V}_{\bar{S}}) = 0 \\
 & \land H(F | V_{S}, \bar{V}_{\bar{S}}, E, U) = 0,  \\[6pt]
 \pSat_{\le 1/2, S, \bar{S}}(E, W^k, V^k, \bar{V}^k, F):
 & \exists U \leq_i 3: \pUnif_3(U) \land I(U; E, V_{S}, \bar{V}_{\bar{S}}) = 0 \\
 & \land H(F | V_{S}, \bar{V}_{\bar{S}}, E, U) = 0,  \\[6pt]
 \pSat_{\le 3/4, S, \bar{S}}(E, W^k, V^k, \bar{V}^k, F):
 & \exists U \leq_i 105: \pUnif_{105}(U) \land I(U; E, V_{S}, \bar{V}_{\bar{S}}) = 0 \\
 & \land H(F | V_{S}, \bar{V}_{\bar{S}}, E, U) = 0.
\end{align*}
\endgroup
The predicates satisfy the following properties.
\begin{lemma}\label{lemma:cold1}
    If $\pCold(X, W^k, V^k, \bar{V}^k, F)$ is satisfied and $E$ splits $X$ into sets of size 2, then $\pSat_{\neq 1/2, S, \bar{S}}(E, W^k, V^k, \bar{V}^k, F)$
    is satisfied
    iff $a_e \neq 1$ for all $e \in \mathcal{E}$.
\end{lemma}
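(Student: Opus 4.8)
The plan is to pin down the conditional law of $(F, V_S, \bar{V}_{\bar{S}})$ given each value of $E$ and then to check exactly when a witness $U$ for $\pSat_{\neq 1/2, S, \bar{S}}$ can be built. First I would normalize: since $\pCold$ entails $\pSw(W^k, V^k, \bar{V}^k, F)$, the $\pSw$-lemma lets us assume after relabeling that $V_i = (1 - W_i)F$ and $\bar{V}_i = W_i F$ for all $i$, with $F \sim \text{Bern}(\tfrac12)$, $I(W^k; F) = 0$, $H(W^k \mid X) = 0$ and $I(V^k, \bar{V}^k, F; X \mid W^k) = 0$. Writing $w^k(x)$ for the unique value of $W^k$ with $\mathbf{P}(W^k = w^k(x) \mid X = x) > 0$, the conditional-independence relation gives that, conditioned on $X = x$, $F \sim \text{Bern}(\tfrac12)$ and $(V_S, \bar{V}_{\bar{S}})$ is the deterministic function of $F$ fixed by $w^k(x)$: it is $(0,0)$ when $F = 0$; and when $F = 1$ it is $(0,0)$ if $\text{sat}(w^k(x), S, \bar{S}) = 1$ and some value $\neq (0,0)$ if $\text{sat}(w^k(x), S, \bar{S}) = 0$. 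Hence, conditioned on $X = x$, the pair $(V_S, \bar{V}_{\bar{S}})$ determines $F$ iff $\text{sat}(w^k(x), S, \bar{S}) = 0$, and is the constant $(0,0)$ otherwise.

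Next, fix $e \in \mathcal{E}$ and let $x_1, x_2$ be the two $X$-values in its fiber; since $E$ is a function of $X$ and splits it into sets of size $2$, conditioning on $E = e$ makes $X$ uniform on $\{x_1, x_2\}$ with the above conditional law inside each branch. A direct count of the at most four elementary events in this conditional model yields the conditional law of $F$ given $(V_S, \bar{V}_{\bar{S}}) = (0,0)$ and $E = e$: it is the point mass at $0$ when $a_e = 0$, it is $\text{Bern}(\tfrac12)$ when $a_e = 2$, and it is $\text{Bern}(\tfrac13)$ when $a_e = 1$ (two of the three surviving events have $F = 0$). For the ``only if'' direction, suppose $\pSat_{\neq 1/2, S, \bar{S}}$ holds with witness $U$ but $a_e = 1$ for some $e$. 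Since $U$ is uniform over two values and independent of $(E, V_S, \bar{V}_{\bar{S}})$, inside the positive-probability event $\{E = e, V_S = 0, \bar{V}_{\bar{S}} = 0\}$ it is still uniform over two values; then $H(F \mid V_S, \bar{V}_{\bar{S}}, E, U) = 0$ forces $F$ to be constant on each of the two sub-events $\{U = u\}$, so $\mathbf{P}(F = 1 \mid V_S = 0, \bar{V}_{\bar{S}} = 0, E = e) \in \{0, \tfrac12, 1\}$, contradicting the value $\tfrac13$.

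For the ``if'' direction, assume $a_e \in \{0, 2\}$ for all $e$ and split $\mathcal{E} = \mathcal{E}_0 \cup \mathcal{E}_2$ accordingly. I would define $U$ by $U := F$ on $\{E \in \mathcal{E}_2\}$ and by an independent fair coin on $\{E \in \mathcal{E}_0\}$. Then $U \sim \text{Bern}(\tfrac12)$, so $\pUnif_2(U)$ holds; $H(F \mid V_S, \bar{V}_{\bar{S}}, E, U) = 0$ because $F = U$ on $\mathcal{E}_2$, while on $\mathcal{E}_0$ both $\text{sat}$-values are $0$, so $F$ equals $0$ exactly when $(V_S, \bar{V}_{\bar{S}}) = (0,0)$ and is already a function of $(V_S, \bar{V}_{\bar{S}})$; and $U$ is independent of $(E, V_S, \bar{V}_{\bar{S}})$ because on $\mathcal{E}_2$ the pair $(V_S, \bar{V}_{\bar{S}})$ is the constant $(0,0)$, so coupling $U$ to $F$ there cannot create dependence, and on $\mathcal{E}_0$ the coin is fresh — concretely $\mathbf{P}(U = u \mid V_S = v, \bar{V}_{\bar{S}} = \bar{v}, E = e) = \tfrac12$ for every positive-probability triple $(v, \bar{v}, e)$, which is exactly the required joint independence. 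This exhibits a witness, so $\pSat_{\neq 1/2, S, \bar{S}}$ is satisfied.

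I expect the main obstacle to be the independence check $U \perp (E, V_S, \bar{V}_{\bar{S}})$ in the ``if'' direction: $U$ is glued to $F$ on part of the space while $F$ is genuinely correlated with $(V_S, \bar{V}_{\bar{S}})$, so one must use crucially that $a_e = 2$ collapses $(V_S, \bar{V}_{\bar{S}})$ to a constant on that part. A secondary point needing care is carrying the $\pSw$-lemma's relabeling through consistently so that ``$F \sim \text{Bern}(\tfrac12)$ conditioned on $X = x$'' is genuinely deduced from $I(V^k, \bar{V}^k, F; X \mid W^k) = 0$ and $I(W^k; F) = 0$ rather than merely assumed.
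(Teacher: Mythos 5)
Your proof is correct and follows essentially the same route as the paper's: normalize to $V_i=(1-W_i)F$, $\bar V_i=W_iF$ via the $\pSw$ lemma, compute the conditional law of $F$ given $\{V_S=\vec0,\bar V_{\bar S}=\vec0,E=e\}$ (the paper phrases it as $\text{Bern}(\frac{a_e}{l+a_e})$, which for $l=2$ gives exactly your $0,\tfrac13,\tfrac12$), and observe that a $\text{Bern}(\tfrac12)$ variable $U$ cannot functionally determine a $\text{Bern}(\tfrac13)$ one. Your ``if'' direction is slightly more explicit than the paper's one-line ``When $a_e=0$ the value of $U$ is irrelevant; when $a_e=2$ set $U=F$'': you correctly note that one must use a fresh coin rather than $F$ on the $a_e=0$ part (since there $F$ is correlated with $(V_S,\bar V_{\bar S})$), and you verify the independence $I(U;E,V_S,\bar V_{\bar S})=0$ carefully by checking that on the $a_e=2$ part $(V_S,\bar V_{\bar S})$ is the constant $(0,0)$. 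This is a worthwhile clarification of a detail the paper leaves implicit, but it is a refinement of the same argument, not a different approach.
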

\begin{lemma}\label{lemma:cold2}
    If $\pCold(X, W^k, V^k, \bar{V}^k, F)$ is satisfied and $E$ splits $X$ into sets of size 2, then $\pSat_{\leq 1/2, S, \bar{S}}(E, W^k, V^k, \bar{V}^k, F)$
    is satisfied
    iff $a_e \leq 1$ for all $e \in \mathcal{E}$.
\end{lemma}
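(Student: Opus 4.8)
The plan is to exploit the WLOG normalization established for $\pSw$: we may assume $F\sim\text{Bern}(\tfrac12)$, each $W_i\sim\text{Bern}(\tfrac12)$, and $V_i=(1-W_i)F$, $\bar{V}_i=W_iF$. First I would record the auxiliary independence $I(E;F)=0$, so that $F\mid E=e\sim\text{Bern}(\tfrac12)$ for every $e$: from $I(W^k;F)=0$ (implied by $\pSw$) together with $H(W^k\mid X)=I(V^k,\bar{V}^k,F;X\mid W^k)=0$ (from $\pCold$) one gets $I(X;F)=0$, and $E$ is a function of $X$. Then, fixing $e\in\mathcal{E}$ with preimages $x_1,x_2$ (each of conditional probability $\tfrac12$) and labels $w^{(j)}=w^k(x_j)$, I would compute the conditional law of $\bigl((V_S,\bar{V}_{\bar{S}}),F\bigr)$ given $\{E=e\}$. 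When $\text{sat}(w^{(j)},S,\bar{S})=1$ we get $(V_S,\bar{V}_{\bar{S}})=\mathbf{0}$ regardless of $F$ (for $i\in S$, $W_i=1$ so $V_i=0$; for $i\in\bar{S}$, $W_i=0$ so $\bar{V}_i=0$); when $\text{sat}(w^{(j)},S,\bar{S})=0$, the tuple $(V_S,\bar{V}_{\bar{S}})$ equals $\mathbf{0}$ if $F=0$ and a fixed nonzero pattern $\pi_j$ if $F=1$, so $F$ is recovered. Hence: if $a_e=0$, then $F$ is a deterministic function of $(E,V_S,\bar{V}_{\bar{S}})$; if $a_e=2$, then given $\{E=e\}$ we have $(V_S,\bar{V}_{\bar{S}})=\mathbf{0}$ almost surely while $F\sim\text{Bern}(\tfrac12)$; and if $a_e=1$ (say $w^{(1)}$ satisfying, $w^{(2)}$ not), then $\{E=e,(V_S,\bar{V}_{\bar{S}})=\pi_2\}$ forces $F=1$, whereas on $\{E=e,(V_S,\bar{V}_{\bar{S}})=\mathbf{0}\}$, which splits into the three equiprobable atoms $\{X=x_1,F=0\}$, $\{X=x_1,F=1\}$, $\{X=x_2,F=0\}$, we have $F\sim\text{Bern}(\tfrac13)$.

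For the forward direction, suppose $\pSat_{\le 1/2,S,\bar{S}}$ holds via some $U$ and assume toward a contradiction that $a_e=2$ for some $e$. Since $I(U;E,V_S,\bar{V}_{\bar{S}})=0$ and $(V_S,\bar{V}_{\bar{S}})=\mathbf{0}$ almost surely on $\{E=e\}$, the law of $U$ given $\{E=e\}$ is still $\pUnif_3$; and $H(F\mid V_S,\bar{V}_{\bar{S}},E,U)=0$ forces $F=g(U)$ on $\{E=e\}$ for some $g$. Then $\mathbf{P}(F=1\mid E=e)=|g^{-1}(1)|/3\in\{0,\tfrac13,\tfrac23,1\}$, contradicting $\mathbf{P}(F=1\mid E=e)=\tfrac12$. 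Hence $a_e\le 1$ for all $e$.

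For the converse, assume $a_e\le 1$ everywhere; I would build $U$ by prescribing its conditional law given $(E,V_S,\bar{V}_{\bar{S}})$ and the atom of the sample space, making it $\pUnif_3$ conditionally on every value of $(E,V_S,\bar{V}_{\bar{S}})$ — this simultaneously yields $\pUnif_3(U)$ and $I(U;E,V_S,\bar{V}_{\bar{S}})=0$. On every branch where $F$ is already a function of $(E,V_S,\bar{V}_{\bar{S}})$ (all branches with $a_e=0$, and the $\pi_2$-branch when $a_e=1$), let $U$ be freshly uniform on $\{1,2,3\}$ independent of everything. On the branch $\{E=e,(V_S,\bar{V}_{\bar{S}})=\mathbf{0}\}$ with $a_e=1$, map the three equiprobable atoms bijectively onto $\{1,2,3\}$ so that the unique atom with $F=1$ maps to $3$; there $F=\mathbf{1}[U=3]$. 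Then $U\mid(E,V_S,\bar{V}_{\bar{S}})$ is uniform on $\{1,2,3\}$ in every branch, the CI constraints of $\pSat_{\le 1/2,S,\bar{S}}$ hold by construction, and adjoining $U$ leaves $\pCold$ intact.

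The routine verification that the three atoms really have conditional probability $\tfrac13$ each (it follows from $X\perp F\mid E$ and $X\mid E=e$ being uniform over two values) and that the extended joint law is consistent is straightforward; the one load-bearing point is the $a_e=1$ computation showing that the residual law of $F$ on the all-zero branch is exactly $\text{Bern}(\tfrac13)$, which is precisely why a $3$-valued uniform auxiliary variable is both necessary and sufficient here, in contrast to the $2$-valued one of Lemma~\ref{lemma:cold1}.
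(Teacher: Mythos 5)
Your proof is correct and follows essentially the same approach as the paper's. The paper proves Lemma~\ref{lemma:cold2} by citing the key computation from Lemma~\ref{lemma:cold1}, namely that $F\mid(V_S=\vec{0},\bar{V}_{\bar{S}}=\vec{0},E=e)\sim\text{Bern}(\tfrac{a_e}{2+a_e})$, and then observing that a ternary uniform $U$ can force $H(F\mid V_S,\bar{V}_{\bar{S}},E,U)=0$ exactly when $a_e\in\{0,1\}$ but not when $a_e=2$; your argument reproduces that computation explicitly (including the $I(X;F)=0$ step, the three equiprobable atoms for $a_e=1$, and the divisibility obstruction $\{0,\tfrac13,\tfrac23,1\}\not\ni\tfrac12$ for $a_e=2$), just in more detail than the paper, which leaves them implicit by saying the proof is analogous to Lemma~\ref{lemma:cold1}.
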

\begin{lemma}\label{lemma:cold3}
    If $\pCold(X, W^k, V^k, \bar{V}^k, F)$ is satisfied and $E$ splits $X$ into sets of size 4, then $\pSat_{\leq 3/4, S, \bar{S}}(E, W^k, V^k, \bar{V}^k, F)$
    is satisfied
    iff $a_e \leq 3$ for all $e \in \mathcal{E}$.
\end{lemma}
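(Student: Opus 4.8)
[Proof plan for Lemma~\ref{lemma:cold3} (and Lemmas~\ref{lemma:cold1}, \ref{lemma:cold2})]
The plan is to extract from $\pCold$ the few structural facts we need, reduce satisfaction of $\pSat_{\le 3/4, S, \bar{S}}$ to a divisibility condition on the single rational $\mathbf{P}(F = 0 \mid E = e, V_S, \bar{V}_{\bar{S}})$, and then observe that the constant $105$ has been chosen precisely so that this condition is equivalent to $a_e \le 3$. I would start by unpacking $\pCold$: it contains $\pCol'$ and hence $\pSw$, so the $\pSw$-normalization lemma lets us assume $V_i = (1 - W_i)F$ and $\bar{V}_i = W_i F$ for all $i$, with $F \sim \text{Bern}(\tfrac12)$ and $I(W^k; F) = 0$; the clauses $H(W^k \mid X) = 0$ and $I(V^k, \bar{V}^k, F; X \mid W^k) = 0$ give that $W^k$ is a function $w^k(\cdot)$ of $X$, that $(V^k, \bar{V}^k, F)$ is conditionally independent of $X$ given $W^k$, and, combining these, that $F$ is independent of $X$ and hence of $E$ and of $W^k$.

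The core step is computing the conditional law of $F$ given $(E, V_S, \bar{V}_{\bar{S}})$. Since each $V_i, \bar{V}_i$ lies in $\{0, F\}$, observing $(V_S, \bar{V}_{\bar{S}}) \ne 0$ forces $F = 1$, and $(V_S, \bar{V}_{\bar{S}}) = 0$ occurs exactly when $F = 0$ or $\text{sat}(W^k, S, \bar{S}) = 1$. As $X \mid E = e$ is uniform over the $l = 4$ points of its support, $a_e$ of which satisfy $\text{sat}(w^k(\cdot), S, \bar{S}) = 1$, and as $F \perp X$ given $E = e$, we obtain $\mathbf{P}((V_S, \bar{V}_{\bar{S}}) = 0 \mid E = e) = \tfrac12 + \tfrac12 \cdot \tfrac{a_e}{4}$ and $\mathbf{P}(F = 0,\, (V_S, \bar{V}_{\bar{S}}) = 0 \mid E = e) = \tfrac12$; dividing gives $\mathbf{P}(F = 0 \mid E = e, V_S, \bar{V}_{\bar{S}} = 0) = \tfrac{4}{4 + a_e}$, while $\mathbf{P}(F = 0 \mid E = e, V_S, \bar{V}_{\bar{S}} = v) = 0$ when $v \ne 0$.

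Next I would prove the characterization: $\pSat_{\le 3/4, S, \bar{S}}(E, W^k, V^k, \bar{V}^k, F)$ holds iff $\mathbf{P}(F = 0 \mid E = e, V_S, \bar{V}_{\bar{S}} = v)$ is an integer multiple of $\tfrac{1}{105}$ for every $(e, v)$ in the support. For ``only if'', a witness $U$ has $\pUnif_{105}(U)$, so $U$ is uniform over a $105$-element domain, $U \perp (E, V_S, \bar{V}_{\bar{S}})$, and $F = \phi(E, V_S, \bar{V}_{\bar{S}}, U)$ for some function $\phi$; fixing $(e, v)$ and summing $\mathbf{P}(U = u \mid E = e, V_S = v) = \tfrac{1}{105}$ over the $u$ with $\phi(e, v, u) = 0$ shows the conditional probability equals $\tfrac{1}{105}$ times an integer. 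For ``if'', I would build $U$: for each $(e, v)$ partition $\{1, \dots, 105\}$ into a block $A_{e,v}$ of size $105\,\mathbf{P}(F = 0 \mid E = e, V_S = v)$ and its complement, and let $U$ be conditionally uniform on $A_{e,v}$ given $\{E = e, V_S = v, F = 0\}$, conditionally uniform on the complement given $F = 1$, and conditionally independent of the other variables given $(E, V_S, \bar{V}_{\bar{S}}, F)$; a one-line computation shows $U \mid (E, V_S, \bar{V}_{\bar{S}})$ is then uniform over $\{1, \dots, 105\}$ (the two conditional laws mix with weights $\mathbf{P}(F = 0 \mid \cdot)$ and $\mathbf{P}(F = 1 \mid \cdot)$, and $|A_{e,v}| = 105\,\mathbf{P}(F = 0 \mid \cdot)$), so $U \perp (E, V_S, \bar{V}_{\bar{S}})$ and $\pUnif_{105}(U)$ hold, while $F$ is recovered from $(E, V_S, \bar{V}_{\bar{S}}, U)$ since $F = 0$ iff $U \in A_{e,v}$.

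Finally, combine the pieces: as $a_e \in \{0, 1, 2, 3, 4\}$, the condition ``$\tfrac{4}{4 + a_e}$ is a multiple of $\tfrac{1}{105}$ for all $e$'' says ``$4 + a_e$ divides $420$ for all $e$'', and since $420 = \mathrm{lcm}(4, 5, 6, 7)$ is divisible by each of $4, 5, 6, 7$ but $420 / 8 = 52.5$, this is exactly $a_e \le 3$ for all $e$, the claim. The same argument with $l = 2$ and $U$ over $2$ (resp.\ $3$) values gives Lemmas~\ref{lemma:cold1} and~\ref{lemma:cold2}, since there $\mathbf{P}(F = 0 \mid E = e, V_S, \bar{V}_{\bar{S}} = 0) = \tfrac{2}{2 + a_e}$, a multiple of $\tfrac12$ iff $a_e \ne 1$ and of $\tfrac13$ iff $a_e \le 1$. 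The step I expect to be trickiest is the conditional-independence bookkeeping from $\pCold$ -- justifying $F \perp X$ and that conditioning on $E$ preserves the deterministic identities $V_i = (1 - W_i)F$, $\bar{V}_i = W_i F$ -- together with stating cleanly why the existence of the auxiliary uniform $U$ is equivalent to the divisibility condition and nothing weaker.
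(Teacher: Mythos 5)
Your proof is correct and follows essentially the same route as the paper: compute $\mathbf{P}(F=0\mid E=e, V_S=\vec 0, \bar V_{\bar S}=\vec 0)=\tfrac{4}{4+a_e}$ (the paper phrases this as $F\sim\text{Bern}(\tfrac{a_e}{4+a_e})$), observe that a uniform $U$ over $105$ values independent of $(E,V_S,\bar V_{\bar S})$ and functionally determining $F$ exists precisely when this probability is a multiple of $\tfrac1{105}$, and check that $4,5,6,7$ divide $420$ but $8$ does not. You make the ``iff'' characterization of the existence of $U$ and the conditional-independence bookkeeping explicit where the paper is terse, but the decomposition and the key divisibility observation are identical.
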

\begin{proof}[Proof of Lemma~\ref{lemma:cold1}]
    Consider a fixed $e \in \mathcal{E}$. Denote by $\vec{0}$ a vector whose elements are all zero. Whenever $V_S \neq \vec{0}$ or $\bar{V}_{\bar{S}} \neq \vec{0}$, then $F$ must be 1
    and so the functional dependency $H(F | V_{S}, \bar{V}_{\bar{S}}, E, U) = 0$ is satisfied regardless of the value of $U$.
    For $V_S = \vec{0}$, $\bar{V}_{\bar{S}} = \vec{0}$ we have
    \begin{align*}
        & \mathbf{P}(V_S = \vec{0} \land \bar{V}_{\bar{S}} = \vec{0} \land F = 1 \land E = e) = \mathbf{P}(F = 1) \mathbf{P}(E = e \land \text{sat}(w^k(X), S, \bar{S}) = 1), \\
        & \mathbf{P}(V_S = \vec{0} \land \bar{V}_{\bar{S}} = \vec{0} \land F = 0 \land E = e) = \mathbf{P}(F = 0) \mathbf{P}(E = e), \\
        & \text{hence } \mathbf{P}(F = 1 | V_S = \vec{0} \land \bar{V}_{\bar{S}} = \vec{0} \land E = e) = \frac{\frac12 \cdot \frac{a_e}{l}}{\frac12 + \frac12 \cdot \frac{a_e}{l}} = \frac{a_e}{l + a_e}.
    \end{align*}
    These equalities are based on the fact that ${V_i = (1 - W_i)F}$.
    We conclude that $F|(V_S = \vec{0}, \bar{V}_{\bar{S}} = \vec{0}, E = e) \sim \text{Bern}(\frac{a_e}{l + a_e})$.

    When $a_e = 0$, the value of $F$ is determined only by whether any of $V_S, \bar{V}_{\bar{S}}$ is nonzero and the value of $U$ is irrelevant. When $a_e = 2$, we can set $U = F$. On the other hand,
    ${F|(V_S = \vec{0}, \bar{V}_{S} = \vec{0}, E=e)} \sim \text{Bern}(\frac13)$ when $a_e = 1$. Conditioned on $V_S = \vec{0}, \bar{V}_{S} = \vec{0}, E=e$, the value of
    $U$ functionally determines $F$ --- this is impossible with $U \sim \text{Bern}(\frac12), F \sim \text{Bern}(\frac13)$.
\end{proof}
\begin{proof}[Proof of Lemma~\ref{lemma:cold2}]
    The proof is analogous to the above. We have $H(F|V_S, \bar{V}_{\bar{S}}, E, U) = 0$ when $V_S \neq \vec{0}$ or $\bar{V}_{\bar{S}} \neq \vec{0}$
    and also $F|(V_S = \vec{0}, \bar{V}_{\bar{S}} = \vec{0}, E = e) \sim \text{Bern}(\frac{a_e}{l + a_e})$. We cannot have
    $U \sim \text{Unif}(\{1, 2, 3\})$ map to $\text{Bern}(\frac12)$ when $a_e = 2$. When $a_e = 0$, the value of $U$ is irrelevant
    and when $a_e = 1$, then $F|(V_S = \vec{0}, \bar{V}_{\bar{S}} = \vec{0}, E = e) \sim \text{Bern}(\frac{1}{3})$ and so
    we can satisfy the functional dependence $H(F|V_S, \bar{V}_{\bar{S}}, E, U) = 0$.
\end{proof}
\begin{proof}[Proof of Lemma~\ref{lemma:cold3}]
    Again, we have $F|(V_S = \vec{0}, \bar{V}_{\bar{S}} = \vec{0}, E = e) \sim \text{Bern}(\frac{a_e}{l + a_e})$, where $a_e$
    is the number of of values of $X$ in this set which satisfy ${\text{sat}(W^k, S, \bar{S}) = 1}$.
    Since 105 is divisible by 3, 5 and 7, it is possible for $U \sim \text{Unif}(\{1, \ldots, 105\})$ to map to $\text{Bern}(\frac{a_e}{l+a_e})$ whenever $a_e \in \{0, 1, 2, 3\}$, but not when $a_e = 4$.
\end{proof}

The next defined predicate is the following:
\begin{align*}
    \pCtori'(X^2, Y^2, Z, W^k, V^k, \bar{V}^k, F) & \ : \pTori'(X^2, Y^2, Z) \\
 & \land \pCold((X^2, Y^2, Z), W^k, V^k, \bar{V}^k, F),
\end{align*}
which simply implies applying labels (without any constraints) to the vertices of the tori. In the original predicate,
which uses $\pTori$ instead of $\pTori'$, there is no coordinate $Z$.
Clearly, this predicate is satisfiable in the sense that any collection of pairs of tori of even size which is labeled
in a manner that satisfies the requirement $W_i \sim \text{Bern}(\frac12)$ has a corresponding representation by $X^2, Y^2, Z, W^k, V^k, \bar{V}^k, F$.
In particular, this $W_i$ requirement is satisfied by any labeling in which any pair of corresponding vertices (those which
differ only in the $Z$ coordinate) has labels of the same value but opposite sign. This is because negating the sign of a label
corresponds to negating all of its bits.
For a set of labels $\mathcal{L} = \{0, \ldots, l-1\}$, we now label the vertices with labels from the set $\{0, \ldots, 4l - 1\}$
and so we set $k = 4l + 1$.
For any $i \in \{0, \ldots, l\}$, $j \in \{0, \ldots, 3\}$, we identify all four labels $4i+j$ with the original label $i$,
referring to any vertex whose label is $4i + j$ as a \emph{group $j$ vertex}. The group of a vertex is used to orient
it relative to its neighbors -- this is achieved by the following predicate:
\begin{alignat*}{3}
 & \mathrlap{ \pOtori'(X^2, Y^2, Z, W^k, V^k, \bar{V}^k, F) : } \\
 & \ \mathrlap{ \pCtori'(X^2, Y^2, Z, W^k, V^k, \bar{V}^k, F) } \\
 & \land \ \mathrlap{ \pSat_{\neq 1/2,  \{k\},  \varnothing}((X_1, X_2, Y_1, Z), W^k, V^k, \bar{V}^k, F) } \\
 & \land \ \mathrlap{ \pSat_{\neq 1/2,  \{k\},  \varnothing}((X_1, X_2, Y_2, Z), W^k, V^k, \bar{V}^k, F) } \\
 & \land \ \mathrlap{ \pSat_{\neq 1/2,  \{k\},  \varnothing}((X_1, Y_1, Y_2, Z), W^k, V^k, \bar{V}^k, F) } \\
 & \land \ \mathrlap{ \pSat_{\neq 1/2,  \{k\},  \varnothing}((X_2, Y_1, Y_2, Z), W^k, V^k, \bar{V}^k, F) } \\
 & \land \smash{\bigwedge_{j_1, j_2 \in J_1}} \big( && \pSat_{\le 1/2,  \varnothing,  \{1, \ldots, k\} \setminus\{j_1, j_2\}}((X_1, X_2, Y_1, Z), W^k, V^k, \bar{V}^k, F) \\
 & && \land \pSat_{\le 1/2,  \varnothing,  \{1, \ldots, k\} \setminus\{j_1, j_2\}}((X_1, X_2, Y_2, Z), W^k, V^k, \bar{V}^k, F) \\
 & && \land \pSat_{\le 1/2,  \{1, \ldots, k\} \setminus\{j_1, j_2\},  \varnothing}((X_1, X_2, Y_1, Z), W^k, V^k, \bar{V}^k, F) \\
 & && \land \pSat_{\le 1/2,  \{1, \ldots, k\} \setminus\{j_1, j_2\},  \varnothing}((X_1, X_2, Y_2, Z), W^k, V^k, \bar{V}^k, F)\big) \\
 & \land \smash{\bigwedge_{j_1, j_2 \in J_2}} \big( && \pSat_{\le 1/2,  \varnothing,  \{1, \ldots, k\} \setminus\{j_1, j_2\}}((X_1, Y_1, Y_2, Z), W^k, V^k, \bar{V}^k, F) \\
 & && \land \ \pSat_{\le 1/2,  \varnothing,  \{1, \ldots, k\} \setminus\{j_1, j_2\}}((X_2, Y_1, Y_2, Z), W^k, V^k, \bar{V}^k, F) \\
 & && \land \ \pSat_{\le 1/2,  \{1, \ldots, k\} \setminus\{j_1, j_2\},  \varnothing}((X_1, Y_1, Y_2, Z), W^k, V^k, \bar{V}^k, F) \\
 & && \land \ \pSat_{\le 1/2,  \{1, \ldots, k\} \setminus\{j_1, j_2\},  \varnothing}((X_2, Y_1, Y_2, Z), W^k, V^k, \bar{V}^k, F)\big) \\
 & \land \ \mathrlap{\pSat_{\leq 1/2, \{k\}, \varnothing}((X_1, X_2, Y_1, Y_2), W^k, V^k, \bar{V}^k, F)} \\
 & \land \ \mathrlap{\pSat_{\leq 1/2, \varnothing, \{k\}}((X_1, X_2, Y_1, Y_2), W^k, V^k, \bar{V}^k, F),}
\end{alignat*}
where
\begin{align*}
    J_1 & = \{(j_1, j_2) \in \{1, \ldots, k-1\}: \{j_1 \ \text{mod} \ 4,  j_2 \ \text{mod} \ 4\} \notin \{\{0, 1\}, \{2, 3\}\} \}, \\
    J_2 & = \{(j_1, j_2) \in \{1, \ldots, k-1\}: \{j_1 \ \text{mod} \ 4,  j_2 \ \text{mod} \ 4\} \notin \{\{1, 2\}, \{0, 3\}\} \}.
\end{align*}
The only difference between $\pOtori'$ and Li's original $\pOtori$ is the added variable $Z$ and the final two $\pSat_{\leq 1/2}$ predicates. We have the following fact:
\begin{lemma}\label{lemma:statements}
    If $\pOtori\; '(X^2, Y^2,Z , W^k, V^k, \bar{V}^k, F)$ is satisfied, then the following statements hold:
    \begin{enumerate}
        \item within each torus, all vertices' labels have the same sign; \label{otori:sign}
        \item any two vertices differing only in the $Z$ coordinate have opposite sign; \label{otori:3d}
        \item any pair of vertices connected by a vertical edge either has groups 1 and 0 or 2 and 3; \label{otori:vertical}
        \item any pair of vertices connected by a horizontal edge either has groups 1 and 2 or 3 and 0. \label{otori:horizontal}
    \end{enumerate}
\end{lemma}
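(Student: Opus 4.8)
The plan is to translate each of the four statements into a counting constraint on the small ``fibres'' cut out by a suitable splitting variable $E$, and then apply Lemmas~\ref{lemma:cold1} and~\ref{lemma:cold2}. Since $\pOtori'$ implies $\pCtori'$, we have $\pTori'(X^2,Y^2,Z)$ and $\pCold((X^2,Y^2,Z),W^k,V^k,\bar V^k,F)$; in particular $\pCol'$ (which sits inside $\pCold$) forces the label $w^k(x)\in T_k$ of every vertex $x$ (a positive-probability value of $(X^2,Y^2,Z)$), so each $x$ has a well-defined \emph{value} $j(x)\in\{1,\dots,k-1\}$ (the unique index $\neq k$ on which $w^k(x)$ disagrees with its $k$-th bit), \emph{sign} $w_k(x)\in\{0,1\}$, and \emph{group} $j(x)\bmod 4$. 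From $\pTori'$ one reads off, exactly as in Li's analysis of $\pTori$ but carrying the independent coordinate $Z$ along: (i) fixing $Z$ and any three of $X_1,X_2,Y_1,Y_2$ leaves the fourth uniform over $2$ values; (ii) fixing $X_1,X_2,Y_1,Y_2$ leaves $Z$ uniform over $2$ values; (iii) for each fixed value of $Z$ the vertices split into tori, each a Cartesian product of two even cycles and hence connected by the edges that change exactly one of $X_1,X_2,Y_1,Y_2$. These are the facts that make the relevant $E$'s split $(X^2,Y^2,Z)$ into fibres of the sizes required by Lemmas~\ref{lemma:cold1} and~\ref{lemma:cold2}.

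Statement~\ref{otori:sign} follows by applying Lemma~\ref{lemma:cold1} to the four conjuncts $\pSat_{\neq 1/2,\{k\},\varnothing}(E,\dots)$ of $\pOtori'$, with $E$ ranging over $(X_1,X_2,Y_1,Z)$, $(X_1,X_2,Y_2,Z)$, $(X_1,Y_1,Y_2,Z)$, $(X_2,Y_1,Y_2,Z)$: each such $E$ cuts $(X^2,Y^2,Z)$ into size-$2$ fibres whose two elements are the endpoints of a within-torus edge of the corresponding type, and since $\mathrm{sat}(w^k,\{k\},\varnothing)=w_k$ the conclusion ``$a_e\neq 1$'' says the two endpoints always have equal sign; connectedness of each torus then forces one sign per torus. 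Statement~\ref{otori:3d}, which is new to our construction, follows by applying Lemma~\ref{lemma:cold2} to the last two conjuncts $\pSat_{\le 1/2,\{k\},\varnothing}((X_1,X_2,Y_1,Y_2),\dots)$ and $\pSat_{\le 1/2,\varnothing,\{k\}}((X_1,X_2,Y_1,Y_2),\dots)$: here $E=(X_1,X_2,Y_1,Y_2)$ cuts out the $2$-element fibre consisting of the two $Z$-copies of a vertex, and since $\mathrm{sat}(\cdot,\{k\},\varnothing)=w_k$ while $\mathrm{sat}(\cdot,\varnothing,\{k\})=1-w_k$, the bounds ``$a_e\le 1$'' say at most one copy is negative and at most one positive, hence the two copies have opposite signs.

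For statements~\ref{otori:vertical} and~\ref{otori:horizontal} the key ingredient is the evaluation of $\mathrm{sat}$ on $T_k$: for $w^k\in T_k$ of value $j$ and sign $\sigma$, and for any $j_1,j_2\in\{1,\dots,k-1\}$, one checks (using $k\ge 4$, so that $\{1,\dots,k\}\setminus\{j_1,j_2\}$ has at least two elements, one of them $k$) that $\mathrm{sat}(w^k,\varnothing,\{1,\dots,k\}\setminus\{j_1,j_2\})=1$ exactly when $\sigma=0$ and $j\in\{j_1,j_2\}$, and $\mathrm{sat}(w^k,\{1,\dots,k\}\setminus\{j_1,j_2\},\varnothing)=1$ exactly when $\sigma=1$ and $j\in\{j_1,j_2\}$. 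Now take any vertical edge; it lies in a single torus, so by statement~\ref{otori:sign} its two endpoints share a sign $\sigma$, and the edge is the size-$2$ fibre of $E=(X_1,X_2,Y_1,Z)$ or $E=(X_1,X_2,Y_2,Z)$. If its endpoints had values $v,v'$ with $\{v\bmod 4,v'\bmod 4\}\notin\{\{0,1\},\{2,3\}\}$, then $(v,v')\in J_1$ (also when $v=v'$, since a singleton group-set is not one of the forbidden pairs), so $\pOtori'$ contains the conjunct $\pSat_{\le 1/2,\varnothing,\{1,\dots,k\}\setminus\{v,v'\}}(E,\dots)$ if $\sigma=0$ (resp.\ $\pSat_{\le 1/2,\{1,\dots,k\}\setminus\{v,v'\},\varnothing}(E,\dots)$ if $\sigma=1$), for the matching $E\in\{(X_1,X_2,Y_1,Z),(X_1,X_2,Y_2,Z)\}$; applying Lemma~\ref{lemma:cold2} gives $a_e\le 1$, but by the $\mathrm{sat}$ computation both endpoints contribute, so $a_e=2$ --- a contradiction. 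Hence the two groups form $\{0,1\}$ or $\{2,3\}$, proving statement~\ref{otori:vertical}; statement~\ref{otori:horizontal} is the same argument with $J_2$ in place of $J_1$, the forbidden group-pairs $\{1,2\},\{0,3\}$, and $E\in\{(X_1,Y_1,Y_2,Z),(X_2,Y_1,Y_2,Z)\}$, which cut out horizontal edges.

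I expect the main obstacle to be this last paragraph: one must match, for every admissible pair $(j_1,j_2)$, the correct $\pSat_{\le 1/2}$ variant (selected by the common sign $\sigma$ of the edge, which is precisely why statement~\ref{otori:sign} must be established first) with the correct splitting tuple $E$, and verify the two $\mathrm{sat}$-on-$T_k$ identities above; one must also check that each $E$ used genuinely splits $(X^2,Y^2,Z)$ into fibres of size $2$ as demanded by Lemmas~\ref{lemma:cold1} and~\ref{lemma:cold2}, which is exactly where the torus structure supplied by $\pTori'$ enters. By comparison, statements~\ref{otori:sign} and~\ref{otori:3d} amount to bookkeeping around connectivity of a product of two cycles.
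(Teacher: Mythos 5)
Your proof is correct and follows essentially the same route as the paper: apply Lemma~\ref{lemma:cold1} to the $\pSat_{\neq 1/2,\{k\},\varnothing}$ conjuncts and Lemma~\ref{lemma:cold2} to the $\pSat_{\le 1/2}$ conjuncts, using the identification of the relevant size-$2$ fibres with edges (resp.\ $Z$-copies of a vertex) supplied by $\pTori'$, together with the evaluation of $\mathrm{sat}$ on $T_k$ that recovers sign and value; your write-up is merely more explicit about which splitting tuple $E$ and which $\pSat$ variant to invoke in each case.
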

\begin{proof}
    Recall that fixing the variable $Z$ along with any three of the variables $X_1, X_2, Y_1, Y_2$ leaves two possible values for the remaining variable.
    These correspond to two vertices of an edge, as illustrated in Figure~\ref{fig:torus}.
    Therefore, the first four $\pSat_{\neq 1/2, \{ k \}, \varnothing}$ predicates state that for any edge $(u,v)$,
    the value of $w_k$ for $u$ and $v$ cannot differ, which implies exactly Point~\ref{otori:sign} above.
    Point~\ref{otori:3d} follows directly from the last two $\pSat_{\leq 1/2}$ predicates -- of the two vertices which
    differ only in the $Z$ coordinate, at most one can have $w_k = 0$ and at most one can have $w_k = 1$.

    For the next two points, note that for $j_1, j_2 \neq k$, we have $\text{sat}(w^k, \{1, \ldots, k\} \setminus \{j_1, j_2\}, \varnothing) = 1$
    iff $w^k$ represents one of the labels $\{-j_1, -j_2\}$ -- for positive labels, we have $w_k = 0$ and for negative
    labels other than $-j_1, -j_2$, we have $w_i = 0$ for some $i \notin \{j_1, j_2\}$. Analogously,
    $\text{sat}(w^k, \varnothing, \{1, \ldots, k\} \setminus \{j_1, j_2\}) = 1$ iff $w^k$ represents one of $\{+j_1, +j_2\}$.
    Thus, the four $\pSat_{\leq 1/2, \varnothing, \{ 1, \ldots, k \} \setminus \{j_1, j_2\}}$ predicates within the conjunction over ${j_1, j_2 \in J_1}$
    imply exactly Point~\ref{otori:vertical}, with the conjunction over ${j_1, j_2 \in J_2}$ implying Point~\ref{otori:horizontal}
    analogously.  Clearly, $\pOtori'$ is satisfiable -- an example torus (with coordinate $Z$ omitted) is shown in Figure~\ref{fig:torus}.
\end{proof}

\begin{figure}[t]
    \centering
    \begin{minipage}{.45\textwidth}
    \begin{tikzpicture}[
        every node/.append style={transform shape},
        dot/.style={outer sep=0pt,circle,minimum size=3pt,draw,
        append after command={\pgfextra \draw (\tikzlastnode) ; \endpgfextra}}]
            \matrix (mat) [matrix of nodes,
                         nodes={anchor=center},
                         column sep={0.4em},
                         row sep={0.4em}]{
                             \noWangTile{2.25em}{t00} & \noWangTile{2.25em}{t10} & \noWangTile{2.25em}{t20} & \noWangTile{2.25em}{t30} & \noWangTile{2.25em}{t40} & \noWangTile{2.25em}{t00} \\
                             \noWangTile{2.25em}{t01} & \tileXA{2.25em}{t11} & \tileXB{2.25em}{t21} & \tileXC{2.25em}{t31} & \tileXE{2.25em}{t41} & \noWangTile{2.25em}{t51} \\
                             \noWangTile{2.25em}{t02} & \tileXA{2.25em}{t12} & \tileXD{2.25em}{t22} & \tileXA{2.25em}{t32} & \tileXD{2.25em}{t42} & \noWangTile{2.25em}{t52} \\
                             \noWangTile{2.25em}{t03} & \tileXC{2.25em}{t13} & \tileXE{2.25em}{t23} & \tileXA{2.25em}{t33} & \tileXB{2.25em}{t43} & \noWangTile{2.25em}{t53} \\
                             \noWangTile{2.25em}{t04} & \tileXA{2.25em}{t14} & \tileXD{2.25em}{t24} & \tileXA{2.25em}{t34} & \tileXD{2.25em}{t44} & \noWangTile{2.25em}{t54} \\
                             \noWangTile{2.25em}{t05} & \noWangTile{2.25em}{t15} & \noWangTile{2.25em}{t25} & \noWangTile{2.25em}{t35} & \noWangTile{2.25em}{t45} & \noWangTile{2.25em}{t55} \\
            };
            \foreach \x in {1,...,4}{
                \foreach \y [evaluate=\y as \yy using {int(\y+1)}] in {1,...,4}{
                    \node[draw, fill=t\y\x.rightcolor, circle, minimum size=1em] (x-\y-\x) at ($(t\y\x)!0.5!(t\yy\x)$) {};
                }
            }
            \foreach \x in {1,...,4}{
                \node[draw, fill=t1\x.leftcolor, circle, minimum size=1em] (x-0-\x) at ($(t0\x)!0.5!(t1\x)$) {};
            }

            \foreach \x [evaluate=\x as \xx using {int(\x+1)}] in {0,...,4}{
                \foreach \y [evaluate=\y as \yy using {int(\y+1)}] in {1,...,4}{
                    \node[draw, fill=t\y\x.bottomcolor, circle, minimum size=1em] (y-\y-\x) at ($(t\y\x)!0.5!(t\y\xx)$) {};
                }
            }
            \foreach \y [evaluate=\y as \yy using {int(\y+1)}] in {1,...,4}{
                \node[draw, fill=t\y1.topcolor, circle, minimum size=1em] (y-\y-0) at ($(t\y0)!0.5!(t\y1)$) {};
            }

            \foreach \x [evaluate=\x as \xx using {int(\x+1)}] in {0,...,3}{
                \foreach \y [evaluate=\y as \yy using {int(\y+1)}] in {0,...,3}{
                    \draw (y-\yy-\xx) edge[dashed] (x-\yy-\xx);
                    \draw (x-\yy-\xx) edge[dashed] (y-\yy-\x);
                    \draw (y-\yy-\x) edge[dashed] (x-\y-\xx);
                    \draw (x-\y-\xx) edge[dashed] (y-\yy-\xx);
                }
            }

            \node[fill=black, fill opacity=0.2, rotate fit=45, inner sep=0.5em, fit=(y-1-2) (y-4-2) (x-2-1) (x-2-4)] {};

            \pgfresetboundingbox
            \path[use as bounding box] ($(t11.north west)+(-1.5em,1.5em)$) rectangle ($(t44.south east)+(1.5em,-1.5em)$);
    \end{tikzpicture}
    \end{minipage}
    \begin{minipage}{.45\textwidth}
    \scalebox{0.4}{
    \begin{tikzpicture}[
        every node/.append style={transform shape},
        dot/.style={outer sep=0pt,circle,minimum size=3pt,draw,
        append after command={\pgfextra \draw (\tikzlastnode) ; \endpgfextra}}]
            \matrix (mat) [matrix of nodes,
                         nodes={anchor=center},
                         column sep={0em},
                         ampersand replacement=\&,
                         row sep={0em}]{
                             \hiddenTileXA{3.5em}{tt11} \& \noWangTile{3.5em}{f11} \& \hiddenTileXB{3.5em}{tt21} \& \noWangTile{3.5em}{f21} \& \hiddenTileXC{3.5em}{tt31} \& \noWangTile{3.5em}{f31} \& \hiddenTileXE{3.5em}{tt41} \\
                             \noWangTile{3.5em}{bf11} \& \noWangTile{3.5em}{bf21} \& \noWangTile{3.5em}{bf31} \& \noWangTile{3.5em}{bf41} \& \noWangTile{3.5em}{bf51} \& \noWangTile{3.5em}{bf61} \& \noWangTile{3.5em}{bf71} \\
                             \hiddenTileXA{3.5em}{tt12} \& \noWangTile{3.5em}{f12} \& \hiddenTileXD{3.5em}{tt22} \& \noWangTile{3.5em}{f22} \& \hiddenTileXA{3.5em}{tt32} \& \noWangTile{3.5em}{f32} \& \hiddenTileXD{3.5em}{tt42} \\
                             \noWangTile{3.5em}{bf12} \& \noWangTile{3.5em}{bf22} \& \noWangTile{3.5em}{bf32} \& \noWangTile{3.5em}{bf42} \& \noWangTile{3.5em}{bf52} \& \noWangTile{3.5em}{bf62} \& \noWangTile{3.5em}{bf72} \\
                             \hiddenTileXC{3.5em}{tt13} \& \noWangTile{3.5em}{f13} \& \hiddenTileXE{3.5em}{tt23} \& \noWangTile{3.5em}{f23} \& \hiddenTileXA{3.5em}{tt33} \& \noWangTile{3.5em}{f33} \& \hiddenTileXB{3.5em}{tt43} \\
                             \noWangTile{3.5em}{bf13} \& \noWangTile{3.5em}{bf23} \& \noWangTile{3.5em}{bf33} \& \noWangTile{3.5em}{bf43} \& \noWangTile{3.5em}{bf53} \& \noWangTile{3.5em}{bf63} \& \noWangTile{3.5em}{bf73} \\
                             \hiddenTileXA{3.5em}{tt14} \& \noWangTile{3.5em}{f14} \& \hiddenTileXD{3.5em}{tt24} \& \noWangTile{3.5em}{f24} \& \hiddenTileXA{3.5em}{tt34} \& \noWangTile{3.5em}{f34} \& \hiddenTileXD{3.5em}{tt44} \\
            };
            \foreach \x in {1,...,4}{
                \foreach \y in {1,...,4}{
                    \begin{scope}[shift={(tt\x\y.north west)}]
                        \wangTileClone{tt\x\y}{2em}{tt\x\y-tl}
                        \draw[-{Stealth[scale=3.0]}, color=white, draw opacity=0] (tt\x\y-tl.center) -- (tt\x\y-tl.north west);
                    \end{scope}
                    \begin{scope}[shift={(tt\x\y.south west)}]
                        \wangTileClone{tt\x\y}{2em}{tt\x\y-bl}
                        \draw[-{Stealth[scale=3.0]}, color=white, draw opacity=0] (tt\x\y-bl.center) -- (tt\x\y-bl.south west);
                    \end{scope}
                    \begin{scope}[shift={(tt\x\y.north east)}]
                        \wangTileClone{tt\x\y}{2em}{tt\x\y-tr}
                        \draw[-{Stealth[scale=3.0]}, color=white, draw opacity=0] (tt\x\y-tr.center) -- (tt\x\y-tr.north east);
                    \end{scope}
                    \begin{scope}[shift={(tt\x\y.south east)}]
                        \wangTileClone{tt\x\y}{2em}{tt\x\y-br}
                        \draw[-{Stealth[scale=3.0]}, color=white, draw opacity=0] (tt\x\y-br.center) -- (tt\x\y-br.south east);
                    \end{scope}
                    \draw (tt\x\y-tl) edge[dashed] (tt\x\y-tr);
                    \draw (tt\x\y-tl) edge[dashed] (tt\x\y-bl);
                    \draw (tt\x\y-bl) edge[dashed] (tt\x\y-br);
                    \draw (tt\x\y-tr) edge[dashed] (tt\x\y-br);
                }
            }
            \foreach \x in {1,...,4}{
                \foreach \y [evaluate=\y as \yy using {int(\y+1)}] in {1,...,3}{
                    \draw (tt\x\y-bl) edge[dashed] (tt\x\yy-tl);
                    \draw (tt\x\y-br) edge[dashed] (tt\x\yy-tr);
                }

                \draw[dashed] (tt\x4-bl) to[out=-90, in=-90, out looseness=0.5, in looseness=3.7] ($0.5*(tt\x2-bl)+0.5*(tt\x3-tl)+(0.8,0)$) to[out=90, in=90, in looseness=0.5, out looseness=3.7] (tt\x1-tl);
                \draw[dashed] (tt\x4-br) to[out=-90, in=-90, out looseness=0.5, in looseness=3.7] ($0.5*(tt\x2-br)+0.5*(tt\x3-tr)+(0.8,0)$) to[out=90, in=90, in looseness=0.5, out looseness=3.7] (tt\x1-tr);

            }
            \foreach \y in {1,...,4}{
                \foreach \x [evaluate=\x as \xx using {int(\x+1)}] in {1,...,3}{
                    \draw (tt\x\y-tr) edge[dashed] (tt\xx\y-tl);
                    \draw (tt\x\y-br) edge[dashed] (tt\xx\y-bl);
                }
                \draw[dashed] (tt4\y-br) to[out=0, in=0, out looseness=0.5, in looseness=3.7] ($0.5*(tt2\y-bl)+0.5*(tt3\y-bl)+(0,0.8)$) to[out=180, in=180, in looseness=0.5, out looseness=3.7] (tt1\y-bl);
                \draw[dashed] (tt4\y-tr) to[out=0, in=0, out looseness=0.5, in looseness=3.7] ($0.5*(tt2\y-tl)+0.5*(tt3\y-tl)+(0,0.8)$) to[out=180, in=180, in looseness=0.5, out looseness=3.7] (tt1\y-tl);
            }

            \pgfresetboundingbox
            \path[use as bounding box] ($(tt11-tl.north west)+(-3em,3em)$) rectangle ($(tt44-br.south east)+(3em,-3em)$);
    \end{tikzpicture}
    }
    \end{minipage}
    \caption{
        Left: Li's representation of a $4 \times 4$ torus coloring and the $4 \times 4$ tiling that it yields. The conversion
        from 16 vertices to 32 gives the tiling an additional diagonal periodicity. Combined with the fact that this does
        not work for non-square $n \times m$ tilings (without arranging them into a $\text{lcm}(n, m) \times \text{lcm}(n, m)$ square),
        this proves problematic for restricting the size of a periodic tiling.
        Right: the corresponding $8 \times 8$ torus labeling in our representation. Each tile has 4 labels, one for each corner of the tile (indicated by the arrow in this case). The tiles do not need to be Wang tiles.
    }
    \label{fig:conv1}
\end{figure}
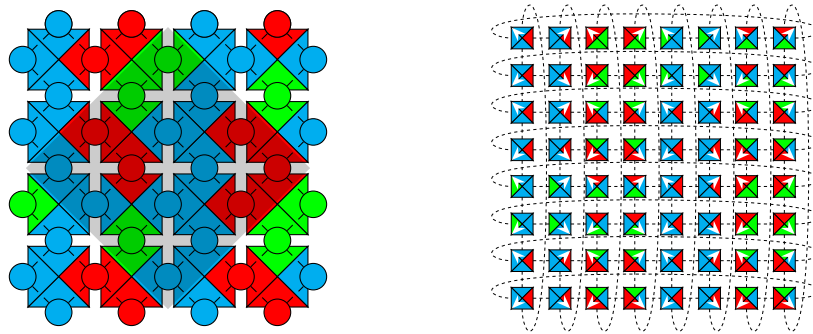
\subsection{Tiles}\label{section:tiles}
Li denotes the set of four possible vertices when $(Z, X_i, Y_j)$ is fixed as a \emph{type $ij$ face} for any $i, j \in \{1, 2\}$ -- e. g. fixing $(Z, X_1, Y_2)$ yields a type 12 face.
The relation between face types and vertex groups is illustrated in Figure~\ref{fig:torus}.

Li's construction utilizes the fact that the four corner-neighbors of any type 11 face are type 22 faces and vice versa.
Because the original construction makes use of a Wang tiling system, these connecting corner vertices can represent the edge colors of the touching
tiles. An example is shown in Figure~\ref{fig:conv1}. The diagonal nature of this tiling proves problematic when we wish to restrict
its size, thus we simply use faces (of type 11) to directly represent tiles, which also allows us to use the general form
of tiling systems. Figure~\ref{fig:conv1} illustrates a torus labeling in our representation.
For given tiling system $\mathcal{D} = (D, H, V)$, we define the following sets:
\begin{align*}
    \mathcal{D}_{11} & = \{(4t+1, 4t+2, 4t+3, 4t) : t \in D\}, \\
    \mathcal{D}_{12} & = \{(4v, 4v+3, 4u+2, 4u+1) : (u, v) \in V\}, \\
    \mathcal{D}_{21} & = \{(4u+2, 4u+1, 4v, 4v+3) : (u, v) \in H\},
\end{align*}
and for each $i \in \{11, 12, 21\}$,
\[
    I_{i} = \{ j_1, \ldots, j_4 \in \{1, \ldots, k-1\} : j_i \ \text{mod} \ 4 \ \text{distinct}, \{j_1, \ldots, j_4\} \notin \mathcal{D}_{i}\}.
\]
The final predicate to enforce that the coloring represents a valid tiling (of size at most $m \times n$) is defined as follows:
\begin{alignat*}{3}
    \pTtori'_{\mathcal{D}} & \ :  \mathrlap{\exists X^2 \leq m, Y^2 \leq n, Z \leq_i 2, W^k \leq_i 2, V^k \leq_i 2, \bar{V}^k \leq_i 2, F \leq_i 2 :} \\
                                & \mathrlap{\pOtori'(X^2, Y^2, Z, W^k, V^k, \bar{V}^k, F)} \\
                            & \land \smash{\bigwedge_{j_1, \ldots, j_4 \in I_{11}}} && \big( \pSat_{\le 3/4, \varnothing, \{1, \ldots, k\} \setminus\{j_1, \ldots, j_4\}}((X_1, Y_1, Z), W^k, V^k, \bar{V}^k, F) \\
                               & && \quad \land \mathrlap{\pSat_{\le 3/4, \{1, \ldots, k\} \setminus \{j_1, \ldots, j_4\}, \varnothing}((X_1, Y_1, Z), W^k, V^k, \bar{V}^k, F)\big)} \\
                               & \land \smash{\bigwedge_{j_1, \ldots, j_4 \in I_{12}}} && \big( \pSat_{\le 3/4, \varnothing, \{1, \ldots, k\} \setminus \{j_1, \ldots, j_4\}}((X_1, Y_2, Z), W^k, V^k, \bar{V}^k, F) \\
                               & && \quad \land \pSat_{\le 3/4, \{1, \ldots, k\} \setminus \{j_1, \ldots, j_4\}, \varnothing}((X_1, Y_2, Z), W^k, V^k, \bar{V}^k, F)\big) \\
                               & \land \smash{\bigwedge_{j_1, \ldots, j_4 \in I_{21}}} && \big( \pSat_{\le 3/4, \varnothing, \{1, \ldots, k\} \setminus \{j_1, \ldots, j_4\}}((X_2, Y_1, Z), W^k, V^k, \bar{V}^k, F) \\
                               & && \quad \land \pSat_{\le 3/4, \{1, \ldots, k\} \setminus \{j_1, \ldots, j_4\}, \varnothing}((X_2, Y_1, Z), W^k, V^k, \bar{V}^k, F)\big).
\end{alignat*}
\begin{lemma}
    $\pTtori'_{\mathcal{D}}$ is satisfied iff $\mathcal{D}$ admits a periodic tiling of size at most $m \times n$.
\end{lemma}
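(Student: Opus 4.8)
The plan is to set up, in each direction, a dictionary between a satisfying assignment of the variables $X^2, Y^2, Z, W^k, V^k, \bar{V}^k, F$ and a periodic tiling, built on two facts already available. First, Lemma~\ref{lemma:statements}: $\pOtori'$ forces the support of $(X^2,Y^2)$ to be a disjoint union of tori doubled by $Z$, with a fixed sign per torus, opposite signs on the two $Z$-copies, and the group pattern of Figure~\ref{fig:torus} along the torus edges. Second, Lemma~\ref{lemma:cold3}, which reads $\pSat_{\le 3/4, S, \bar{S}}(E,\ldots)$ as the assertion $a_e\le 3$ for every size-$4$ block $e$ cut out of the vertex variable by fixing $E$, where, for the sets $S,\bar{S}$ used in $\pTtori'_{\mathcal D}$, the count $a_e$ is exactly the number of vertices of $e$ whose label is same-signed and has value in the distinguished index set $\{j_1,\ldots,j_4\}$. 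Since $\pTtori'_{\mathcal D}$ contains $\pOtori'$, hence $\pCtori'$, hence $\pCold((X^2,Y^2,Z),\ldots)$, Lemma~\ref{lemma:cold3} applies with $E=(X_1,Y_1,Z)$ cutting out the type-$11$ faces, $E=(X_1,Y_2,Z)$ the type-$12$ faces, and $E=(X_2,Y_1,Z)$ the type-$21$ faces.

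For the direction from satisfiability to a tiling, fix one torus; by Lemma~\ref{lemma:statements} its $X$-cycle has some even length $2p$ and its $Y$-cycle length $2q$, so $p\le\text{card}(X_1)\le m$ and $q\le\text{card}(Y_1)\le n$, all its vertices carry the same sign, and each type-$11$ face is a $4$-cycle containing one vertex of each group. I would first argue that the four labels on any type-$11$ face form a set of the shape $\{4t, 4t+1, 4t+2, 4t+3\}$: otherwise those four distinct-group labels lie in $I_{11}$, and the matching $\pSat_{\le 3/4}$ conjunct of $\pTtori'_{\mathcal D}$ is violated, since $a_e=4$ on that face. Setting $g(i,j)$ equal to this $t$ at the $(i,j)$-th type-$11$ face defines $g:\mathbb Z_p\times\mathbb Z_q\to D$; the $I_{12}$- and $I_{21}$-conjuncts force every type-$12$ (resp.\ $21$) face to carry a set from $\mathcal D_{12}$ (resp.\ $\mathcal D_{21}$), i.e.\ to record a pair in $V$ (resp.\ $H$), and since such a face shares two already-labelled vertices with each of the two type-$11$ faces it lies between, the recorded adjacency must be the one between the tiles those neighbours name. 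As the torus wraps around, the periodicity adjacencies are included automatically, so $g$ is a periodic tiling of size $p\times q\le m\times n$.

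For the converse, given a periodic tiling $g$ of size $p\times q$ with $p\le m$, $q\le n$, I would build the distribution explicitly: let $(X_1,X_2)$ have characteristic graph one $2p$-cycle, $(Y_1,Y_2)$ one $2q$-cycle, $Z\sim\text{Unif}(\{0,1\})$, all three independent, realising the doubled torus by the satisfiability already noted for $\pCtori'$ and $\pOtori'$; then label the type-$11$ face at tile position $(i,j)$ by the $\mathcal D_{11}$-set of $g(i,j)$, and the type-$12$ and type-$21$ faces by the $\mathcal D_{12}$- and $\mathcal D_{21}$-sets of the corresponding adjacencies of $g$, all labels taken positive in the $Z=0$ copy and bitwise-complemented (hence of equal value and opposite sign) in the $Z=1$ copy. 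Because $g$ is a valid tiling the labels that overlapping faces prescribe agree on shared vertices, so this is a well-defined assignment of one label per vertex; the complementation across $Z$ makes every $W_i\sim\text{Bern}(\tfrac12)$, which is what the satisfiability of $\pCold$ and $\pOtori'$ needs, and the resulting sign and group pattern is the one of Lemma~\ref{lemma:statements}. Finally, since every type-$11/12/21$ face carries a legal $\mathcal D_{11}/\mathcal D_{12}/\mathcal D_{21}$-set, its label set differs from every set in $I_{11}/I_{12}/I_{21}$, so at most three same-signed vertices of any face meet any $\text{sat}$ condition occurring in $\pTtori'_{\mathcal D}$; all the inequalities $a_e\le 3$ hold and $\pTtori'_{\mathcal D}$ is satisfied.

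The step I expect to be the main obstacle is the combinatorial bookkeeping in the middle of each direction: aligning the orientation data of Lemma~\ref{lemma:statements} with the explicit definitions of $\mathcal D_{11},\mathcal D_{12},\mathcal D_{21}$ and $I_{11},I_{12},I_{21}$ --- which group sits in which corner of each face type, which two corners a type-$12$ or type-$21$ face shares with each adjacent type-$11$ face, and how signs and $\text{sat}$-conditions correspond to the two torus directions --- so that ``consistency on shared vertices forces the tiles to agree'' is genuinely correct in the forward direction and the hand-built labelling is both well-defined and compatible with every $\pSat$ conjunct in the backward direction. Everything else --- the passage from $\pSat_{\le 3/4}$ to $a_e\le 3$, the identification of which vertices a $\text{sat}$ condition counts, and the choice of auxiliary variables making the sub-predicates hold once each $W_i\sim\text{Bern}(\tfrac12)$ --- is routine given the lemmas already proved.
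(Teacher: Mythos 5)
Your proposal takes essentially the same approach as the paper's proof: both directions are handled via Lemma~\ref{lemma:statements} and the $\pSat_{\le 3/4}$ lemma, reading the three conjunctions in $\pTtori'_{\mathcal D}$ as forbidding type-$11$, $12$, $21$ faces from carrying any label set outside $\mathcal D_{11}$, $\mathcal D_{12}$, $\mathcal D_{21}$, and building the converse distribution as a pair of positive/negative-signed tori; the paper's own proof is considerably terser, asserting the face-to-tile and face-to-glue correspondence and the size bound $p\times q\le m\times n$ without spelling out the bookkeeping you flag as the main obstacle, and deferring the converse to the earlier satisfiability remarks about $\pCtori'$ and $\pOtori'$.
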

\begin{proof}
    All implicit bounds follow from previously defined predicates used within $\pTtori'_{\mathcal{D}}$: $\pTori'$ for
    $Z$, $\pSw$ for $W^k$, and $\pFlip$ for $V^k$, $\bar{V}^k$, $F$. The bounds of $m, n$ for $X^2, Y^2$ imply that the tori can take any even size up to $2m \times 2n$.

    For the ``only if'' direction, note that the three conjunctions, similarly as in the predicate $\pOtori$, forbid faces of type 11, 12, 21 from being not of
    the form in $\mathcal{D}_{11}, \mathcal{D}_{12}, \mathcal{D}_{21}$, respectively. Clearly, the set $\mathcal{D}_{11}$
    consists of exactly those tiles (represented as type 11 faces) which are in $D$. Similarly, $\mathcal{D}_{12}, \mathcal{D}_{21}$
    consist of all those ``glue'' type 12 and 21 faces which are allowed by $H, V$ respectively. Therefore, in a distribution satisfying $\pTtori'_{\mathcal{D}}$, the type 11 faces represent tiles and neighboring tiles respect the constraints $H$ and $V$. Therefore, each torus
    corresponds to a periodic tiling by $\mathcal{D}$. Since the tori are of size at most $2m \times 2n$, the tiling is of size at most $m \times n$.

    For the ``if'' direction, for any given tiling by $\mathcal{D}$ of size at most $m \times n$, 
    we create a pair of corresponding tori, labeled such that one represents the positive version of this tiling and the
    other the negative version. The satisfiability arguments show that this can be done for any given tiling.
\end{proof}
\subsection{Final construction} \label{section:final}
Once fully expanded, the $\pTtori'_{\mathcal{D}}$ predicate is of the form (omitting the bounds for clarity)
\[
    \exists B, X^2, Y^2, W^k, V^k, \bar{V}^k, F, \ldots : \Big( B \sim \text{Bern}(1/2) \land \bigwedge\nolimits_i (I(A_i ; B_i | C_i) = 0) \Big),
\]
where $A_i, B_i, C_i$ are tuples of the quantified variables. Its negation can be equivalently rewritten:
    \begin{align*}
        & \neg \exists B, \ldots : \big( B \sim \text{Bern}(1/2) \land \bigwedge\nolimits_i (I(A_i ; B_i | C_i) = 0) \big) \\
        \Leftrightarrow \ & \forall B, \ldots : \big( B \not \sim \text{Bern}(1/2) \lor \neg \bigwedge\nolimits_i (I(A_i ; B_i | C_i) = 0) \big) \\
        \Leftrightarrow \ & \forall B, \ldots : \big( \big( \bigwedge\nolimits_i (I(A_i ; B_i | C_i) = 0) \big) \Rightarrow B \not \sim \text{Bern}(1/2) \big) \\
        \Leftrightarrow \ & \forall B, \ldots : \big( \big( \pUnif(B) \land |B| \leq 2 \land \bigwedge\nolimits_i (I(A_i ; B_i | C_i) = 0) \big) \Rightarrow H(B) = 0 \big).
    \end{align*}
The last equivalence holds because a variable $B$ with $|B| \leq 2$ and $\pUnif(B)$ can
either be uniform over one value and have entropy 0 or uniform over two values and have entropy 1.
This final form is a valid CI implication with (partial) cardinality bounds. In the bounded case, the established
cardinality bounds are preserved.

\subsection*{Enforcing the usage of a designated tile}
We extend Li's above construction to enforce that a given tile $t$ be used in the tiling.
Recall from Lemma~\ref{cor:sat} that for any $w^k \in \{0, 1\}^k$,
\[
    H(F|V_{\{i : w_i = 1\}}, \bar{V}_{\{i : w_i = 0\}}, W^k) = \mathbf{P}(W^k = w^k).
\]
Furthermore, variable $F$ is constrained by the predicate $\pUnif_2(F)$. It can be equivalently restated as $\pUnif_{=}(F, B)$, where $\pUnif_{=}$ is defined by Li as follows:
\[
    \pUnif_{=}(Y, Z) : \exists U^3 \leq_i K_Y : \pTriple(Y, U_1, U_2) \land \pTriple(Z, U_1, U_3).
\]
Clearly, $\pUnif_{=}(F, B)$ holds iff $F$ and $B$ are both uniform with the same cardinality and hence $\pUnif_2(F)$ can be replaced by $\pUnif_{=}(F, B)$. Therefore, if the (conditional) entropy of $F$ is nonzero, then the entropy of $B$ must also be nonzero and so we must
have $B \sim \text{Bern}(\frac12)$.

Given a designated tile $t$, let $w^k \in T_k$ be the value of $W^k$ corresponding to label $t$ (without loss of
generality, of vertex group 0 and positive sign) and $S = \{i : w_i = 1\}, \bar{S} = \{i : w_i = 0\}$.
The modified implication is as follows:
\[
    \forall B, \ldots : \big( \big( \pUnif(B) \land |B| \leq 2 \land \bigwedge\nolimits_i (I(A_i ; B_i | C_i) = 0) \big) \Rightarrow H(F|V_S, \bar{V}_{\bar{S}}, W^k) = 0 \big).
\]
The counterexamples of this implication are exactly those labelings which use the tile $t$. Altogether, this chapter
has shown the following theorem:
\begin{theoremSeparate} \label{thm:repr}
    For any given tiling system $\mathcal{D}$ along with tile $t$ and natural numbers $m, n$, there exists a bounded CI implication which
    holds iff $\mathcal{D}$ does not admit a periodic tiling of size at most $m \times n$ which makes use of tile $t$. Moreover, the implication can
    be computed in time polynomial with regard to the size of the tiling system, while the bounds can be computed in time
    polynomial w.r.t. to the size of $m, n$.
\end{theoremSeparate}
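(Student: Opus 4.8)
The plan is to read Theorem~\ref{thm:repr} as the bookkeeping that closes the second part of the reduction: the required CI implication is obtained from (the existential closure of) the predicate $\pTtori'_{\mathcal{D}}$ by isolating its built-in $B\sim\text{Bern}(1/2)$ conjunct, negating, and then exchanging the resulting consequent for one that additionally pins down the use of the tile $t$. The characterisation lemma for $\pTtori'_{\mathcal{D}}$ already gives that it is satisfiable exactly when $\mathcal{D}$ admits a periodic tiling of size at most $m\times n$; the only genuinely new work is to show that this consequent swap restricts the counterexamples of the implication to precisely those satisfying assignments whose encoded tiling uses $t$, and then to verify the two polynomiality claims.

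For the tile-usage part I would fix, as in Section~\ref{section:tiles}, the string $w^k\in T_k$ that encodes the label $t$ taken with vertex group $0$ and positive sign, and set $S=\{i:w_i=1\}$, $\bar S=\{i:w_i=0\}$. By Lemma~\ref{cor:sat}, the CI statement $H(F\mid V_{S},\bar V_{\bar S},W^k)=0$ is equivalent, in any model of $\pSw$ and hence of $\pTtori'_{\mathcal{D}}$, to $\mathbf{P}(W^k=w^k)=0$. So I must argue, for a distribution satisfying $\pTtori'_{\mathcal{D}}$, that $\mathbf{P}(W^k=w^k)>0$ holds iff the associated tiling places $t$ somewhere: the group-$0$ vertices are exactly the group-$0$ corners of type $11$ faces, and by the $\mathcal{D}_{11}$-constraint inside $\pTtori'_{\mathcal{D}}$ each such face encodes a tile of $D$, with the value and sign of its group-$0$ label reading off that tile and its orientation (Figure~\ref{fig:torus}); since every label lies in $T_k$, the $\pSw$/$\pCol'$ decoding applies, so $w^k$ can only be realised at a group-$0$ corner of a type $11$ face carrying $t$. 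Conversely, a tiling that uses $t$ is turned, by the satisfiability arguments recorded next to $\pCtori'$, $\pOtori'$ and $\pTtori'_{\mathcal{D}}$, into a distribution in which such a face is present, the positive/group-$0$ normalisation being realisable because a corresponding pair of tori carries the positive and negative versions of the tiling. To make this a CI implication I would follow Section~\ref{section:final} verbatim: expand $\pTtori'_{\mathcal{D}}$, restate $\pUnif_2(F)$ as $\pUnif_{=}(F,B)$ so a conjunct $B\sim\text{Bern}(1/2)$ becomes explicit, negate to obtain $\forall B,\ldots\;(\,\bigwedge_i(I(A_i;B_i\mid C_i)=0)\Rightarrow B\not\sim\text{Bern}(1/2)\,)$, use that $B\not\sim\text{Bern}(1/2)$ is equivalent to $H(B)=0$ under the antecedent's $\pUnif(B)\wedge|B|\le 2$, and finally replace the consequent $H(B)=0$ by $H(F\mid V_S,\bar V_{\bar S},W^k)=0$. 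The implication so obtained holds iff no model of $\pTtori'_{\mathcal{D}}$ has $\mathbf{P}(W^k=w^k)>0$, i.e. iff $\mathcal{D}$ has no periodic tiling of size at most $m\times n$ using $t$.

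For the complexity claims: every quantified variable of the final formula carries either an explicit bound equal to one of the inputs ($X^2\le m$, $Y^2\le n$) or an implicit bound that is a fixed constant, obtained by tracing the implicit quantifications through $\pTori'$ (giving $Z\le 2$), through $\pSw$ and $\pFlip$ (giving $W^k,V^k,\bar V^k,F\le 2$, the internal $U\le 4$, $Z^2\le 3$), through the three $\pSat$ variants ($\le 2$, $\le 3$, $\le 105$), through $\pUnif$ ($\le_i K_X$, constant here) and through $\pUnif_{=}$ ($B\le 2$); hence the vector of cardinality bounds is a list of constants together with two copies of $m$ and $n$, computable in time polynomial in the bit-lengths of $m$ and $n$. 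For the size of the formula, the one ingredient of Li's construction that was exponential in $k$, namely $\pCol$, has already been replaced by the polynomial-size $\pCol'$; with $k=4|D|+1$ the remaining conjunctions range over $O(k^2)$ pairs $j_1<j_2$ and $O(k^4)$ quadruples in $I_{11},I_{12},I_{21}$ (equivalently over the subsets $\mathcal{D}_{11},\mathcal{D}_{12},\mathcal{D}_{21}$ of sizes $O(|D|)$, $O(|V|)$, $O(|H|)$), and each occurrence of a $\pSat$ predicate unfolds to a constant-size block of CI statements, so the whole implication has size polynomial in $|\mathcal{D}|$ and is produced in polynomial time, independently of $m$ and $n$.

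I expect the main obstacle to be the faithfulness argument in the second paragraph — reconciling the vertex-group and sign conventions of Figure~\ref{fig:torus} with the $\mathcal{D}_{11}$ encoding so as to be certain that $\mathbf{P}(W^k=w^k)>0$ picks out exactly ``tile $t$ occurs'', and, for the converse direction, checking that the chosen group-$0$/positive-sign representative of $t$ is indeed realised by some model of $\pTtori'_{\mathcal{D}}$ rather than being excluded by the orientation and $\pOtori'$ constraints. Everything else is either quoted from the lemmas above or a routine count of the formula's size and of the constant bounds introduced by each auxiliary predicate.
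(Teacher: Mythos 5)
Your proposal matches the paper's proof step for step: take the $\pTtori'_{\mathcal{D}}$ characterisation lemma, replace $\pUnif_2(F)$ by $\pUnif_{=}(F,B)$, negate to get a CI implication with consequent $H(B)=0$, then swap in $H(F\mid V_S,\bar V_{\bar S},W^k)=0$ which by Lemma~\ref{cor:sat} is zero exactly when $\mathbf P(W^k=w^k)=0$, and trace the constant cardinality bounds and polynomial formula size through the auxiliary predicates. The faithfulness concern you flag at the end is resolved by Lemma~\ref{lemma:statements} together with the satisfiability argument for $\pTtori'_{\mathcal{D}}$: each pair of tori carries one positive- and one negative-signed copy of the tiling, every type $11$ face has one corner of each group, and the $\mathcal{D}_{11}$ constraint forces the group-$0$ corner of any such face to carry label $4t$ for a tile $t\in D$, so a group-$0$, positive-sign $w^k$ for $t$ occurs with positive probability if and only if $t$ is used.
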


Theorem~\ref{thm:repr} gives exactly a polynomial-time many-one reduction from \textsc{Periodic Bounded Tiling} to the
complement of \textsc{Bounded CI Implication}, in particular because $m, n$ and $K$ are encoded in the same manner. In the case of \textsc{Constant-bounded CI Implication}, the above argument does not work since we have constant bounds larger than 2
as well as bounds whose value depends on the input. However, any variable $X$ with cardinality bound $2^j$ can be replaced by the tuple $(X_1, \ldots, X_j)$, where $X_i$
has cardinality bound 2 for each $i \in \{1, \ldots, j\}$. These are clearly equivalent, since each $X_i$ can correspond
to the $i$-th bit of $X$. More generally, a variable with cardinality bound $l$ can be replaced by $(X_1, \ldots, X_{\lceil \log l \rceil})$,
with each $X_i$'s cardinality bounded by 2 and the additional requirement $\pUnif'_k((X_1, \ldots, X_{\lceil \log l \rceil}))$.
Here $\pUnif'_k(Y)$ is a modification of the $\pUnif_k$ predicate such that it enforces $Y$ being uniform with $|Y| \leq k$. The construction of both of these follows closely that of Li and is given in detail in Section~\ref{section:unif}.
Note that the resulting predicate can be large, but this is only important when the bound is not constant -- in our case these are only the two bounds $X^2 \leq m, Y^2 \leq n$.
To avoid this issue, we reduce from \textsc{Power-of-two Periodic Bounded Tiling} -- then $X_i, Y_i$ (for $i \in \{1, 2\}$)
have bounds $2^m, 2^n$ respectively, while the remaining variables have constant bounds.
Replacing $X_1, X_2, Y_1, Y_2$ by tuples of binary variables as shown above, we obtain a CI implication of size
$N \cdot O(m + n)$, where $N$ is the size of the original implication. The number of random variables grows similarly.
The newly created bounds are all constant, and the reduction takes time polynomial with regard to the input size.
The values of the remaining constant bounds are known and therefore each such variable can be converted in constant time.
Together, these two results yield Theorem~\ref{thm:hard}.

\section{Construction of $\pUnif_k$ predicates} \label{section:unif}
We now show how $\pUnif_k$ predicates are constructed, assuming that we have a single random variable $B \sim \text{Bern}(\frac12)$.
Section~\ref{section:final} shows how this assumption is fulfilled once we convert from a predicate to a CI implication.
We annotate Li's original construction by stating the implicit cardinality bounds.

The first defined predicate is the following:
\[
    \pUnif_{=}(Y, Z) : \exists U^3 \leq_i K_Y : \pTriple(Y, U_1, U_2) \land \pTriple(Z, U_1, U_3).
\]

\begin{lemma}
    $\pUnif_{=}(Y, Z)$ is satisfied iff $Y$ and $Z$ are both uniform and have the same cardinality.
\end{lemma}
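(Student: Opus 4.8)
The plan is to prove both implications using Lemma~\ref{lemma:triple}; the forward direction is then immediate, while the backward direction requires an explicit construction of the auxiliary variables $U_1,U_2,U_3$ jointly distributed with $Y$ and $Z$.

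For the ``only if'' direction, suppose $\pUnif_{=}(Y,Z)$ holds, witnessed by variables $U_1,U_2,U_3$. Applying Lemma~\ref{lemma:triple} to $\pTriple(Y,U_1,U_2)$ shows that $Y$ and $U_1$ are uniform with $\text{card}(Y)=\text{card}(U_1)$, and applying it to $\pTriple(Z,U_1,U_3)$ shows that $Z$ and $U_1$ are uniform with $\text{card}(Z)=\text{card}(U_1)$. Chaining these equalities gives that $Y$ and $Z$ are both uniform and $\text{card}(Y)=\text{card}(U_1)=\text{card}(Z)$, as claimed. The same computation also confirms that the implicit bound $\leq_i K_Y$ on $U_1,U_2,U_3$ is indeed automatically met, since each $U_i$ has cardinality $\text{card}(Y)\leq K_Y$.

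For the ``if'' direction, assume $Y$ and $Z$ are uniform with a common cardinality $n$, and relabel their (equal-size) supports by $\mathbb{Z}_n$, so that $Y,Z\sim\text{Unif}(\mathbb{Z}_n)$ on whatever joint distribution they carry. I would enlarge the probability space by a fresh variable $U_1\sim\text{Unif}(\mathbb{Z}_n)$ chosen independent of the pair $(Y,Z)$, and set $U_2=Y-U_1\pmod n$ and $U_3=Z-U_1\pmod n$. To verify $\pTriple(Y,U_1,U_2)$: the three functional dependencies hold because $Y=U_1+U_2$, $U_1=Y-U_2$, and $U_2=Y-U_1$ by definition; and $I(Y;U_1)=I(Y;U_2)=I(U_1;U_2)=0$ each reduce to the observation that, after conditioning on one of the three variables, the other two are still uniform over $\mathbb{Z}_n$ (using that $U_1$ is uniform and independent of $Y$, and that a uniform variable over a finite group stays uniform after translation). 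The identical argument with $Z$ in place of $Y$ establishes $\pTriple(Z,U_1,U_3)$, so $\pUnif_{=}(Y,Z)$ is satisfied.

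The only mildly delicate point is the bookkeeping of the pairwise-independence conditions inside each triple in the backward direction; since all three components of a triple are affine functions over $\mathbb{Z}_n$ of two independent uniform variables, each such check collapses to the translation-invariance of the uniform distribution, so no step presents a genuine obstacle.
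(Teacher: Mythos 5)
Your proof is correct and matches the paper's argument essentially step for step: the ``only if'' direction is an immediate double application of Lemma~\ref{lemma:triple}, and the ``if'' direction uses the same construction of an independent uniform $U_1$ with $U_2, U_3$ defined by modular arithmetic (the paper uses $U_2 = Y + U_1 \bmod k$ rather than your $U_2 = Y - U_1 \bmod n$, but these are interchangeable). You spell out the verification of the $\pTriple$ conditions in slightly more detail than the paper does, but the approach is identical.
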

\begin{proof}
    The ``only if'' direction follows directly from Lemma~\ref{lemma:triple}: $Y$ and $U$ must be uniform with the same
    cardinality as $U_1$. For the ``if'' direction, we let $U_1$ be independent of $(Y, Z)$ and uniform with the same
    cardinality as $Y$ and $Z$. Then we can always construct $U_2$ to satisfy $\pTriple$, for instance by labeling $Y, U_1, U_2$
    with values from $\{1, \ldots, k\}$ where $k$ is the cardinality of $Y$ and setting $U_2 = Y + U_1 \text{ mod } k$.
    The same can be done for $U_3$.
\end{proof}
The following two predicates are defined for multiplication of cardinalities:
\begin{align*}
    \pProd(Y^l, G) : \ & \exists Z_1 \leq_i K_{Y_1}, \ldots, Z_l \leq_i K_{Y_l}, U \leq_i \textstyle{\prod_{i = 1}^l K_{Y_i}} : \\
                            & \pUnif_{=}(G, U) \land H(U | Z^l) = H(Z^l | U) = 0 \\
                            & \land \bigwedge_i ( \pUnif_{=}(Y_i, Z_i) \land I(Z_i; Z^{i-1}) = 0 ), \\
    \pPow_l(Y, G) : \ & \pProd((\underbrace{(Y, \ldots, Y)}_{l}, G).
\end{align*}
\begin{lemma}
    $\pProd((Y_1, \ldots, Y_l), G)$ is satisfied iff all of $Y_1, \ldots, Y_l, G$ are uniform and $|G| = \prod_{i = 1}^l |Y_i|$.
    Consequently, $\pPow_l(Y, G)$ is satisfied iff $Y$ and $G$ are uniform and $|G| = |Y|^l$.
\end{lemma}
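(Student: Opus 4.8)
The plan is to prove the equivalence for $\pProd$ in both directions and then read off the claim for $\pPow_l$ as the special case $Y_1 = \dots = Y_l = Y$.

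For the ``only if'' direction I would assume $\pProd((Y_1, \ldots, Y_l), G)$ holds and unpack the conjuncts one by one. Each clause $\pUnif_{=}(Y_i, Z_i)$ forces, by the lemma on $\pUnif_{=}$, that $Y_i$ and $Z_i$ are uniform with $|Y_i| = |Z_i|$, and similarly $\pUnif_{=}(G, U)$ forces $G$ and $U$ uniform with $|G| = |U|$; in particular $|Z_i| = |Y_i| \le K_{Y_i}$, confirming the implicit bound on $Z_i$. The clauses $I(Z_i; Z^{i-1}) = 0$ for all $i$ give, via the chain rule $H(Z^l) = \sum_i H(Z_i \mid Z^{i-1}) = \sum_i H(Z_i)$, that $Z_1, \ldots, Z_l$ are mutually independent; hence the support of $Z^l$ is the product of the supports of the $Z_i$, so $Z^l$ is uniform and $|Z^l| = \prod_i |Z_i| = \prod_i |Y_i|$. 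Finally $H(U \mid Z^l) = 0$ makes $U$ a function of $Z^l$ while $H(Z^l \mid U) = 0$ makes $Z^l$ a function of $U$, so these two maps are mutually inverse bijections between the supports and $|U| = |Z^l|$. Combining, $|G| = |U| = \prod_i |Y_i|$ (which also confirms $|U| \le \prod_i K_{Y_i}$), and all of $Y_1, \ldots, Y_l, G$ are uniform.

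For the ``if'' direction, assume $Y_1, \ldots, Y_l, G$ are uniform with $|G| = \prod_i |Y_i|$, and exhibit a joint distribution realising the predicate. I would take $Z_1, \ldots, Z_l$ mutually independent, each $Z_i$ uniform on a set of size $|Y_i|$ and independent of $(Y_1, \ldots, Y_l, G)$, so $Z^l$ is uniform on a product set of size $\prod_i |Y_i| = |G|$; then fix a bijection $\phi$ from that set onto the support of $G$ and put $U = \phi(Z^l)$. Now $I(Z_i; Z^{i-1}) = 0$ and $H(U \mid Z^l) = H(Z^l \mid U) = 0$ hold by construction, $Y_i$ and $Z_i$ are both uniform of the same cardinality so $\pUnif_{=}(Y_i, Z_i)$ is satisfiable by the ``if'' part of that lemma, and $G, U$ are both uniform of the same cardinality so $\pUnif_{=}(G, U)$ holds as well; thus $\pProd((Y_1, \ldots, Y_l), G)$ is satisfied. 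The claim for $\pPow_l$ is then immediate, since $\pPow_l(Y, G)$ is $\pProd$ applied to $l$ copies of $Y$, giving $|G| = \prod_{i=1}^l |Y| = |Y|^l$. The only mildly delicate points are the two bookkeeping steps in the ``only if'' direction --- passing from the nested independences $I(Z_i; Z^{i-1}) = 0$ to full mutual independence of the $Z_i$ (hence $|Z^l| = \prod_i |Z_i|$), and from the pair of functional-dependence clauses to a genuine bijection of supports; the rest is a direct application of the $\pUnif_{=}$ lemma and Lemma~\ref{lemma:triple}.
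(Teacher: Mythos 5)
Your proof is correct and follows essentially the same approach as the paper's, just spelled out in more detail; the only minor difference is that in the ``if'' direction you build $Z^l$ independently of $(Y^l, G)$ and set $U = \phi(Z^l)$, whereas the paper takes $U$ to be a copy of $G$ and derives $Z^l$ from it, but both constructions satisfy the predicate equally well.
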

\begin{proof}
    For the ``only if'' direction, we have $|Z^l| = \prod_{i = 1}^l |Y_i|$ since $I(Z_i; Z^{i-1}) = 0$
    and $|Z^l| = |U|$ from the two-way functional dependency between $U$ and $Z^l$.
    For the ``if'' direction, we can take $U$ to be a copy of $G$ and then construct $Z^l$ to satisfy the predicate.
\end{proof}
For comparing cardinalities, we have:
\begin{align*}
    \pGesqrt(Y,G) : \ & \exists W \leq_i K_G, U \leq_i K_G, V \leq_i K_Y, Z \leq_i K_Y: \\
                           & \pUnif_{=}(Y, Z) \land \pUnif(W) \land I(W; Z) = 0 \\
                         & \land \pUnif_{=}(G, U) \land H(U | Z, W) = H(Z, W | U) = 0 \\
                         & \land \pUnif_{=}(Z, V) \land H(U | Z, V) = 0, \\
    \pLe(Y, Z) : \ & \exists U \leq_i (K_Y \cdot K_Z) : \pProd((Y, Z), U) \land \pGesqrt(Z, U).
\end{align*}
\begin{lemma}
    $\pGesqrt(Y, G)$ is satisfied iff $Y$ and $G$ are uniform, $|Y|$ divides $|G|$ and $|Y| \geq \sqrt{|G|}$.
\end{lemma}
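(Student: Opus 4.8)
The plan is to prove the two implications separately. The forward (``only if'') direction is just bookkeeping of cardinalities along the functional dependencies that the predicate forces; the backward (``if'') direction is an explicit construction of the witness variables $W,U,V,Z$.

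\textbf{Only if.} Suppose $\pGesqrt(Y,G)$ holds, witnessed by $W,U,V,Z$ together with the auxiliary variables hidden inside the three $\pUnif_{=}$ subpredicates. From $\pUnif_{=}(Y,Z)$ and $\pUnif_{=}(G,U)$ we get that $Y,Z,G,U$ are all uniform with $|Z|=|Y|$ and $|U|=|G|$, and from $\pUnif(W)$ together with $I(W;Z)=0$ that $W$ is uniform and independent of $Z$. The conjunct $H(U\mid Z,W)=H(Z,W\mid U)=0$ gives a bijection between the support of $U$ and the support of $(Z,W)$; since $Z\perp W$ the latter has size $|Z|\cdot|W|=|Y|\cdot|W|$, so $|G|=|U|=|Y|\cdot|W|$, whence $|Y|$ divides $|G|$. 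Next, $\pUnif_{=}(Z,V)$ gives $|V|=|Z|=|Y|$, while $H(U\mid Z,V)=0$ says $U$ is a function of $(Z,V)$, so $|U|\le|Y|\cdot|V|=|Y|^2$, i.e. $|G|\le|Y|^2$ and $|Y|\ge\sqrt{|G|}$. In passing this also confirms the implicit bounds, since $|Z|=|V|=|Y|\le K_Y$ and $|W|\le|U|=|G|\le K_G$.

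\textbf{If.} Assume $Y,G$ are uniform with $a:=|Y|$ dividing $b:=|G|$ and $a^2\ge b$; set $c:=b/a$, so $1\le c\le a$. Extend the joint distribution as follows: let $Z$ be uniform on $\{0,\dots,a-1\}$ and $W$ uniform on $\{0,\dots,c-1\}$ with $Z\perp W$; put $U:=cZ+W$, a bijective image of $(Z,W)$, hence uniform on $\{0,\dots,b-1\}$; and put $V:=(W+Z)\bmod a$. The one point requiring thought is $V$: conditioned on each value of $W$, the variable $(W+Z)\bmod a$ is uniform on $\{0,\dots,a-1\}$ because $Z$ is, so $V$ is uniform on $\{0,\dots,a-1\}$; and from $(Z,V)$ one recovers $W=(V-Z)\bmod a$ --- which really is $W$, not merely $W$ reduced modulo $a$, precisely because $c\le a$ --- and therefore also $U$, giving $H(U\mid Z,V)=0$. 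The remaining conjuncts are immediate: $I(W;Z)=0$ by construction, $H(U\mid Z,W)=H(Z,W\mid U)=0$ since $U\leftrightarrow(Z,W)$ is a bijection, $\pUnif(W)$ holds, and each of $\pUnif_{=}(Y,Z)$, $\pUnif_{=}(G,U)$, $\pUnif_{=}(Z,V)$ holds because in the relevant pair both variables are uniform of equal cardinality ($a$, $b$, and $a$ respectively); their internal witnesses are adjoined exactly as in the proof of the $\pUnif_{=}$ lemma, taking the shared variable independent of everything built so far, which leaves all previously established equalities intact.

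\textbf{Main obstacle.} The backward construction is the crux, and within it the design of $V$: one needs a variable of cardinality exactly $|Y|$ that, jointly with $Z$, still determines $W$ (whose cardinality $c$ need not divide $a$), while keeping the three $\pUnif_{=}$ constraints simultaneously satisfiable in a single joint distribution. The modular shift $V=(W+Z)\bmod a$ resolves this, and the hypothesis $a^2\ge b$ enters exactly to force $c\le a$ so that reducing $W$ modulo $a$ is lossless.
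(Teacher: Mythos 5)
Your proof is correct and follows essentially the same approach as the paper: the same cardinality bookkeeping in the ``only if'' direction, and the same modular-shift witness construction for the ``if'' direction, with the key observation that $|G|/|Y|\le|Y|$ makes the reduction modulo $|Y|$ lossless. Your write-up is if anything slightly more careful --- the paper's displayed formula $V=U_1+U_2\bmod l$ appears to be a typo for $\bmod\,k$, which your version of the construction avoids.
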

\begin{proof}
    Denote $|Y| = k, |G| = l$.
    For the ``only if'' direction, we have $|V| = |Z| = k$, thus $|(Z, V)| \leq k^2$.
    Since $|U| = l$ and $H(U|Z, V) = 0$, we obtain $l \leq k^2$.
    From $W$ being uniform and independent of $Z$ and the two-way functional dependency between $U$ and $(Z, W)$, we have
    $l = |W| \cdot k$ and hence $k$ divides $l$.
    For the ``if'' direction, we take $U = (U_1, U_2) \sim \text{Unif}(\{1, \ldots, k\} \times \{1, \ldots, \frac{l}{k}\})$
    and $Z = U_1, V = U_1 + U_2 \text{ mod } l$. Then $Z, V \sim \text{Unif}(\{1, \ldots, k\})$ and $H(U|Z, V) = 0$.
    As the constraints on $W$ can be easily satisfied, the entire predicate is satisfied.
\end{proof}
\begin{lemma}
    $\pLe(Y, Z)$ is satisfied iff $Y$ and $Z$ are uniform and $|Y| \leq |Z|$.
\end{lemma}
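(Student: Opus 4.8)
The plan is to read off the cardinality relation directly from the two component predicates, relying only on the lemmas for $\pProd$ and $\pGesqrt$ proved above. For the ``only if'' direction, suppose $\pLe(Y, Z)$ holds, so there is a variable $U$ with $\pProd((Y, Z), U)$ and $\pGesqrt(Z, U)$. By the $\pProd$ lemma, $Y$, $Z$, $U$ are uniform with $|U| = |Y| \cdot |Z|$; by the $\pGesqrt$ lemma, $|Z|$ divides $|U|$ and $|Z| \ge \sqrt{|U|}$. Substituting $|U| = |Y| \cdot |Z|$ into the inequality and squaring gives $|Z|^2 \ge |Y| \cdot |Z|$, hence $|Z| \ge |Y|$. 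Together with the uniformity of $Y$ and $Z$ this is exactly the asserted condition.

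For the ``if'' direction, assume $Y$ and $Z$ are uniform with $|Y| \le |Z|$, and set $a = |Y|$, $b = |Z|$. I would introduce a fresh variable $U$ that is uniform over a set of size $ab$ and, say, independent of $(Y, Z)$; since $ab \le K_Y \cdot K_Z$, the cardinality bound on $U$ in the definition of $\pLe$ is respected. It remains to exhibit joint distributions realizing $\pProd((Y, Z), U)$ and $\pGesqrt(Z, U)$ with these same $Y$, $Z$, $U$. The ``if'' part of the $\pProd$ lemma applies because $|U| = ab = |Y| \cdot |Z|$ and all three variables are uniform; the ``if'' part of the $\pGesqrt$ lemma applies because $|Z| = b$ divides $|U| = ab$ and $b \ge \sqrt{ab}$ (equivalently $b \ge a$). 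Fusing the two witnessing distributions — their auxiliary variables are disjoint after the implicit renaming of conflicting names — yields a joint distribution witnessing $\pLe(Y, Z)$.

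The one point that needs care is that the realizations supplied by the ``if'' directions of the $\pProd$ and $\pGesqrt$ lemmas must be mutually compatible, since they share the variables $Y$, $Z$, and $U$ and so cannot be taken in isolation. This is not a genuine obstacle: in both lemmas every constraint linking a named argument to the internal witnesses passes through $\pUnif_{=}$ or through functional-dependence and independence conditions on the internal variables, so the joint distribution of the named arguments themselves is unconstrained beyond marginal uniformity and the cardinality identities. Hence one may fix $Y$, $Z$, $U$ once (uniform, with $|U| = ab$) and then choose the internal witnesses of each predicate separately, and the verification of all implicit cardinality bounds is inherited from the earlier lemmas.
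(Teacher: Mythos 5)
Your proof is correct and follows essentially the same route as the paper: combine the $\pProd$ lemma (giving $|U|=|Y|\cdot|Z|$) with the $\pGesqrt$ lemma ($|Z|$ divides $|U|$ and $|Z|^2\ge|U|$), which reduces to $|Z|\ge|Y|$. The paper's proof is a one-line statement of this algebraic equivalence; you additionally spell out the ``if''-direction compatibility of the two witnessing distributions, which is a reasonable elaboration of the implicit appeal to closure of AEIPs under conjunction.
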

\begin{proof}
    The uniformity of $Y$ and $Z$ is clear. We have $\pLe(Y, Z)$ iff $|Z|$ divides ${|Z| \cdot |Y|}$
    and $|Z|^2 \geq |Z| \cdot |Y|$.
\end{proof}
Finally, the predicate $\pUnif_k$ can be constructed. For any $i \in \mathbb{N}, i > 1$, let $p_i, q_i \in \mathbb{N} \setminus \{ 0 \}$ be such that
$\log_2(i - 1) < \frac{p_i}{q_i} < \log_2(i)$ --- such $p_i, q_i$ clearly always exist. Note that since we will only use this for constant $i$, the size of $p_i, q_i$ is not a concern. The predicate $\pUnif_k$ is defined as follows, where $B$ is the given variable distributed as $\text{Bern}(\frac12)$:
\begin{align*}
    \pUnif_k(Y) : \ & \exists U \leq_i 2, V_1 \leq_i 2^{p_k}, V_2 \leq_i 2^{p_{k+1}}, W_1 \leq_i (K_Y)^{q_k}, W_2 \leq_i (K_Y)^{q_{k+1}} : \\
        & \pUnif_{=}(U, B) \land \pUnif(Y) \\
        & \land \pPow_{p_k}(U, V_1) \land \pPow_{q_k}(Y, W_1) \land \pLe(V_1, W_1) \\
        & \land \pPow_{p_{k+1}}(U, V_2) \land \pPow_{q_{k+1}}(Y, W_2) \land \pLe(W_2, V_2).
\end{align*}
\begin{lemma}
    $\pUnif_k(Y)$ is satisfied iff $Y$ is uniform and $|Y| = k$.
\end{lemma}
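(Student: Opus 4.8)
The plan is to reduce the equivalence to elementary arithmetic between cardinalities, using the characterizations already proved for $\pUnif$, $\pUnif_{=}$, $\pPow_l$ and $\pLe$, the defining inequalities $\log_2(i-1) < p_i/q_i < \log_2 i$, and the standing assumption that $B \sim \text{Bern}(\tfrac12)$, so that $|B| = 2$. Both directions then come down to extracting (or arranging) the two inequalities $2^{p_k} \le |Y|^{q_k}$ and $|Y|^{q_{k+1}} \le 2^{p_{k+1}}$, which pin $|Y|$ to the single value $k$.

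For the ``only if'' direction I would suppose $\pUnif_k(Y)$ holds, witnessed by $U, V_1, V_2, W_1, W_2$ together with the auxiliary variables internal to the sub-predicates. The conjunct $\pUnif(Y)$ forces $Y$ uniform; write $k' = |Y|$. The conjunct $\pUnif_{=}(U, B)$ forces $|U| = |B| = 2$. Then $\pPow_{p_k}(U, V_1)$ and $\pPow_{q_k}(Y, W_1)$ give $|V_1| = 2^{p_k}$ and $|W_1| = (k')^{q_k}$, while $\pLe(V_1, W_1)$ gives $|V_1| \le |W_1|$; hence $2^{p_k} \le (k')^{q_k}$, so $k' \ge 2^{p_k/q_k} > 2^{\log_2(k-1)} = k-1$, i.e. $k' \ge k$. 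Symmetrically $\pPow_{p_{k+1}}(U, V_2)$, $\pPow_{q_{k+1}}(Y, W_2)$ and $\pLe(W_2, V_2)$ yield $(k')^{q_{k+1}} \le 2^{p_{k+1}}$, so $k' \le 2^{p_{k+1}/q_{k+1}} < 2^{\log_2(k+1)} = k+1$, i.e. $k' \le k$. Combining gives $k' = k$.

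For the ``if'' direction I would suppose $Y$ uniform with $|Y| = k$. From $p_k/q_k < \log_2 k$ we get $2^{p_k} < k^{q_k}$, and from $p_{k+1}/q_{k+1} > \log_2 k$ we get $k^{q_{k+1}} < 2^{p_{k+1}}$. I would then assemble a witnessing joint distribution: take $U$ a fresh uniform variable over two values (so $\pUnif_{=}(U, B)$ holds by the ``if'' part of its lemma), and use the satisfiability parts of the $\pProd$/$\pPow$ lemmas to adjoin $V_1, W_1, V_2, W_2$ uniform over $2^{p_k}$, $k^{q_k}$, $2^{p_{k+1}}$, $k^{q_{k+1}}$ values respectively, each with its own internal witnesses, all arranged mutually independently except where a sub-predicate forces a functional dependence. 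The two $\pLe$ clauses then hold precisely by the strict cardinality inequalities just noted, since $\pLe$ constrains its arguments only to be uniform with the appropriate cardinality ordering. The only delicate point — and the main obstacle such as it is — is the bookkeeping: verifying that every implicit bound $\le_i$ in the quantifier prefix is respected and, in the ``if'' direction, that the variables shared between several clauses (notably $U$, $Y$ and the $V_i, W_i$) can all be given one distribution compatible with every clause at once; this works because each clause requires of these shared variables only uniformity of a fixed cardinality, a property insensitive to how the remaining witnesses are chosen.
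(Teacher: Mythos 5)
Your proposal is correct and follows essentially the same approach as the paper: reduce the predicate to the two cardinality inequalities $2^{p_k} \le |Y|^{q_k}$ and $|Y|^{q_{k+1}} \le 2^{p_{k+1}}$ via the previously proved characterizations of $\pUnif_{=}$, $\pPow_l$, and $\pLe$, then observe that the defining bounds on $p_i/q_i$ pin $|Y|$ to exactly $k$. The paper states this in three sentences; you have simply unpacked the arithmetic and, in the ``if'' direction, spelled out the (routine) construction of a compatible joint witnessing distribution, which the paper leaves implicit.
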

\begin{proof}
    The uniformity of $Y$ is clear. The definition of $p_i, q_i$ implies that $k$ is the only integer $i$ which
    satisfies $2^{p_k} \leq i^{q_k}$ and $i^{q_{k+1}} \leq 2^{p_{k+1}}$. Given the previous lemmas, this is exactly what
    the predicate $\pUnif_k$ states.
\end{proof}
Note that without $\pLe(V_1, W_1)$, the predicate would be equivalent to $Y$ being uniform with $|Y| \leq k$ -- this is referred to as $\pUnif'_k$ in Section~\ref{section:final}.

\section{Conversion to disjoint CI}\label{section:disjoint}
Thus far, we have considered CI implications without the requirement that the sets of variables within each CI statements
be disjoint.  Li~\cite[Section V-B]{li2021undecidabilityitw} shows a reduction from the general CI implication problem to the disjoint
CI problem, showing that the problem remains undecidable also with this restriction. This is in fact a polynomial many-one reduction. It is shown here and extended to the case with cardinality bounds.
Formally, the following theorem is shown, which implies the second part of Theorem~\ref{thm:hard}.
\begin{theoremSeparate} \label{thm:disjoint}
    Given a vector $R = {\{(A_i, B_i, C_i) : i \in \{1, \ldots, m+1\}\}}$ of triples of subsets of $\{1, \ldots, n\}$
    and vector $K \in (\mathbb{N} \cup \{ \infty \})^n$ of cardinality bounds (with $\infty$ representing unbounded cardinality)
    we can produce in time polynomial relative to the input size a vector $R' = {\{(A'_i, B'_i, C'_i) : i \in \{1, \ldots, m'+1\}\}}$
    of triples of subsets of $\{1, \ldots, n'\}$ and vector $K' \in (\mathbb{N} \cup \{\infty\})^{n'}$ such that for any $i$,
    the sets $A'_i, B'_i, C'_i$ are pairwise disjoint and the implication
    \[
        \bigwedge_{i \in \{1, \ldots, m\}} (I(X_{A_{i}}; X_{B_{i}} | X_{C_{i}}) = 0) \Rightarrow I(X_{A_{m+1}}; X_{B_{m+1}} | X_{C_{m+1}}) = 0
    \]
    holds for all jointly distributed $(X_1, \ldots, X_n)$ with $|X_i| \leq K_i$ if and only if
    \[
        \bigwedge_{i \in \{1, \ldots, m'\}} (I(X_{A'_{i}}; X_{B'_{i}} | X_{C'_{i}}) = 0) \Rightarrow I(X_{A'_{m'+1}}; X_{B'_{m'+1}} | X_{C'_{m'+1}}) = 0
    \]
    holds for all jointly distributed $(X'_1, \ldots, X'_n)$ with $|X'_i| \leq K'_i$. Moreover, the set of values in $K'$
    is the same as that of $K$.
\end{theoremSeparate}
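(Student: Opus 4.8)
\emph{Proof plan.} The plan is to reproduce Li's reduction to disjoint CI and to verify the two additional points: that it runs in polynomial time and that it introduces no new cardinality value. The first step is to reduce the obstruction to a single type. Inside a statement $I(X_{A_i};X_{B_i}\mid X_{C_i})=0$, any index lying in $A_i\cap C_i$ (resp.\ $B_i\cap C_i$) may be deleted from $A_i$ (resp.\ $B_i$) without changing the statement, so we may assume $A_i\cap C_i=B_i\cap C_i=\varnothing$ for every $i$. The only remaining obstruction to pairwise disjointness is then $J_i:=A_i\cap B_i$. For each variable $X_j$ with $j\in\bigcup_i J_i$ we introduce a fresh \emph{mirror} variable $X_j'$, and for each $i$ we replace $(A_i,B_i,C_i)$ by $(A_i,\,(B_i\setminus J_i)\cup J_i',\,C_i)$, where $J_i'$ is the set of indices of the mirrors $\{X_j':j\in J_i\}$; by construction these three sets are pairwise disjoint. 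To make the new statements equivalent to the originals we add to the antecedent, for each relevant $j$, Li's gadget $\Gamma_j$: a conjunction of $O(1)$ disjoint CI statements over $O(1)$ fresh auxiliary variables that forces $X_j'$ and $X_j$ to be informationally equivalent, i.e.\ each an almost-sure deterministic function of the other. A naive encoding $H(X_j'\mid X_j)=H(X_j\mid X_j')=0$ is itself non-disjoint, so the gadget is genuinely needed, and it is precisely the content of Li's Section~V-B. If the consequent $i=m+1$ has $J_{m+1}\neq\varnothing$, the gadgets $\Gamma_j$ for $j\in J_{m+1}$ are likewise placed in the antecedent of the new implication, and its consequent is the single disjoint statement over $A_{m+1}$, $(B_{m+1}\setminus J_{m+1})\cup J_{m+1}'$, $C_{m+1}$.

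For correctness, suppose the original implication fails, witnessed by a distribution of $(X_1,\dots,X_n)$ satisfying all antecedent statements and violating the consequent. Extend it by setting each $X_j'$ equal to $X_j$ and each auxiliary of $\Gamma_j$ to a value witnessing satisfiability of $\Gamma_j$ (the predicates involved are satisfiable as in the earlier part of the excerpt); then every new antecedent statement holds -- each $\Gamma_j$ holds, and each substituted statement has the same mutual information value as the original since one variable has been replaced by an equal copy -- while the new consequent is still violated. Conversely, in any distribution satisfying the new antecedent, every $\Gamma_j$ forces $X_j'\equiv X_j$, so substituting $X_j$ for $X_j'$ throughout sends each new statement to the corresponding original one; hence a counterexample to the new implication restricts, by forgetting the mirrors and auxiliaries, to a counterexample to the original. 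This yields equivalence of the two implications.

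For the quantitative claims, there are at most $n$ mirrors and at most $n$ gadgets, each of constant size, so $n'=O(n)$, $m'=O(m+n)$, and $R'$ is computable in polynomial time. The mirror $X_j'$ receives the bound $K_j$, so the value $\infty$ is used only where it already occurred; the promised extension to the bounded setting is then the task of checking that the auxiliary variables inside each $\Gamma_j$ can be assigned bounds drawn from the set of values already occurring in $K$, so that the set of values of $K'$ equals that of $K$ and $K'$ is computable in polynomial time. I expect this to be the main obstacle: designing and bound-auditing $\Gamma_j$, i.e.\ forcing a fresh variable to copy an \emph{arbitrary} (not necessarily uniform) variable using disjoint CI statements alone while using only cardinality values already present in $K$, is the delicate step, and is exactly where Li's construction must be reused essentially verbatim and then inspected for its cardinality usage.
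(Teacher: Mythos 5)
Your overall architecture is genuinely different from the paper's, and the key step you leave unconstructed cannot in fact be constructed. You propose a \emph{local} gadget $\Gamma_j$: a bounded conjunction of disjoint CI statements over $X_j$, a fresh mirror $X_j'$, and $O(1)$ fresh auxiliaries, such that $\Gamma_j$ \emph{forces} $X_j'$ and $X_j$ to be deterministic functions of one another. No such gadget can exist. In the universally quantified implication, all variables (including the auxiliaries and $X_j'$) range over arbitrary joint distributions; if you take $X_j'$ and every auxiliary of $\Gamma_j$ to be constant, every disjoint CI statement in $\Gamma_j$ is trivially satisfied, since a constant random variable is conditionally independent of everything and any statement mentioning only $X_j$ must, by disjointness, place $X_j$ in at most one of its three slots and is therefore vacuous. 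But with $X_j'$ constant and $X_j$ nontrivial, $H(X_j\mid X_j')>0$, so $\Gamma_j$ does not force equivalence. This is precisely the obstacle that Li's reduction (and the paper's Section~\ref{section:disjoint}) is designed to avoid, so ``reuse Li's gadget verbatim'' is not available; Li's Section~V-B contains no such gadget.

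What Li actually does, and what the paper reproduces, is structurally different. Each original variable $X_i$ is never forced to equal a copy of itself. Instead it is represented implicitly as the G\'acs--K\"orner common part of a fresh \emph{pair} $(Y_i,Z_i)$, and the index range is tripled to $6n$ variables $Y^{3n},Z^{3n}$, so that the sets $A_j$, $B_j+n$, $C_j+2n$ are automatically disjoint. The predicate $\pEqres$ (built from $\pResThree$, which is disjoint by design) ensures the three copies share the same common part, and the extra constraints $I(Y_i;Y_{\setminus i},Z_{\setminus i}\mid Z_i)=I(Z_i;Y_{\setminus i},Z_{\setminus i}\mid Y_i)=0$ together with the double Markov property give $I(Y_i;Y_{\setminus i}\mid X_i)=0$, which is what Lemma~\ref{lemma:xy} needs to translate CI statements back and forth between the $Y$'s and the $X$'s. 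Crucially, nothing in the construction rules out the trivial (all-constant) distribution --- it does not need to, because the correctness argument does not assert that any new variable is forced to match an original one; it only asserts an equivalence of universally quantified implications, with the hard direction going through the common-part substitution. Your plan's correctness argument (``in any distribution satisfying the new antecedent, every $\Gamma_j$ forces $X_j'\equiv X_j$'') therefore rests on a premise that fails, and the reduction to the case $J_i = A_i\cap B_i$, while a correct and sensible preprocessing step, does not rescue it. To complete the proof you should switch to the pairing/common-part strategy; your cardinality bookkeeping (mirrors inherit $K_j$, so the value set of $K'$ equals that of $K$) does carry over, since in the paper's construction each of $Y_i,Y_{i+n},Y_{i+2n},Z_i,Z_{i+n},Z_{i+2n}$ receives bound $K_i$ and the auxiliaries in $\pEqres$/$\pRes$ are bounded by $\min(K_Y,K_Z)$, which again lies in the existing value set.
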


\begin{proof}[Proof of Theorem~\ref{thm:disjoint}]
    We call random variables $Y, Z$ \emph{perfectly resolvable}~\cite{prabhakaran2014assisted} if the predicate $\pRes$, defined
    below along with auxiliary $\pResc$, is satisfied.
\begin{align*}
    \pResc(Y, Z, X) & : I(Y; Z | X) = H(X | Y) = H(X | Z) = 0, \\[3pt]
    \pRes(Y, Z) & : \exists X \leq_i \min(K_Y, K_Z): \pResc(Y, Z, X).
\end{align*}
In fact, $X$ is uniquely defined from $Y, Z$ as it is the Gács-Körner common part of $Y$ and $Z$~\cite{gacs1973common}.
Additionally, $X$ can be assigned a cardinality bound of the minimum of $Y$ and $Z$'s cardinality bounds, since it is a
function of $Y$ as well as on $Z$.
The following predicate is defined, which uses only disjoint CI statements:
\[
    \pResThree(Y^3) : I(Y_1; Y_2 | Y_3) = I(Y_2; Y_3 | Y_1) = I(Y_3; Y_1 | Y_2) = 0.
\]
\begin{lemma}
    $\pRes(Y, Z)$ is satisfied iff the predicate $\exists U \leq_i \min(K_Y, K_Z) : \pResThree(Y, Z, U)$ is satisfied. Further, the latter
    implies that the Gács-Körner common parts of $X$ and $Y$, $X$ and $U$, and $Y$ and $U$ are the same.
\end{lemma}
\begin{proof}
    The ``only if'' is shown by setting $U = X$. The other direction is shown by using the double Markov property~\cite[exercise 16.25]{csiszar2011information}:
    \begin{align*}
        I(U; Y | Z) = I(U; Z | Y) = 0 \Rightarrow \exists & V \leq_i \min(K_Y, K_Z) : \\
        & H(V | Y) = H(V | Z) = I(Y, Z; U | V) = 0.
    \end{align*}
    Combined with $I(Y; Z | U) = 0$, this gives $I(Y; Z | V) = 0$ and thus we can set $X = V$. We also have $H(V | U) = 0$,
    showing that $V$ is the Gács-Körner common part of $X$ and $Y$, $X$ and $U$, and $Y$ and $U$.

    Note that $V$, being a function of $Y$ as well as $Z$, can be bounded to a cardinality of the minimum of $Y$ and $Z$'s
    cardinality bounds.
\end{proof}

In order to express two-way functional dependency using only disjoint CI statements, the following predicate is defined:
\begin{align*}
 \pEqres(Y_1, Z_1, Y_2, Z_2) & : \exists U^2 \leq_i \min(K_{Y_1}, K_{Y_2}) : \pResThree(Y_1, Z_1, U_1) \land \pResThree(Z_1, U_1, U_2) \\
 & \land \pResThree(U_1, U_2, Y_2) \land \pResThree(U_2, Y_2, Z_2).
\end{align*}
\begin{lemma}
    When $\pResc(X_i, Z_i, Y_i)$ holds for both $i \in \{1, 2\}$, then $\pEqres(Y_1, Z_1, Y_2, Z_2)$ is satisfied
    iff $H(X_1 | X_2) = H(X_2 | X_1) = 0$.
\end{lemma}
\begin{proof}
    The ``if'' part is shown by setting $U_1 = U_2 = X_1$ -- then clearly $\pResThree(Y_1, Z_1, U_1)$ and
    $\pResThree(U_2, Y_2, Z_2)$ hold, while the remaining two $\pResThree$ predicates hold because two of the three arguments are
    (functionally) the same. For the ``only if'' part, $\pResThree(Y_1, Z_1, U_1)$ implies
    that the Gács-Körner common part of $Y_1$ and $Z_1$, which is $X_1$, is also the common part of $Z_1$ and $U_1$.
    Note that the $\pResThree$ predicates ``overlap'' -- the last two arguments each one being the first two of the next.
    This allows us to repeat this reasoning, finally yielding that $X_1$ is the common part of $Y_2$ and $Z_2$.
\end{proof}

Finally, given a CI implication of the form
\[
    \bigwedge_{i \in \{1, \ldots, m\}} (I(X_{A_{i}}; X_{B_{i}} | X_{C_{i}}) = 0) \Rightarrow I(X_{A_{m+1}}; X_{B_{m+1}} | X_{C_{m+1}}) = 0
\]
together with cardinality bound $K_i \in (\mathbb{N} \cup \{ \infty \})$ associated with each $X_i$, we construct the
following CI implication over variables $Y^{3n}, Z^{3n}$, with each of the variables $Y_i$, $Y_{i+n}$, $Y_{i+2n}$, $Z_i$, $Z_{i+n}$, $Z_{i+2n}$
assigned the bound $K_i$:
\begin{align*}
 & \bigwedge_{i \in \{1, \ldots, 2n\}} \pEqres(Y_i, Z_i, Y_{i+n}, Z_{i+n}) \\
    \land & \bigwedge_{i \in \{1, \ldots, 3n\}} \big( I(Y_i; Y_{\{1, \ldots, 3n\} \setminus \{ i \}}, Z_{\{1, \ldots, 3n\} \setminus \{ i \}} | Z_i) = 0 \\
          & \qquad \qquad \land I(Z_i; Y_{\{1, \ldots, 3n\} \setminus \{ i \}}, Z_{\{1, \ldots, 3n\} \setminus \{ i \}} | Y_i) = 0 \big) \\
     \land & \bigwedge_{j \in \{1, \ldots, m\}} \big( I(Y_{A_j}; Y_{B_j+n} | Y_{C_j+2n}) = 0 \big) \\
    & \Rightarrow I(Y_{A_{m+1}}; Y_{B_{m+1}+n} | Y_{C_{m+1}+2n}) = 0.
\end{align*}
We now show that the original CI implication (with bounds) holds iff the constructed disjoint CI implication (with bounds) holds.
The ``if'' direction is shown by contraposition --- for any counterexample $X^n$ to the original implication,
setting $Y_i, Y_{i+n}, Y_{i+2n}, Z_i, Z_{i+n}, Z_{i+2n}$ to $X_i$ for each $i \in {\{1, \ldots, n\}}$ gives a counterexample
to the disjoint CI implication.

For the ``only if'' direction, assume that the original implication holds and consider some variables $Y^{3n}, Z^{3n}$
which satisfy the antecedent of the disjoint implication. From the predicate $\pEqres(Y_i, Z_i, Y_{i+n}, Z_{i+n})$ we have that
$Y_i, Z_i$ are perfectly resolvable for all $i \in \{1, \ldots, 3n\}$. Denoting the Gács-Körner common part of $Y_i, Z_i$
by $X_i$, we have
\begin{align}
    & \ I(Y_i; Y_{\{1, \ldots, 3n\} \setminus \{i\}} | X_i) \\
    = & \ I(Y_i; Y_{\{1, \ldots, 3n\} \setminus \{i\}}, Z_{\{1, \ldots, 3n\} \setminus \{i\}} | Z_i, X_i) + I(Y_i; Z_i; Y_{\{1, \ldots, 3n\} \setminus \{i\}}, Z_{\{1, \ldots, 3n\} \setminus \{i\}} | X_i) \\
    \leq & \ I(Y_i; Y_{\{1, \ldots, 3n\} \setminus \{i\}}, Z_{\{1, \ldots, 3n\} \setminus \{i\}} | Z_i, X_i) \label{l0:neg} \\
    = & \ I(Y_i; Y_{\{1, \ldots, 3n\} \setminus \{i\}}, Z_{\{1, \ldots, 3n\} \setminus \{i\}} | Z_i) = 0,
\end{align}
where inequality~\eqref{l0:neg} holds because, denoting $S = \{1, \ldots, 3n\} \setminus \{i\}$,
\[
    I(Y_i; Z_i; Y_S, Z_S | X_i) = \vphantom{\underbrace{a}_{.}} \smash{\underbrace{I(Y_i; Z_i | X_i)}_{= 0}} - \smash{\underbrace{I(Y_i; Z_i | X_i, Y_S, Z_S)}_{\geq 0}}.
\]
Therefore $I(Y_i; Y_{\{1, \ldots, 3n\} \setminus \{i\}} | X_i) = 0$.
It remains to show the following two facts, which essentially say that any $X_i$ and $Y_i$ are equivalent:
\begin{lemma} \label{lemma:xy}
    When $H(X_i | Y_i) = H(X_i | Z_i) = 0$ and $I(Y_i; Y_{\{1, \ldots, 3n\} \setminus \{i\}} | X_i) = 0$ for all $i \in {\{1, \ldots, 3n\}}$,
    then for any $j \in {\{1, \ldots, 3n\}}, A, B, C, D, E, F \subseteq {\{1, \ldots, 3n\}}$ such that $A$ and $B$,
    $C$ and $D$, $E$ and $F$ are disjoint we have
    \[
        I(X_A, Y_B, X_j; X_C, Y_D | X_E, Y_F) = 0 \Leftrightarrow I(X_A, Y_B, Y_j; X_C, Y_D | X_E, Y_F) = 0
    \]
    and
    \[
        I(X_A, Y_B; X_C, Y_D | X_E, Y_F, X_j) = 0 \Leftrightarrow I(X_A, Y_B; X_C, Y_D | X_E, Y_F, Y_j) = 0.
    \]
\end{lemma}

\setcounter{equation}{0}
\begin{proof}[Proof of the first statement]
    The ``if'' direction is immediate from $H(X_j | Y_j) = 0$. For the ``only if'' direction, we have
    \begin{align}
        & I(X_A, Y_B, Y_j; X_C, Y_D | X_E, Y_F) \\
        = \ & I(X_A, Y_B, X_j, Y_j; X_C, Y_D | X_E, Y_F) \label{l1:func} \\
        = \ & I(Y_j; X_C, Y_D | X_A, Y_B, X_E, Y_F, X_j) + \smash{\underbrace{I(X_A, Y_B, X_j; X_C, Y_D | X_E, Y_F)}_{\text{$= 0$ by implication assumption}}} \label{l1:chain}\\
        \leq \ & I(Y_j; X_A, Y_B, X_C, Y_D, X_E, Y_F | X_j) \label{l1:chainin} \\
        \leq \ & I(Y_j; X_A, Y_A, Y_B, X_C, Y_C, Y_D, X_E, Y_E, Y_F | X_j) \label{l1:saturate} \\
        = \ & I(Y_j; Y_A, Y_B, Y_C, Y_D, Y_E, Y_F | X_j) \label{l1:pullback} \\
        \leq \ & I(Y_j; Y_{\{1, \ldots, 3n\} \setminus \{j\}} | X_j) \; = \; 0. \label{l1:saturatemore}
    \end{align}

    Equality~\eqref{l1:func} holds since we add $X_j$ which depends functionally on $Y_j$.
    Equality~\eqref{l1:chain} is an application of the chain rule for mutual information~\cite{yeung2002first}
    $I(U, V; W) = I(U; W | V) + I(V; W)$ with $U = Y_j, V = (X_A, Y_B, X_j), W = (X_C, Y_D)$ and also conditioned everywhere
    on $(X_E, Y_F)$. Inequality~\eqref{l1:chainin} follows again from the chain rule. For inequality~\eqref{l1:saturate},
    we add corresponding $Y_i$ for each $X_i$, and clearly ${I(U, V; W) = I(U; W | V) + I(V; W) \geq I(V; W)}$.
    Equality~\eqref{l1:pullback} holds similarly to~\eqref{l1:func} due to functional dependencies and equality~\eqref{l1:saturatemore}
    similarly to~\eqref{l1:saturate} simply adds more variables. The final equality holds by assumption.
\end{proof}

\setcounter{equation}{0}
\begin{proof}[Proof of the second statement]
    The ``if'' direction is immediate from $H(X_j | Y_j) = 0$. For the ``only if'' direction, we have
    \begin{align}
        & I(X_A, Y_B; X_C, Y_D | X_E, Y_F, Y_j) \\
        = \ & I(X_A, Y_B; X_C, Y_D | X_E, Y_F, X_j, Y_j) \label{l2:func} \\
        \leq \ & I(X_A, Y_B, Y_j; X_C, Y_D | X_E, Y_F, X_j) \label{l2:chainin} \\
        = \ & I(Y_j; X_C, Y_D | X_A, Y_B, X_E, Y_F, X_j) + \smash{\underbrace{I(X_A, Y_B; X_C, Y_D | X_E, Y_F, X_j)}_{\text{$= 0$ by implication assumption}}} \label{l2:chain} \\
        \leq \ & I(Y_j; X_A, Y_B, X_C, Y_D, X_E, Y_F | X_j) \\
        \leq \ & I(Y_j; X_A, Y_A, Y_B, X_C, Y_C, Y_D, X_E, Y_E, Y_F | X_j) \\
        = \ & I(Y_j; Y_A, Y_B, Y_C, Y_D, Y_E, Y_F | X_j) \\
        \leq \ & I(Y_j; Y_{\{1, \ldots, 3n\} \setminus \{j\}} | X_j) = 0.
    \end{align}
    The reasoning is very similar to the proof above. Equality~\eqref{l2:func} holds due to $X_j$ being functionally
    dependent on $Y_j$, while inequality~\eqref{l2:chainin} and equality~\eqref{l2:chain} follow from the chain rule for mutual information.
    The following steps are exactly the same as in the proof above.
\end{proof}

Using Lemma~\ref{lemma:xy}, from each $I(Y_{A_j}; Y_{B_j+n} | Y_{C_j+2n}) = 0$ we get
$I(X_{A_j}; X_{B_j+n} | X_{C_j+2n}) = 0$.
From the two-way functional dependencies between $X_i, X_{i+n}, X_{i+2n}$ and the original implication we have
$I(X_{A_{m+1}}; X_{B_{m+1}} | X_{C_{m+1}}) = 0$. Using the functional dependencies and the above lemma again yields
$I(Y_{A_{m+1}}; Y_{B_{m+1}+n} | Y_{C_{m+1}+2n}) = 0$.
\end{proof}

\bibliographystyle{plainurl}
\bibliography{ci-conexptime}{}

\end{document}